\def\calP{\mathcal{P}}
\def\calU{\mathcal{U}}
\def\calM{\mathcal{M}}
\def\calB{\mathcal{B}}
\def\st{$s$-$t$}
\def\c{core}
\newcommand{\TD}{\ensuremath{\mathcal{D}}}
\newcommand{\TDM}{\ensuremath{\mathcal{D}_{\!\calM}}}
\newcommand{\TDF}{\ensuremath{\mathcal{D}_{\!f}}}
\newcommand{\spm}{\ensuremath{S\!P\!M}}
\newcommand{\dmax}{\ensuremath{R}}
\newcommand{\funnel}{\ensuremath{F}}
\newcommand{\Tri}{\ensuremath{\mathcal{T}}}
\newcommand{\Real}{\ensuremath{\mathbb{R}}}
\newcommand{\bd}{\ensuremath{\partial}}
\newcommand{\seg}{\overline}
\newtheorem{fact}{Fact}
\title{Computing the $L_1$ Geodesic Diameter and Center of a Polygonal Domain\footnote{A preliminary version of this paper appeared in the Proceedings of the 33rd International Symposium on Theoretical Aspects of Computer Science (STACS 2016).}}
\titlerunning{The $L_1$ Geodesic Diameter and Center} 
\author[1]{Sang Won Bae}
\author[2]{Matias Korman}
\author[3]{Joseph S.B. Mitchell}
\author[4]{Yoshio Okamoto}
\author[5]{Valentin Polishchuk}
\author[6]{Haitao Wang}
\affil[1]{
Kyonggi University, Suwon, South Korea.
\texttt{swbae@kgu.ac.kr} }
\affil[2]{Tohoku University, Sendai, Japan.
    \texttt{mati@dais.is.tohoku.ac.jp}}
\affil[3]{
Stony Brook University, New York, USA.
\texttt{jsbm@ams.stonybrook.edu}
 }
\affil[4]{
The University of Electro-Communications, Tokyo, Japan.
\texttt{okamotoy@uec.ac.jp}
 }
\affil[5]{
Link\"{o}ping University, Link\"{o}ping, Sweden.
\texttt{valentin.polishchuk@liu.se}
 }
\affil[6]{
Utah State University, Utah, USA.
\texttt{haitao.wang@usu.edu}
 }
\authorrunning{S.W. Bae, M. Korman, J.S.B. Mitchell, Y. Okamoto, V.
Polishchuk, and H. Wang} 
\keywords{geodesic diameter, geodesic center, shortest paths, polygonal domains, $L_1$ metric}
\begin{document}

\maketitle

\begin{abstract}
For a polygonal domain with $h$ holes and a total of $n$ vertices,
we present algorithms that compute the $L_1$ geodesic diameter in $O(n^2+h^4)$ time and
the $L_1$ geodesic center in $O((n^4+n^2 h^4)\alpha(n))$ time, respectively,
where $\alpha(\cdot)$ denotes the inverse Ackermann function.
No algorithms were known for these problems before.
For the Euclidean counterpart, the best algorithms compute
the geodesic diameter in $O(n^{7.73})$ or $O(n^7(h+\log n))$ time, and compute the geodesic center in $O(n^{11}\log n)$ time. Therefore, our algorithms are significantly faster than the algorithms for the Euclidean problems. Our algorithms are based on several interesting observations on $L_1$ shortest paths in polygonal domains.
 \end{abstract}

\section{Introduction}
\label{sec:intro}


A \emph{polygonal domain} $\calP$ is a closed and connected polygonal region
in the plane $\Real^2$, with $h\geq 0$ holes (i.e., simple polygons). Let $n$ be the total number of vertices of $\calP$. Regarding the boundary of $\calP$ as obstacles,
we consider shortest obstacle-avoiding paths lying in $\calP$
between any two points $p,q\in\calP$.
Their \emph{geodesic distance} $d(p,q)$ 
is the length of a shortest path between $p$ and $q$ in $\calP$.
The \emph{geodesic diameter} (or simply {\em diameter}) of $\calP$
is the maximum geodesic distance
over all pairs of points $p,q \in \calP$, i.e., $\max_{p\in\calP} \max_{q\in\calP} d(p,q)$.
Closely related to the diameter is the min-max quantity
$\min_{p\in\calP} \max_{q\in\calP} d(p,q)$,
in which a point $p^*$ that minimizes $\max_{q\in\calP} d(p^*,q)$ is called
a \emph{geodesic center} (or simply {\em center}) of $\calP$.
Each of the above quantities is called \emph{Euclidean} or \emph{$L_1$} depending on
which of the Euclidean or $L_1$ metric is adopted to measure the length of paths.

For simple polygons (i.e., $h=0$), the Euclidean geodesic diameter and center have been studied since the 1980s \cite{ref:AsanoCo85,ref:ChazelleA82,ref:SuriCo89}.
For the diameter, Chazelle~\cite{ref:ChazelleA82} gave the first $O(n^2)$-time algorithm, followed by an $O(n\log n)$-time algorithm by Suri~\cite{ref:SuriCo89}. Finally, Hershberger and Suri~\cite{ref:HershbergerMa97} gave a linear-time algorithm for computing the diameter. For the center, after an $O(n^4\log n)$-time algorithm by Asano and Toussaint~\cite{ref:AsanoCo85}, Pollack, Sharir, and Rote~\cite{ref:PollackCo89} gave an $O(n\log n)$
time algorithm for computing the geodesic center. Recently, Ahn et
al.~\cite{ref:AhnA15} solved the problem in $O(n)$ time.

For the general case (i.e., $h>0$), the problems are more difficult.
The Euclidean diameter problem was solved in
$O(n^{7.73})$ or $O(n^7(h+\log n))$ time~\cite{ref:BaeTh13}.
The Euclidean center problem was first solved in $O(n^{12+\epsilon})$ time for any $\epsilon>0$ \cite{ref:BaeCo14CCCG} and then an improved $O(n^{11}\log n)$ time algorithm was given in \cite{ref:WangOn16}.

For the $L_1$ versions, the geodesic diameter and center of simple polygons can be computed in linear time  \cite{ref:BaeCo15,ref:SchuiererCo94}, but we are unaware of any previous algorithms for polygonal domains.
In this paper, we present the first algorithms that compute
the geodesic diameter and center of
a polygonal domain $\calP$ (as defined above)
in $O(n^2 + h^4)$ and $O((n^4+n^2h^4)\alpha(n))$ time,
respectively, where $\alpha(\cdot)$ is the inverse Ackermann function.
Comparing with the algorithms for the same problems under the Euclidean metric, our algorithms are much more efficient, especially when $h$ is significantly smaller than $n$.


As discussed in~\cite{ref:BaeTh13}, a main difficulty of polygonal domains seemingly arises
from the fact that there can be several topologically different shortest paths
between two points, which is not the case for simple polygons.
Bae, Korman, and Okamoto~\cite{ref:BaeTh13} observed that
the Euclidean diameter can be realized by two interior points of
a polygonal domain, in which case the two points have at least five distinct shortest paths.
This difficulty makes their algorithm suffer a fairly large running time.
Similar issues also arise in the $L_1$ metric, where a diameter may also be realized by two interior points (this can be seen by extending the examples in~\cite{ref:BaeTh13}).

We take a different approach from \cite{ref:BaeTh13}.
We first construct an $O(n^2)$-sized cell decomposition of $\calP$
such that the $L_1$ geodesic distance function restricted to any
pair of two cells can be explicitly described in $O(1)$ complexity.
Consequently, the $L_1$ diameter and center can be obtained by exploring these cell-restricted pieces of the geodesic distance. This leads to simple algorithms that compute the diameter in $O(n^4)$ time and the center in $O(n^6\alpha(n))$ time.
With the help of an ``extended corridor structure'' of $\calP$ \cite{ref:ChenA11ESA,ref:ChenCo12ICALP,ref:ChenL113STACS},
we reduce the $O(n^2)$ complexity of our decomposition to another ``coarser'' decomposition of $O(n+h^2)$ complexity; with another crucial observation (Lemma \ref{lem:bay}), one may compute the diameter in $O(n^3+h^4)$ time by using our techniques for the above $O(n^4)$ time algorithm. One of our main contributions is an additional series of observations (Lemmas \ref{lem:keym} to 
\ref{lem:dist_vv}) that allow us to further reduce the running time to $O(n^2+h^4)$.
These observations along with the decomposition may have other applications as well.
The idea for computing the center is similar.

We are motivated to study the $L_1$ versions of the diameter and center problems
in polygonal (even non-rectilinear) domains  for several reasons.  First, the $L_1$
metric is natural and well studied in optimization and routing problems, as it models
actual costs in rectilinear road networks and  certain robotics/VLSI applications.
Indeed, the $L_1$ diameter and center problems in the simpler setting of
simply connected domains
have been studied \cite{ref:BaeCo15,ref:SchuiererCo94}.
Second, the $L_1$ metric approximates the Euclidean metric.  Further, improved
understanding of algorithmic results in one metric can assist in understanding in other metrics;
e.g., the continuous Dijkstra methods for $L_1$ shortest paths of \cite{ref:MitchellAn89,ref:MitchellL192} directly led to improved results for Euclidean shortest paths.

\subsection{Preliminaries}
\label{sec:pre}

For any subset $A\subset \Real^2$, denote by $\bd A$ the
boundary of $A$.  
Denote by $\seg{pq}$ the line segment with endpoints $p$ and $q$.
The {\em $L_1$ length} of $\seg{pq}$ is defined to be $|x_p-x_q|+|y_p-y_q|$, where $x_p$ and $y_p$ are the $x$- and $y$-coordinates of $p$, respectively, and
$x_q$ and $y_q$ are the $x$- and $y$-coordinates of $q$, respectively.
For any polygonal path $\pi \in \Real^2$, let $|\pi|$ be the \emph{$L_1$ length} of $\pi$, which is the sum of the $L_1$ lengths of all segments of $\pi$.
In the following, a path always refers to a polygonal path.
A path is \emph{$xy$-monotone} (or {\em monotone} for short) if every vertical or horizontal line intersects it in at most one connected component.
Following is a basic observation on the $L_1$ length of paths in $\Real^2$,
which will be used in our discussion.

\begin{fact} \label{fact:l1length}
For any monotone path $\pi$ between two points $p,q\in\Real^2$,
$|\pi| = |\seg{pq}|$ holds. 
\end{fact}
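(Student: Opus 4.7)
The plan is to expand $|\pi|$ as a sum of $L_1$ contributions of its linear pieces and then collapse the resulting telescoping sums using the monotonicity hypothesis. Concretely, I would first list the vertices of the polygonal path as $p = v_0, v_1, \ldots, v_k = q$. From the definition of the $L_1$ length applied to each segment $\seg{v_{i-1}v_i}$, one obtains
\[
|\pi| \;=\; \sum_{i=1}^{k} |x_{v_i}-x_{v_{i-1}}| \;+\; \sum_{i=1}^{k} |y_{v_i}-y_{v_{i-1}}|.
\]

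Next, I would argue that under the $xy$-monotonicity hypothesis the sequence of $x$-coordinates $(x_{v_i})_{i=0}^{k}$ is monotone (either non-decreasing or non-increasing), and similarly for the sequence of $y$-coordinates. For the $x$-case, if monotonicity failed then one could pick a value $x^{*}$ strictly between two of the $x_{v_i}$'s that are separated by a reversal; the vertical line $\{x=x^{*}\}$ would then meet $\pi$ in at least two disjoint arcs, contradicting the hypothesis that every vertical line meets $\pi$ in at most one connected component. The $y$-case is identical with horizontal lines.

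Once both coordinate sequences are known to be monotone, the absolute values in each of the two sums all carry a common sign, so telescoping gives
$\sum_{i=1}^{k}|x_{v_i}-x_{v_{i-1}}| = |x_p-x_q|$ and
$\sum_{i=1}^{k}|y_{v_i}-y_{v_{i-1}}| = |y_p-y_q|$.
Adding these yields $|\pi| = |x_p-x_q|+|y_p-y_q| = |\seg{pq}|$, which is the claim. The only non-mechanical step is the passage from the stated "at most one connected component" condition to coordinatewise monotonicity of the vertex sequences; the rest is just bookkeeping with the $L_1$ length formula.
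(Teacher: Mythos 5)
Your proof is correct. The paper states Fact~\ref{fact:l1length} as a ``basic observation'' without providing a proof, so there is no paper argument to compare against; your telescoping argument is the natural one. The only step that deserves a word of care is the passage from the ``at most one connected component'' hypothesis to coordinatewise monotonicity of the vertex sequence: your contradiction (a strict reversal in the $x$-coordinates forces some vertical line $\{x=x^*\}$, with $x^*$ chosen generically between the extremes of the reversal, to meet the path at parameter values separated by an excursion to one side, hence in at least two components) is sound, provided one reads ``intersects in at most one connected component'' as a statement about the parametrized path rather than its image, which is clearly the intended reading since otherwise the fact would fail for paths that retrace themselves. With that understood, the telescoping of $\sum_i |x_{v_i}-x_{v_{i-1}}|$ and $\sum_i |y_{v_i}-y_{v_{i-1}}|$ to $|x_p-x_q|$ and $|y_p-y_q|$ is immediate, and the proof is complete.
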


We view the boundary $\bd \calP$ of our polygonal domain $\calP$ as
a series of \emph{obstacles} so that no path in $\calP$ is allowed
to cross $\bd \calP$.
Throughout the paper, unless otherwise stated,
a shortest path always refers to an $L_1$ shortest path and
the distance/length of a path (e.g., $d(p,q)$) always refers to its $L_1$ distance/length.
The diameter/center always refers to the $L_1$ geodesic diameter/center.
For simplicity of discussion, we make a general position assumption
that no two vertices of $\calP$ have the same $x$- or $y$-coordinate.

The following will also be exploited as a basic fact in further discussion.

\begin{fact}[\cite{ref:GuibasLi87,ref:HershbergerCo94}] \label{fact:euc_simple}
 In any simple polygon $P$, there is a unique Euclidean shortest path $\pi$ between
 any two points in $P$. The path $\pi$ is also an $L_1$ shortest path in
 $P$.
\end{fact}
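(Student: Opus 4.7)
The plan is to establish the two claims separately: first the uniqueness of the Euclidean shortest path, and then the $L_1$-optimality of that same path.

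For uniqueness, I would exploit the strict convexity of the Euclidean norm together with the fact that $P$, being simply connected, forces any two paths between $p$ and $q$ to lie in the same homotopy class rel endpoints. Suppose, for contradiction, there were two distinct Euclidean shortest paths $\pi_1 \ne \pi_2$. After removing common initial/final portions, pick a maximal sub-interval on which $\pi_1$ and $\pi_2$ differ; since $P$ is simply connected the closed region bounded by these two sub-arcs lies entirely in $P$. The pointwise midpoint path also lies in $P$, and by the strict convexity of $\|\cdot\|_2$ (applied wherever the tangent directions of $\pi_1$ and $\pi_2$ are not parallel) its Euclidean length is strictly less than $\max(|\pi_1|_2,|\pi_2|_2)$, contradicting that both paths are optimal.

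For the $L_1$ part, I would first recall the structural fact that the Euclidean shortest path $\pi$ in a simple polygon is a polygonal path whose interior bend vertices are reflex vertices of $P$, so $\pi = p\,v_1\,v_2\cdots v_k\,q$. Each maximal line segment of $\pi$ is, trivially, $xy$-monotone, so by Fact \ref{fact:l1length} its $L_1$ length equals the $L_1$ distance between its endpoints. Hence $|\pi| = \sum_i |\seg{v_iv_{i+1}}|_1$ (writing $v_0=p$, $v_{k+1}=q$). I would then show that no path $\sigma$ in $P$ from $p$ to $q$ can have strictly smaller $L_1$ length. The strategy is a taut-string shortening in the $L_1$ metric: any locally suboptimal bend (either in the interior of $P$ or at a convex vertex) can be locally straightened without increasing $L_1$ length, so there exists an $L_1$-shortest path $\sigma^*$ that bends only at reflex vertices of $P$. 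Because $P$ is simply connected, $\sigma^*$ and $\pi$ are homotopic and must in fact visit the same sequence of reflex vertices on the same sides (any discrepancy would yield a region bounded by $\sigma^*$ and $\pi$ containing a reflex vertex on its boundary that could be used to shorten one path or the other, contradicting minimality of the respective metric). Consequently $\sigma^*$ has the same $L_1$ length as $\pi$, proving $|\pi|_1 = d(p,q)$.

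The main obstacle, and the step that demands the most care, is the claim that $\sigma^*$ and $\pi$ share the same reflex-vertex bend sequence. In the Euclidean case this is automatic from local tautness, but in $L_1$ shortest paths are notoriously non-unique, so the argument needs the simple connectivity of $P$ in an essential way: it is precisely this that excludes topologically distinct ways of threading around reflex vertices and forces any two ``reflex-only'' bending paths in the same homotopy class to coincide up to $L_1$-length-preserving rerouting. Once this alignment is secured, the equality $|\sigma^*|_1 = |\pi|_1$ drops out of Fact \ref{fact:l1length} applied segment by segment, completing the proof.
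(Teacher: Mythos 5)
The paper does not prove this statement; it cites it as a known fact from \cite{ref:GuibasLi87,ref:HershbergerCo94}, so there is no paper proof to compare against. Evaluating your argument on its own: the uniqueness part via strict convexity is essentially the standard CAT(0) midpoint argument and is fine in spirit, though you should justify more carefully that the pointwise midpoint path actually lies in $P$---the segment joining corresponding points of $\pi_1$ and $\pi_2$ is not automatically contained in the region they bound, and that region need not be convex. This is patchable (e.g., via the funnel/taut-string characterization of geodesics in a simple polygon), so it is a minor point.

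The substantive gap is exactly where you yourself flag it: the claim that an $L_1$-shortest path $\sigma^*$ bending only at reflex vertices must share the reflex-vertex bend sequence of the Euclidean geodesic $\pi$. Nothing forces this. $L_1$-geodesics in a simple polygon are highly non-unique, and ``taut-string shortening in the $L_1$ metric'' is not a determinate process: a non-reflex bend of $\sigma^*$ can typically be rerouted in many $L_1$-length-preserving ways, none of which need gravitate toward $\pi$'s reflex vertices, so the limit of such a process is not controlled. Your appeal to ``a reflex vertex on the boundary of the region between $\sigma^*$ and $\pi$ that could be used to shorten one path or the other'' does not close this, since the mere presence of a reflex vertex on that region's boundary contradicts neither $L_1$-optimality of $\sigma^*$ nor Euclidean-optimality of $\pi$. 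The cleaner route avoids reasoning about the bend sequence of an $L_1$-optimal path entirely: take an arbitrary path $\sigma$ from $p$ to $q$ in $P$ and repeatedly replace any sub-arc of $\sigma$ between two of its points $a,b$ by $\seg{ab}$ whenever $\seg{ab}\subset P$. Each replacement cannot increase $|\sigma|_1$ by the $L_1$ triangle inequality, and in a simple polygon this Euclidean taut-string process converges to the unique locally taut path in the homotopy class, namely $\pi$. Hence $|\sigma|_1 \ge |\pi|_1$ for every path $\sigma$, and since $\pi$ is itself a path in $P$, $\pi$ is $L_1$-shortest.
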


The rest of the paper is organized as follows. In Section \ref{sec:td}, we introduce our cell decomposition of $\calP$ and exploit it to have preliminary algorithms for computing the diameter and center of $\calP$. The algorithms will be
improved later in Section \ref{sec:improved}, based on the extended
corridor structure and new observations discussed in Section \ref{sec:corridor}. One may consider the preliminary algorithms in Section \ref{sec:td} relatively straightforward, but we present them for the following reasons. First, they provide an overview on the problem structure. Second, they will help the reader to understand the more sophisticated algorithms given in Section \ref{sec:improved}. Third, some parts of them will also be needed in the algorithms in Section \ref{sec:improved}.

\section{The Cell Decomposition and Preliminary Algorithms}
\label{sec:td}

In this section, we introduce our cell decomposition $\TD$ of $\calP$
and exploit it to have preliminary algorithms that compute the diameter and center of $\calP$.

We first build the \emph{horizontal trapezoidal map} by extending a horizontal line from each vertex of $\calP$ until each end of the line hits $\bd \calP$.
Next, we compute the \emph{vertical trapezoidal map}
by extending a vertical line from each vertex of $\calP$ {\em and} each of the ends of the above extended lines. We then overlay the two trapezoidal maps,
resulting in a {\em cell decomposition} $\TD$ of $\calP$ (e.g., see Fig.~\ref{fig:grid}). The above extended horizontal or vertical line segments are called the \emph{diagonals} of $\TD$.
Note that $\TD$ has $O(n)$ diagonals and $O(n^2)$ cells.
Each cell $\sigma$ of $\TD$ is bounded by two to four diagonals and at most one edge
of $\calP$, and thus appears as a trapezoid or a triangle;
let $V_\sigma$ be the set of vertices of $\TD$ that are incident to
$\sigma$ (note that $|V_\sigma| \leq 4$).
By an abuse of notation,
we let $\TD$ also denote the set of all the cells of the decomposition.



Each cell of $\TD$ is an intersection between a trapezoid of the horizontal trapezoidal map and another one of the vertical trapezoidal map. Two cells of $\TD$ are \emph{aligned} if they are contained in the same trapezoid of the horizontal
or vertical trapezoidal map, and \emph{unaligned} otherwise.
Lemma \ref{lem:key} is crucial
 for computing both the diameter and the center of $\calP$.

\begin{figure}[tb]
\begin{minipage}[t]{0.49\linewidth}
\begin{center}
\includegraphics[totalheight=1.9in]{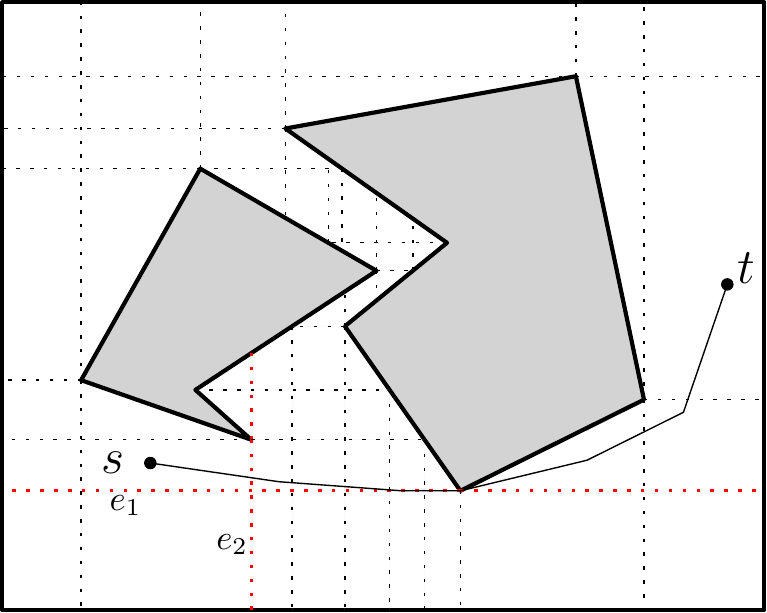}
\caption{\footnotesize The cell decomposition $\TD$ of $\calP$,
and a shortest path from $s$ to $t$.}
\label{fig:grid}
\end{center}
\end{minipage}
\hspace{0.02in}
\begin{minipage}[t]{0.49\linewidth}
\begin{center}
\includegraphics[totalheight=1.9in]{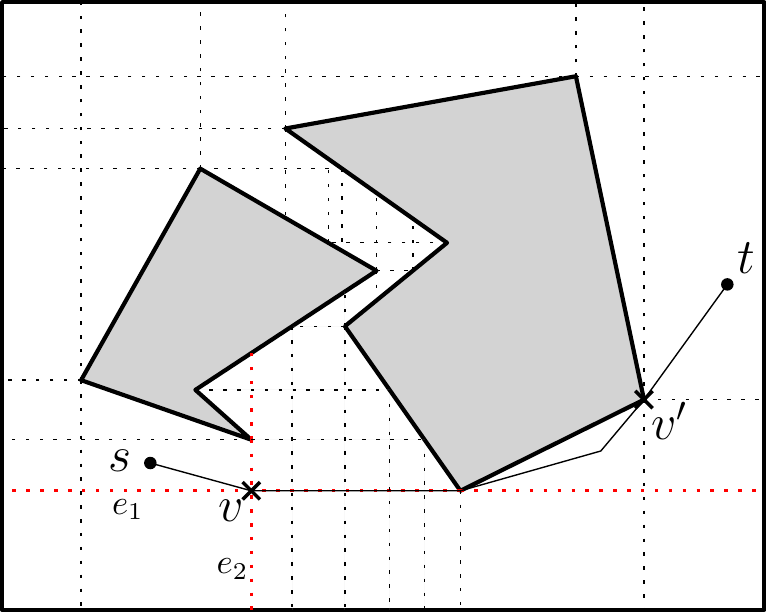}
\caption{\footnotesize Illustrating Lemma~\ref{lem:key}:
a shortest path through vertices $v\in V_\sigma$ and $v'\in V_{\sigma'}$.}
\label{fig:keygrid}
\end{center}
\end{minipage}
\end{figure}

\begin{lemma}\label{lem:key}
Let $\sigma, \sigma'$ be any two cells of $\TD$.
For any point $s\in\sigma$ and any point $t\in\sigma'$,
if $\sigma$ and $\sigma'$ are aligned, then $d(s,t) = |\seg{st}|$;
otherwise, there exists an $L_1$ shortest path between $s$ and $t$
that passes through two vertices $v\in V_\sigma$ and $v'\in V_{\sigma'}$
(e.g., see Fig.~\ref{fig:keygrid}).
\end{lemma}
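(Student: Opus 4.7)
For the aligned case, the key is that each trapezoid of the horizontal or vertical trapezoidal map is convex and obstacle-free, so the segment $\seg{st}$ is contained in the common trapezoid $T\supseteq\sigma\cup\sigma'$, hence in $\calP$. Combining the trivial lower bound $d(s,t)\ge|x_s-x_t|+|y_s-y_t|$ (obtained by projecting any path onto the coordinate axes) with Fact~\ref{fact:l1length} applied to $\seg{st}$ gives $d(s,t)=|\seg{st}|$.

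For the unaligned case, I plan to start from any shortest path $\pi$ from $s$ to $t$ and modify it independently near $s$ and near $t$. Let $T_v$ denote the vertical trapezoid of the vertical trapezoidal map that contains $\sigma$. Since $\sigma$ and $\sigma'$ lie in different vertical trapezoids, $\pi$ must cross $\bd T_v$ at some first point $q$. Because $T_v$ is convex and obstacle-free, I may replace $\pi|_{s\to q}$ by the straight segment $\seg{sq}$ without changing the total length, obtaining a shortest path whose initial piece from $s$ is $\seg{sq}$. Since $\pi$ cannot cross $\bd\calP$, the point $q$ lies on the left or right vertical side of $T_v$, which is a vertical diagonal of $\TD$. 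Let $p_1$ be the point where $\seg{sq}$ exits $\sigma$.

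The main step is to deform $\seg{sq}$ into a monotone L-shape $s\to v\to q$ through some corner $v\in V_\sigma$ on the same side of $\sigma$ as $p_1$. Assuming $p_1$ lies on a horizontal edge of $\sigma$, the two corners of that edge have $x$-coordinates equal to the two vertical sides of $T_v$; exactly one of them, call it $v$, satisfies $x_v=x_q$. Its $y$-coordinate equals $y_{p_1}$, which lies between $y_s$ and $y_q$ since $p_1\in\seg{sq}$, so $v$ is in the $L_1$ bounding box of $s$ and $q$. By Fact~\ref{fact:l1length} applied to each of the two monotone pieces, $|\seg{sv}|+|\seg{vq}|=|\seg{sq}|$, and the L-shape lies in $\calP$ because $\seg{sv}\subset\sigma$ (convex) and $\seg{vq}$ lies along a vertical side of $T_v$. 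If instead $p_1$ lies on a vertical side of $\sigma$, the symmetric argument using the horizontal trapezoid $T_h$ produces the desired $v$. Running the analogous argument in reverse from $t$ gives $v'\in V_{\sigma'}$, and since the two deformations modify disjoint portions of $\pi$ near $s$ and near $t$, they compose into a single shortest path passing through both $v$ and $v'$.

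The main obstacle I expect is the careful treatment of degenerate configurations: $q$ could coincide with a vertex of $\calP$; the cell $\sigma$ could be a triangle because an edge of $\calP$ clips one of its corners; or $p_1$ could lie on $\bd\calP$ instead of on a diagonal of $\TD$. Each of these requires verifying that the L-shape still lies in $\calP$; this ultimately rests on the design of $\TD$, in particular the fact that vertical diagonals are erected from both vertices of $\calP$ and from endpoints of horizontal diagonals, which ensures that the vertical segment $\seg{vq}$ needed inside $T_v$ is always contained in $\calP$.
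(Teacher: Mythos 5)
Your aligned case is fine and matches the paper. In the unaligned case, however, there is a genuine gap in the case analysis. Your construction near $s$ only ever modifies the path up to its first exit $q$ from the vertical trapezoid $T_v\supseteq\sigma$, and it produces a vertex of $\sigma$ only when $\seg{sq}$ leaves $\sigma$ through a horizontal edge at a point $p_1$ with $x_{p_1}<x_q$. But $\seg{sq}$ may meet $\bd\sigma$ only at $q$ itself: the vertical edges of $\sigma$ lie on the vertical sides of $T_v$, so if $\seg{sq}$ exits $\sigma$ through a vertical edge then (up to degeneracies) $p_1=q$ and $\seg{sq}\subset\cl(\sigma)$, yielding no vertex of $\sigma$ on the path. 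Your fallback, ``the symmetric argument using $T_h$,'' does not close this case. First, it can land in the same degenerate situation ($\seg{sq'}\subset\cl(\sigma)$ with $q'$ on a horizontal edge of $\sigma$). Second, and more importantly, the symmetric analogue of your key geometric claim is false: $\TD$ is built asymmetrically --- vertical diagonals are erected from the endpoints of horizontal diagonals, but no horizontal diagonals are erected from the endpoints of vertical diagonals. Hence, while a horizontal edge of $\sigma$ does reach the $x$-coordinates of the vertical sides of $T_v$ (which is why your Case A works), a vertical edge of $\sigma$ may terminate at the endpoint of its vertical diagonal on $\bd\calP$ (a point on a slanted wall of $T_h$), strictly below the horizontal diagonal of $T_h$ that contains $q'$. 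Then the corner with $y_v=y_{q'}$ that your L-shape needs simply does not exist, and the detour segment toward $q'$ would in general not stay in $\calP$. These situations are not among the ``degenerate configurations'' you list at the end; they arise generically for cells whose top or bottom boundary lies on a slanted edge of $\calP$.

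The paper's proof avoids this by not confining the modification to the straight initial piece of the path. It takes the first intersection $p_1$ of the shortest path $\pi$ with a horizontal diagonal $e_1$ bounding $\sigma$ and the first intersection $p_2$ with a vertical diagonal $e_2$ bounding $\sigma$ (these exist, for otherwise $\sigma$ and $\sigma'$ would be aligned; here $e_1,e_2$ are the full diagonals, not merely the edges of the cell), and replaces the \emph{entire} portion of $\pi$ between $p_1$ and $p_2$ --- which may already have left $\sigma$, $T_v$ and $T_h$ --- by the two-segment monotone path $p_1\to v\to p_2$ through the corner $v=e_1\cap e_2\in V_\sigma$. This replacement lies in $\calP$ because it runs along the diagonals, and by Fact~\ref{fact:l1length} its length equals $|\seg{p_1p_2}|$, an $L_1$ lower bound for the replaced portion, so the new path is still shortest and passes through $v$; the same is then done at the $t$ end. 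To repair your argument you would need essentially this step (allowing the re-routing to act on a portion of $\pi$ beyond $q$ and to travel along both diagonals), at which point you have reproduced the paper's proof.
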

\begin{proof}
If two cells $\sigma, \sigma'\in \TD$ are aligned,
then they are contained in a trapezoid $\tau$ of the vertical or horizontal trapezoidal map.
Since $\tau$ is convex, any two points in $\tau$ can be joined by a straight segment,
so we have $d(s,t) = |\seg{st}|$ for any $s\in\sigma$ and $t\in \sigma$.

Now, suppose that $\sigma$ and $\sigma'$ are unaligned (e.g., see Fig.~\ref{fig:grid}).
Let $\pi$ be any shortest path between $s$ and $t$.
We first observe that $\pi$ intersects one horizontal diagonal $e_1$ of $\TD$ and
one vertical diagonal $e_2$, both of which bound $\sigma$ (e.g., see Fig.~\ref{fig:grid}, where $e_1$ and $e_2$ are highlighted with red color):
otherwise, $\sigma$ and $\sigma'$ must be aligned.
Since $e_1$ and $e_2$ are bounding $\sigma$,
the intersection $v = e_1 \cap e_2$ is a vertex of $\sigma$ (e.g., see Fig.~\ref{fig:keygrid}).
Let $p_1$ be the first intersection of $\pi$ with $e_1$ while we go along $\pi$ from $s$ to $t$.
Similarly, define $p_2$ to be the first intersection of $\pi$ with $e_2$.

Since $e_1$ is horizontal and $e_2$ is vertical, the union of the two line segments
$\seg{p_1v}\cup\seg{vp_2}$ is a shortest path between $p_1$ and $p_2$.
We replace the portion of $\pi$ between $p_1$ and $p_2$
by $\seg{p_1v}\cup\seg{vp_2}$ to have another $s$-$t$ path $\pi'$.
Since $\seg{p_1v}\cup\seg{vp_2}$ is monotone,
its length is equal to $|\seg{p_1p_2}|$ by Fact~\ref{fact:l1length}.
This implies that $\pi'$ is a shortest path between $s$ and $t$
and passes through the vertex $v$.

Symmetrically, the above argument can be applied to the other side,
the destination $t$ and the cell $\sigma'$,
which implies that $\pi'$ can be modified to a \st\ shortest path $\pi''$
that passes through $v$ and simultaneously a vertex $v'$ of $\sigma'$.
The lemma thus follows.
\end{proof}

\subsection{Computing the Geodesic Diameter}

In this section, we present an $O(n^4)$ time algorithm for computing
the diameter of $\calP$.


The general idea is to consider every pair of cells of $\TD$ separately.
For each pair of such cells $\sigma, \sigma' \in \TD$,
we compute the maximum geodesic distance between $\sigma$ and $\sigma'$,
that is, $\max_{s\in\sigma, t\in\sigma'} d(s,t)$, called
the \emph{$(\sigma, \sigma')$-constrained diameter}.
Since $\TD$ is a decomposition of $\calP$,
the diameter of $\calP$ is equal to the maximum value of the constrained diameters
over all pairs of cells of $\TD$.
We handle two cases depending on whether $\sigma$ and $\sigma'$ are aligned.

If $\sigma$ and $\sigma'$ are aligned, by Lemma~\ref{lem:key}, for any $s\in \sigma$ and $t\in\sigma'$,
we have $d(s,t) = |\seg{st}|$, i.e, the $L_1$ distance of $\seg{st}$.
Since the $L_1$ distance function is convex,
the $(\sigma, \sigma')$-constrained diameter is always realized by
some pair $(v, v')$ of two vertices with $v\in V_\sigma$ and $v'\in V_\sigma'$.
We are thus done by checking at most $16$ pairs of vertices, in $O(1)$ time.

In the following, we assume that $\sigma$ and $\sigma'$ are unaligned.
Consider any point $s\in\sigma$ and any point $t\in\sigma'$.
For any vertex $v\in V_\sigma$ and any vertex $v'\in V_{\sigma'}$,
consider the path from $s$ to $t$ obtained by
concatenating $\seg{sv}$, a shortest path from $v$ to $v'$, and $\seg{v't}$, and
let $d_{vv'}(s,t)$ be its length.
Lemma~\ref{lem:key} ensures that $d(s,t) = \min_{v\in V_\sigma, v'\in V_{\sigma'}} d_{vv'}(s,t)$.
Since $d_{vv'}(s,t) = |\seg{sv}| + |\seg{v't}| + d(v, v')$ and
$d(v,v')$ is constant over all $(s,t)\in \sigma \times \sigma'$,
the function $d_{vv'}$ is linear on $\sigma \times \sigma'$.
Thus, it is easy to compute the $(\sigma, \sigma')$-constrained diameter
once we know the value of $d(v, v')$ for every pair $(v,v')$ of vertices.

\begin{lemma}\label{lem:constrained_diam}
For any two cells $\sigma, \sigma' \in \TD$,
the $(\sigma, \sigma')$-constrained diameter can be
computed in constant time,
provided that $d(v, v')$ for every pair $(v,v')$ with
$v\in V_\sigma$ and $v'\in V_{\sigma'}$ has been computed.
\end{lemma}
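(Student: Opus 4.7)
The plan is to handle the aligned and unaligned cases separately. The aligned case was essentially settled in the paragraph preceding the statement: by Lemma~\ref{lem:key} the distance $d(s,t)=|\seg{st}|$ is a convex function of $(s,t)$, so its maximum over the product polytope $\sigma \times \sigma'$ is attained at some corner pair in $V_\sigma \times V_{\sigma'}$, and testing the at most $16$ such pairs finishes in $O(1)$ time. What remains is a constant-time algorithm for the unaligned case.

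For unaligned $\sigma,\sigma'$ the preceding discussion already gives
\[
 D_{\sigma,\sigma'} \;=\; \max_{(s,t)\in\sigma\times\sigma'}\; \min_{(v,v')\in V_\sigma \times V_{\sigma'}} d_{vv'}(s,t),
\]
where each $d_{vv'}(s,t)=|\seg{sv}|+|\seg{v't}|+d(v,v')$ is linear on $\sigma \times \sigma'$. I would first make this linearity concrete: since $v$ is a corner of the axis-aligned cell $\sigma$, the signs of $x_s-x_v$ and $y_s-y_v$ are fixed on all of $\sigma$, so $|\seg{sv}|$ simplifies to a single linear form $\pm(x_s-x_v)\pm(y_s-y_v)$, and symmetrically for $|\seg{v't}|$ on $\sigma'$. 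With the precomputed values $d(v,v')$ plugged in, each of the at most $16$ functions $d_{vv'}$ is an explicit linear function of the four coordinates of $s$ and $t$.

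I would then recast $D_{\sigma,\sigma'}$ as a linear program in fixed dimension: introduce an auxiliary variable $z$ and maximize $z$ subject to the at most $16$ inequalities $z \le d_{vv'}(s,t)$ together with the constantly many halfplane constraints describing $s\in\sigma$ and $t\in\sigma'$. The result is an LP in five variables $(x_s,y_s,x_t,y_t,z)$ with $O(1)$ constraints, so any fixed-dimensional LP solver returns the optimum in $O(1)$ time, which is precisely the claimed bound. The only delicate point I anticipate is the linearity check, since a cell of $\TD$ may be a triangle bounded by a slanted piece of $\bd\calP$; one must verify that each vertex of $V_\sigma$ still sits in a single axis-aligned quadrant relative to $\sigma$, which follows from the construction of $\TD$ as the overlay of the horizontal and vertical trapezoidal maps.
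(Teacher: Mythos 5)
Your proof is correct and takes essentially the same route as the paper: both treat the aligned case via convexity of the $L_1$ norm, and for the unaligned case both reduce to maximizing the concave piecewise-linear function $\min_{v,v'}d_{vv'}$ over the $O(1)$-complexity polytope $\sigma\times\sigma'$. The only difference is cosmetic — the paper constructs the lower envelope of the $O(1)$ hyperplanes explicitly and scans its faces for the highest point, while you hand the same $O(1)$ linear constraints (plus the auxiliary variable $z$) to a fixed-dimensional LP solver; both yield the claimed $O(1)$ time, and the linearity concern you flag is exactly the implicit assumption the paper also makes.
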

\begin{proof}
The case where $\sigma$ and $\sigma'$ are aligned is easy
as discussed above.
We thus assume they are unaligned.

Assume that we know the value of $d(v, v')$
for every pair $(v, v')$ with $v\in V_\sigma$ and $v'\in V_{\sigma'}$.
Recall that $d(s,t) = \min_{v\in V_\sigma, v'\in V_{\sigma}} d_{vv'}(s,t)$
and $d_{vv'}(s,t) =  |\seg{sv}| + |\seg{v't}| + d(v, v')$.
Further, note that $|V_\sigma|\leq 4, |V_{\sigma'}| \leq 4$.

Since each $d_{vv'}$ is a linear function on its domain $\sigma \times \sigma'$,
its graph appears as a hyperplane in a $5$-dimensional space.
Thus, the geodesic distance function $d$ restricted on $\sigma \times \sigma'$
corresponds to the lower envelope of those hyperplanes.
Since there are only a constant number of pairs $(v, v')$,
the function $d$ can also be explicitly constructed in $O(1)$ time.
Finally, we find the highest point on the graph of $d$
by traversing all of its faces.
\end{proof}

For each vertex $v$ of $\TD$, a straightforward method can compute $d(v, v')$
for all other vertices $v'$ of $\TD$ in $O(n^2\log n)$ time, by first
computing the shortest path map $\spm(v)$
\cite{ref:MitchellAn89,ref:MitchellL192} in $O(n\log n)$ time and then
computing $d(v, v')$ for all $v'\in \TD$ in $O(n^2\log n)$ time.
We instead give a faster sweeping algorithm in Lemma \ref{lem:findcells} by making use of the
property that all vertices on every diagonal of $\TD$ are sorted.

\begin{lemma}\label{lem:findcells}
For each vertex $v$ of $\TD$,
we can evaluate $d(v, v')$ for all vertices $v'$ of $\TD$ in $O(n^2)$ time.
\end{lemma}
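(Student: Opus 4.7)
The plan is to first compute the $L_1$ shortest path map $\spm(v)$ rooted at $v$ in $O(n \log n)$ time using the continuous Dijkstra framework of Mitchell~\cite{ref:MitchellL192}. This yields a planar subdivision of $\calP$ of total complexity $O(n)$ in which $d(v,\cdot)$ is an explicit linear function on each cell, so that once the cell containing a query point $q$ is identified, $d(v,q)$ can be evaluated in $O(1)$ time.

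The remaining task is to locate, for each of the $O(n^2)$ vertices $v'$ of $\TD$, the cell of $\spm(v)$ that contains $v'$. A standard $O(\log n)$ point-location query per vertex would give $O(n^2 \log n)$; to save the $\log$ factor I would exploit the fact that the vertices of $\TD$ lie on only $O(n)$ diagonals and are already sorted along each diagonal. For each diagonal $e$, first locate one endpoint of $e$ inside $\spm(v)$ via a standard point-location query, then sweep along $e$ from that endpoint while maintaining the current cell $C$ of $\spm(v)$. The next event along $e$ is either (a) the next $\TD$-vertex $v'$ on $e$, at which I record $C$ and read off $d(v,v')$ in $O(1)$ using the linear distance function stored with $C$, or (b) the intersection of $e$ with an edge bounding $C$, at which I cross into the adjacent cell of $\spm(v)$. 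Since the $\TD$-vertices on $e$ are already sorted along $e$ and the edges bounding $C$ can be inspected directly, the two kinds of events can be interleaved in sorted order along $e$ by simple comparison.

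For the running time, the $O(n)$ initial point-location queries cost $O(n \log n)$ in total. For the sweep along a fixed diagonal $e$, finding the next $\spm(v)$-edge crossing inside a cell $C$ takes $O(|C|)$ time by scanning the edges of $C$, where $|C|$ is the combinatorial complexity of $C$. Because $e$ is a straight segment and each cell of $\spm(v)$ is simply connected, $e$ meets each cell in a single sub-segment and hence visits each cell at most once; thus the total work of the sweep along $e$ is at most $\sum_{C \text{ visited by } e} |C| \le \sum_{C} |C| = O(n)$, where the last sum ranges over all cells of $\spm(v)$. Summing over the $O(n)$ diagonals gives $O(n^2)$ total, which dominates the $O(n \log n)$ cost of building $\spm(v)$ and the initial point locations.

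The main obstacle I anticipate is the per-cell accounting: one must argue cleanly that each cell of $\spm(v)$ is simply connected (so a straight-line diagonal meets it in a single connected arc and is charged $O(|C|)$ per diagonal) and that the total cell complexity is $O(n)$ for the $L_1$ shortest-path map. Both are standard properties of Mitchell's construction, but they are precisely what drives the $O(n^2)$ bound; once they are in place, interleaving $\TD$-vertex events with $\spm(v)$-crossing events reduces to a routine linear sweep along each diagonal.
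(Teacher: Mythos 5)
Your overall strategy (compute $\spm(v)$ once, then exploit the fact that the $\TD$-vertices are sorted along $O(n)$ diagonals to avoid an $O(\log n)$ point-location per vertex) is the same as the paper's, but your cost analysis has a genuine gap. The step that drives your $O(n^2)$ bound is the claim that, because each cell of $\spm(v)$ is simply connected, a straight diagonal $e$ meets each cell in a single sub-segment and hence visits it at most once. Simple connectivity does not imply this: a simply connected (even star-shaped) cell can be non-convex, and a straight line can enter and leave it many times. Cells of an $L_1$ shortest path map are bounded by obstacle edges and bisector polylines and are not convex in general, so nothing rules out a diagonal re-entering a single cell of complexity $\Theta(k)$ up to $\Theta(k)$ times; since you pay $O(|C|)$ for a boundary scan at each entry, the per-diagonal cost is only bounded by $\sum_C O(|C|\cdot(\text{number of entries into }C))$, which your argument does not control and which can exceed $O(n)$. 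Thus the inequality $\sum_{C \text{ visited by } e}|C|\le \sum_C |C| = O(n)$ is exactly the unproved point, and it is the whole content of the bound.

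The paper sidesteps this issue by not walking each diagonal separately through the subdivision. Instead it performs one global vertical plane sweep over $\spm(v)$: the sweep status (the ordered intersections of the sweep line with the edges of $\spm(v)$) is maintained across $O(n)$ events at vertices of $\spm(v)$ and obstacle vertices, each handled in $O(\log n)$ time, and when the sweep line reaches a vertical diagonal of $\TD$ it simply merges the (already sorted) list of $\TD$-vertices on that diagonal against the sweep status in $O(n)$ time. With $O(n)$ diagonal events this gives $O(n^2)$ total without any assumption about how a line meets an individual cell. To repair your version you would either need to establish a monotonicity property of $L_1$ SPM cells with respect to axis-parallel lines (which you have not argued and which is not obviously true), or switch to a global sweep of this kind; as written, the proof does not yield the claimed $O(n^2)$ bound.
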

\begin{proof}
Our algorithm attains its efficiency by using the property that all
vertices on each diagonal of $\TD$ are sorted.
Specifically, suppose that $\TD$ is represented by a standard data
structure, e.g., the doubly connected edge list. 
Then, by traversing each diagonal (either vertical or horizontal), we
can obtain a vertically or horizontally sorted list of vertices on
that diagonal.

We first compute the shortest path map $\spm(v)$ in $O(n \log n)$ time~\cite{ref:MitchellAn89,ref:MitchellL192}.
We then apply a standard sweeping technique, say, we sweep $\spm(v)$
by a vertical line from left to right.
The events are when the sweep line hits vertices of $\spm(v)$, obstacle vertices of $\calP$, or
vertical diagonals of $\TD$.
Note that each vertex of $\TD$ is
either on a vertical diagonal or an obstacle vertex.
We use the standard technique to handle the events of vertices of $\spm(v)$ and $\calP$,
and each such event costs $O(\log n)$ time.
For each event of a vertical diagonal of $\TD$, we simply do a linear search on the
sweeping status to find the cells of $\spm(v)$ that contain the cell vertices of
$\TD$ on the diagonal.
Each such event takes $O(n)$ time since each diagonal of $\TD$ has $O(n)$ vertices.
Note that the total number of events is $O(n)$.
Hence, the running time of the sweeping algorithm is $O(n^2)$.
\end{proof}

Thus, after $O(n^4)$-time preprocessing, for any two cells $\sigma, \sigma' \in \TD$,
the $(\sigma, \sigma')$-constrained diameter can be computed in $O(1)$ time
by Lemma~\ref{lem:constrained_diam}.  Since $\TD$ has $O(n^2)$ cells,
it suffices to handle at most $O(n^4)$ pairs of cells,
resulting in $O(n^4)$ candidates for the diameter, and the maximum is the diameter.
Hence, we obtain the following result.

\begin{theorem} \label{thm:td_diam}
The $L_1$ geodesic diameter of $\calP$
can be computed in $O(n^4)$ time. 
\end{theorem}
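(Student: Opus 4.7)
The plan is to directly assemble the algorithm from the machinery already developed, without introducing any new geometric ideas. Since $\TD$ is a decomposition of $\calP$, the diameter equals $\max_{\sigma,\sigma'\in\TD} \max_{s\in\sigma, t\in\sigma'} d(s,t)$, i.e., the maximum of the $(\sigma,\sigma')$-constrained diameters over all ordered pairs of cells. By Lemma~\ref{lem:constrained_diam}, each constrained diameter can be evaluated in $O(1)$ time provided we have precomputed $d(v,v')$ for every pair of cell vertices $v\in V_\sigma, v'\in V_{\sigma'}$. Hence the only nontrivial task is to precompute the required pairwise distances between vertices of $\TD$ fast enough to meet the $O(n^4)$ budget.

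First I would build $\TD$ in $O(n^2)$ time by overlaying the horizontal and vertical trapezoidal maps (each of complexity $O(n)$). This produces $O(n^2)$ cells and, in the worst case, $O(n^2)$ vertices (the overlay of two planar subdivisions of complexity $O(n)$). Since every vertex of every $V_\sigma$ is a vertex of $\TD$, it suffices to know $d(v,v')$ for every ordered pair of vertices of $\TD$.

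Second, I would iterate Lemma~\ref{lem:findcells}: for each of the $O(n^2)$ vertices $v$ of $\TD$, the lemma gives $d(v,v')$ for all other vertices $v'\in\TD$ in $O(n^2)$ time. Summed over all source vertices $v$, this preprocessing costs $O(n^4)$ time and stores all the values needed by Lemma~\ref{lem:constrained_diam}.

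Finally, I would enumerate the $O(n^4)$ ordered pairs $(\sigma,\sigma')$ of cells and, for each one, look up the (at most $16$) cached values $d(v,v')$ with $v\in V_\sigma,v'\in V_{\sigma'}$ and invoke Lemma~\ref{lem:constrained_diam} to compute the $(\sigma,\sigma')$-constrained diameter in $O(1)$ time. The overall maximum is the diameter of $\calP$, and the total running time is $O(n^4)$. I do not expect a genuine obstacle here: the two preceding lemmas encapsulate all the nontrivial work (existence of a shortest path through $V_\sigma\times V_{\sigma'}$ in the unaligned case, and efficient sweeping of $\spm(v)$ against the sorted diagonals of $\TD$), so the proof of the theorem reduces to a straightforward aggregation argument.
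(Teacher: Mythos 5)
Your proposal is correct and matches the paper's argument: both precompute all pairwise vertex distances in $O(n^4)$ time by applying Lemma~\ref{lem:findcells} once per vertex of $\TD$, then iterate over the $O(n^4)$ cell pairs, computing each $(\sigma,\sigma')$-constrained diameter in $O(1)$ time via Lemma~\ref{lem:constrained_diam}, and return the maximum.
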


\subsection{Computing the Geodesic Center}
\label{sec:center}

We now present an algorithm that computes an $L_1$ center of $\calP$.
The observation in Lemma~\ref{lem:key} plays an important role in our algorithm.


For any point $q\in\calP$, we define $\dmax(q)$ to be the maximum geodesic distance
between $q$ and any point in $\calP$, i.e., $\dmax(q) := \max_{p\in\calP} d(p,q)$.
A center $q^*$ of $\calP$ is defined to be a point
with $\dmax(q^*) = \min_{q\in\calP} \dmax(q)$.
Our approach is again based on the decomposition $\TD$:
for each cell $\sigma \in \TD$, we want to find a point $q \in\sigma$
that minimizes the maximum geodesic distance $d(p, q)$ over all $p\in \calP$.
We call such a point $q \in \sigma$ a \emph{$\sigma$-constrained center}.
Thus, if $q'$ is a $\sigma$-constrained center,
then we have $\dmax(q') = \min_{q\in\sigma} \dmax(q)$.
Clearly, the center $q^*$ of $\calP$ must be a $\sigma$-constrained center
for some $\sigma \in\TD$.
Our algorithm thus finds a $\sigma$-constrained center for every $\sigma\in\TD$,
which at last results in $O(n^2)$ candidates for a center of $\calP$.

Consider any cell $\sigma\in\TD$. To compute
a $\sigma$-constrained center, we investigate the function $R$ restricted to $\sigma$
and exploit Lemma~\ref{lem:key} again. To utilize Lemma~\ref{lem:key},
for any point $q\in \sigma$,
we define $\dmax_{\sigma'}(q):=\max_{p\in\sigma'} d(p,q)$ for any $\sigma' \in \TD$.
For any $q\in \sigma$, $\dmax(q) = \max_{\sigma'\in\TD} \dmax_{\sigma'}(q)$, that is,
$R$ is the upper envelope of all the $\dmax_{\sigma'}$ on the domain $\sigma$.
Our algorithm explicitly computes the functions $\dmax_{\sigma'}$ for all $\sigma' \in \TD$ and computes the upper envelope $\calU$ of the graphs of the $\dmax_{\sigma'}$.
Then, a $\sigma$-constrained center corresponds to a lowest point on
$\calU$.

We observe the following for the function $\dmax_{\sigma'}$.

\begin{lemma} \label{lem:R}
The function $\dmax_{\sigma'}$ is piecewise linear on $\sigma$ and has
$O(1)$ complexity.
\end{lemma}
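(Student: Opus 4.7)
The plan is to split the analysis into the two cases of Lemma~\ref{lem:key}, depending on whether $\sigma$ and $\sigma'$ are aligned.

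If $\sigma$ and $\sigma'$ are aligned, Lemma~\ref{lem:key} gives $d(p,q) = |\seg{pq}|$ for every $p \in \sigma'$ and $q \in \sigma$. Since $|\seg{pq}|$ is convex in $p$, its maximum over the convex polygon $\sigma'$ is attained at a vertex, and hence $\dmax_{\sigma'}(q) = \max_{v' \in V_{\sigma'}} |\seg{qv'}|$. This is the upper envelope of at most four $L_1$-distance functions in $q$, each piecewise linear with $O(1)$ pieces, so $\dmax_{\sigma'}$ is piecewise linear of $O(1)$ complexity.

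In the unaligned case, Lemma~\ref{lem:key} together with the computation preceding Lemma~\ref{lem:constrained_diam} gives
$$d(p,q) \;=\; \min_{v \in V_\sigma,\, v' \in V_{\sigma'}} d_{vv'}(p,q), \qquad d_{vv'}(p,q) = |\seg{qv}| + d(v,v') + |\seg{v'p}|,$$
and each $d_{vv'}$ is linear in $(p,q)$ on $\sigma' \times \sigma$ because $v \in V_\sigma$ and $v' \in V_{\sigma'}$ fix the signs in the $L_1$ differences. Hence $d$ is the lower envelope of at most $|V_\sigma|\cdot|V_{\sigma'}| \leq 16$ linear functions on the $4$-dimensional polytope $\sigma' \times \sigma$, i.e., a piecewise linear function with $O(1)$ pieces.

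To obtain $\dmax_{\sigma'}(q) = \max_{p \in \sigma'} d(p,q)$, I would partition $\sigma' \times \sigma$ into the $O(1)$ linearity cells $A_{vv'}$ of $d$ and, for each, compute $\phi_{vv'}(q) := \max\{d_{vv'}(p,q) : p \in \sigma',\ (p,q) \in A_{vv'}\}$. For fixed $q$, the slice $\{p : (p,q) \in A_{vv'}\}$ is a planar polytope with $O(1)$ vertices, and $d_{vv'}(\cdot,q)$ is linear in $p$, so the maximum is attained at a vertex of the slice whose position is piecewise linear in $q$; hence each $\phi_{vv'}$ is piecewise linear of $O(1)$ complexity, and $\dmax_{\sigma'} = \max_{v,v'} \phi_{vv'}$ is the upper envelope of $O(1)$ such functions, again piecewise linear of $O(1)$ complexity. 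The main technical concern is this projection step: eliminating $p$ from a piecewise linear description on a $4$-dimensional domain could in principle blow up the combinatorial size, but since both $d$ and the domain are cut out by $O(1)$ linear inequalities, standard polyhedral projection (Fourier--Motzkin elimination) keeps the output of $O(1)$ size.
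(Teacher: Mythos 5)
Your proof is correct and takes essentially the same approach as the paper: both rely on the fact that $d$ restricted to $\sigma\times\sigma'$ is piecewise linear with $O(1)$ pieces (via Lemma~\ref{lem:key}), and then obtain $\dmax_{\sigma'}$ by projecting away the $p$-coordinates from this $O(1)$-complexity graph and taking the upper envelope. The paper phrases the projection step more briefly (project the graph $S\subset\Real^5$ onto the $(x,y,z)$-hyperplane and take the upper envelope), whereas you carry it out cell-by-cell via parametric optimization and Fourier--Motzkin and split off the aligned case, but the underlying argument is the same.
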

\begin{proof}
Recall that $\dmax_{\sigma'}(q) = \max_{p\in \sigma'} d(p, q)$ for $q\in \sigma$.
In this proof, we regard $d$ to be restricted on $\sigma \times \sigma' \subset \Real^4$
and use a coordinate system of $\Real^4$ by introducing $4$ axes, $x$, $y$, $u$, and $w$
with $p=(x,y)\in \sigma$ and $q=(u,w)\in \sigma'$.
Thus, we may write $\dmax_{\sigma'}(q) = \dmax_{\sigma'}(x,y) = \max_{(u,w)\in\sigma'} d(x,y,u,w)$.

The graph of function $d$ consists of $O(1)$ linear patches as shown
in the proof of Lemma~\ref{lem:constrained_diam}.
Once we fully identify the geodesic distance function $d$ on $\sigma\times\sigma'$,
we consider its graph $S := \{z = d(x,y,u,w)\}$
for all $(x,y,u,w)\in\sigma\times\sigma'$, which is a hypersurface in
a $5$-dimensional space $(\sigma\times\sigma')\times\Real \subset \Real^5$
with an additional axis $z$.
We then project the graph $S$ onto the $(x,y,z)$-space.
More precisely, the projection of $S$ is the set
$\{(x,y,z) \mid (x,y,u,w,z) \in S\}$.
Thus, for any $(x,y)\in \sigma$, $\dmax_{\sigma'}(x,y)$ is determined by
the highest point in the intersection of the projection with the $z$-axis parallel line
through point $(x,y,0)$.
This implies that the function $\dmax_{\sigma'}$ simply corresponds to
the upper envelope (in the $z$-coordinate) of the projection of $S$.
Since $S$ consists of $O(1)$ linear patches,
so does the upper envelope of its projection,
which concludes the proof.
\end{proof}

Now, we are ready to describe how to compute a $\sigma$-constrained center.
We first handle every cell $\sigma'\in\TD$ to compute the graph of $\dmax_{\sigma'}$
and thus gather its linear patches.
Let $\Gamma$ be the family of those linear patches for all $\sigma' \in \TD$.
We then compute the upper envelope of $\Gamma$ and find a lowest point
on the upper envelope, which corresponds to a $\sigma$-constrained center.
Since $|\Gamma| = O(n^2)$ by Lemma~\ref{lem:R},
the upper envelope can be computed in $O(n^4\alpha(n))$ time
by executing the algorithm by Edelsbrunner et al.~\cite{ref:EdelsbrunnerTh89},
where $\alpha(\cdot)$ denotes the inverse Ackermann function.
The following theorem summarizes our algorithm.

\begin{theorem}\label{theo:center}
An $L_1$ geodesic center of $\calP$ can be computed in $O(n^6\alpha(n))$ time.
\end{theorem}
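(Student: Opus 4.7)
The plan is to realize the general strategy already outlined before the theorem, using as sub-routines the constrained diameter machinery and Lemma~\ref{lem:R}. I would first preprocess by computing, for every vertex $v$ of $\TD$, the geodesic distance $d(v,v')$ to every other vertex $v'$; by Lemma~\ref{lem:findcells} this costs $O(n^2)$ per source vertex and hence $O(n^4)$ in total. After this, by Lemma~\ref{lem:key} together with the linearity argument from the proof of Lemma~\ref{lem:constrained_diam}, the restriction of $d$ to any product $\sigma\times\sigma'$ of cells is an explicit $O(1)$-complexity piecewise-linear function, obtainable in $O(1)$ time from the precomputed vertex-to-vertex distances.

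Next, fix a cell $\sigma\in\TD$ and compute a $\sigma$-constrained center. For each $\sigma'\in\TD$ I would build the graph of $\dmax_{\sigma'}$ over $\sigma$: starting from the explicit description of $d$ on $\sigma\times\sigma'$, I follow the construction in the proof of Lemma~\ref{lem:R}, projecting its graph in $\Real^5$ to the $(x,y,z)$-space and taking the upper envelope there, which by that lemma has $O(1)$ linear patches. Summing over the $O(n^2)$ choices of $\sigma'$ produces a family $\Gamma$ of $O(n^2)$ linear patches whose upper envelope is exactly the function $R$ restricted to $\sigma$. A lowest point on this upper envelope is a $\sigma$-constrained center.

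The main computational step is the upper envelope of $O(n^2)$ linear patches in $\Real^3$, for which I invoke Edelsbrunner, Guibas, and Sharir's algorithm~\cite{ref:EdelsbrunnerTh89}; its complexity is $O(m^2\alpha(m))$ for $m$ patches, giving $O(n^4\alpha(n))$ per cell $\sigma$. The lowest point of this envelope can be found by a linear traversal of its faces, which is dominated by the construction cost. Hence computing one $\sigma$-constrained center costs $O(n^4\alpha(n))$.

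Finally, since any geodesic center of $\calP$ lies in some cell of $\TD$, it must be a $\sigma$-constrained center for some $\sigma$. Repeating the above for each of the $O(n^2)$ cells of $\TD$ and returning the candidate with the smallest value of $\dmax$ yields a true geodesic center. The total cost is $O(n^4)$ for the one-time distance preprocessing plus $O(n^2)\cdot O(n^4\alpha(n))=O(n^6\alpha(n))$ for the per-cell work, matching the claimed bound. The main obstacle to watch is the charging: one must ensure that the constant-size description of $d$ on $\sigma\times\sigma'$ really can be read off in $O(1)$ time from the precomputed vertex distances, so that no $\log n$-factor sneaks into the per-pair work and inflates the final bound.
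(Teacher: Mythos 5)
Your proposal matches the paper's own proof essentially step for step: $O(n^4)$ preprocessing via Lemma~\ref{lem:findcells}, then for each of the $O(n^2)$ cells $\sigma$ constructing the $O(1)$-patch graphs of $\dmax_{\sigma'}$ via Lemmas~\ref{lem:constrained_diam} and~\ref{lem:R}, and invoking Edelsbrunner et al.\ for the upper envelope in $O(n^4\alpha(n))$ time per cell. The charging concern you flag at the end is handled exactly as you suspect: the $O(1)$ description of $d$ on $\sigma\times\sigma'$ is read directly from the stored vertex-to-vertex distances with no lookup overhead.
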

\begin{proof}
As a preprocessing, we compute all of the vertex-to-vertex geodesic distances
$d(v, v')$ for all pairs of vertices of $\TD$ in $O(n^4)$ time.
We show that for a fixed $\sigma \in \TD$,
a $\sigma$-constrained center can be computed in $O(n^4\alpha(n))$ time.
As discussed in the proof of Lemma~\ref{lem:constrained_diam},
for any $\sigma'\in\TD$,
the geodesic distance function $d$ restricted on $\sigma\times \sigma'$,
along with its graph $D$ over $\sigma \times\sigma'$,
can be specified in $O(1)$ time.
By Lemma~\ref{lem:R},
from $D$ we can describe the function $\dmax_{\sigma'}$ in $O(1)$ time.
The last task is to compute the upper envelope of all $\dmax_{\sigma'}$
in $O(n^4\alpha(n))$ time, as discussed above,
by executing the algorithm by Edelsbrunner et al.~\cite{ref:EdelsbrunnerTh89}.
\end{proof}

\section{Exploiting the Extended Corridor Structure}
\label{sec:corridor}

In this section, we briefly review the extended corridor structure of $\calP$
and present new observations, which will be crucial for
our improved algorithms in Section~\ref{sec:improved}.
The corridor structure has been used for solving shortest path
problems~\cite{ref:ChenA11ESA,ref:InkuluPl09,ref:KapoorAn97}.
Later some new concepts such as ``bays,'' ``canals,''
and the ``ocean'' were introduced~\cite{ref:ChenCo12ICALP,ref:ChenL113STACS},
referred to as the ``extended corridor structure.''

\subsection{The Extended Corridor Structure}
\label{sec:corridornew}

Let $\Tri$ denote an arbitrary triangulation of $\calP$ (e.g.,
see \figurename~\ref{fig:triangulation}).
We can obtain $\Tri$ in $O(n\log n)$ time or $O(n+h\log^{1+\epsilon} h)$ time
for any $\epsilon>0$~\cite{ref:Bar-YehudaTr94}.
Let $G$ denote the dual graph of $\Tri$,
i.e., each node of $G(\calP)$ corresponds to a triangle in
$\Tri(\calP)$ and each edge connects two nodes of $G(\calP)$
corresponding to two triangles sharing a diagonal of $\Tri(\calP)$.
Based on $G$, one can obtain a planar $3$-regular graph,
possibly with loops and multi-edges,
by repeatedly removing all degree-one nodes and then contracting all degree-two nodes.
The resulting $3$-regular graph has $O(h)$ faces, nodes, and edges~\cite{ref:KapoorAn97}.
Each node of the graph corresponds to a triangle in $\Tri$, called a \emph{junction triangle}.
The removal of all junction triangles from $\calP$ results in $O(h)$ components,
called \emph{corridors}, each of which corresponds to an edge of the
graph.  See \figurename~\ref{fig:triangulation}.
Refer to \cite{ref:KapoorAn97} for more details.

\begin{figure}[t]
\begin{minipage}[t]{0.46\linewidth}
\begin{center}
\includegraphics[totalheight=1.5in]{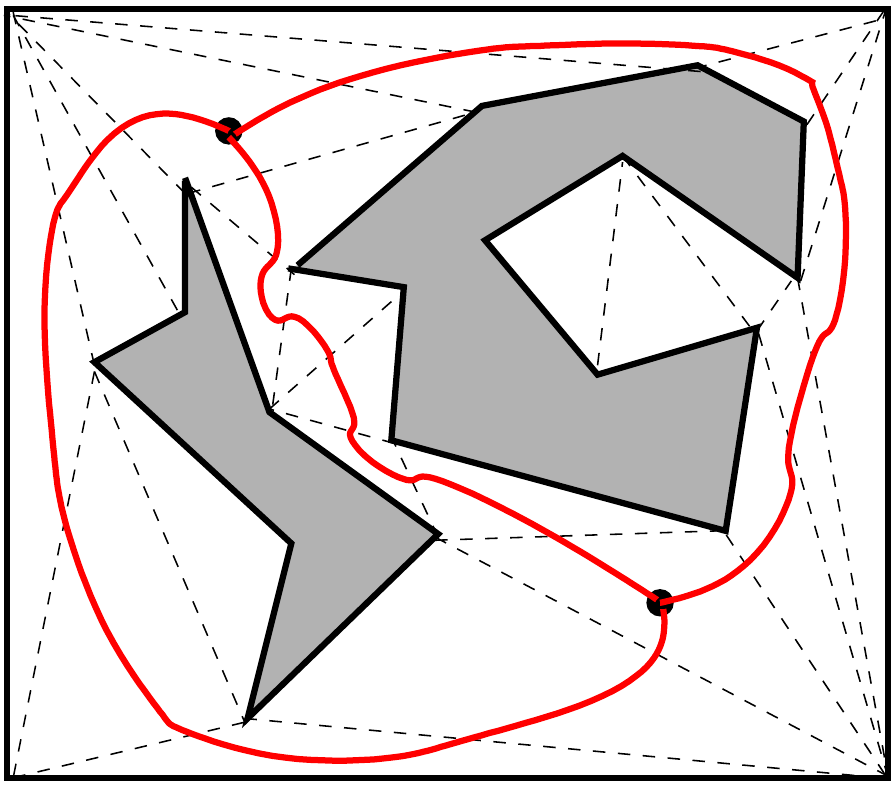}
\caption{\footnotesize A triangulation $\Tri$ of $\calP$
and the $3$-regular graph obtained from the dual graph of $\Tri$
whose nodes and edges are depicted by black dots and red solid curves.
Each junction triangle corresponds to a node and
removing all junction triangles results in three corridors, in this figure,
each of which corresponds to an edge of the graph.
}
\label{fig:triangulation}
\end{center}
\end{minipage}
\hspace*{0.02in}
\begin{minipage}[t]{0.53\linewidth}
\begin{center}
\includegraphics[totalheight=1.5in]{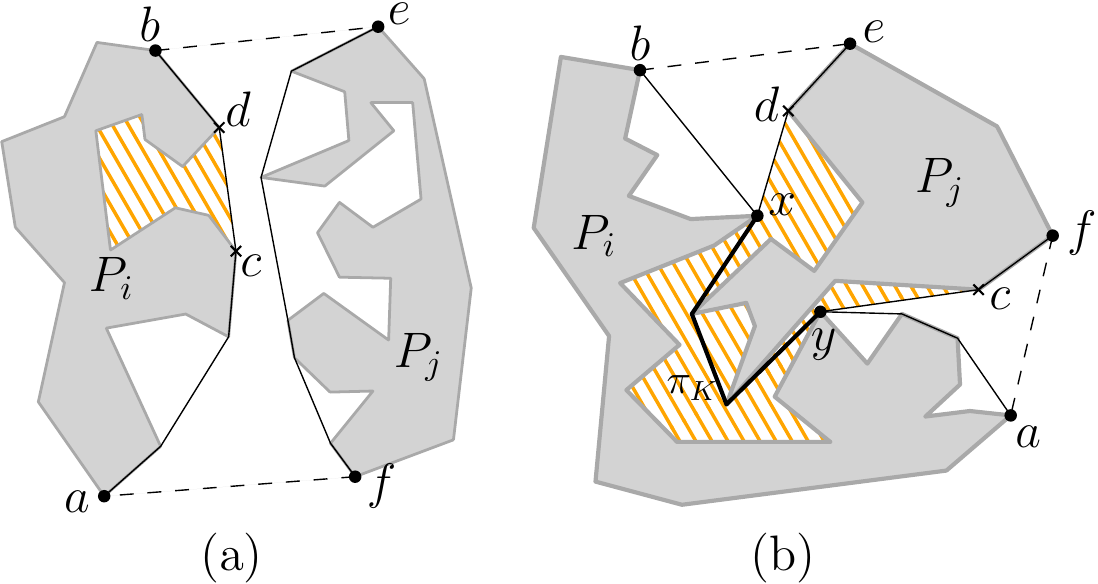}
\caption{\footnotesize Hourglasses $H_K$ in corridors $K$.
The dashed segments $\seg{be}$ and $\seg{af}$ are diagonals of junction triangles in $\Tri$.
(a) $H_K$ is open.
Five bays can be seen. A bay with gate $\seg{cd}$ is shown as the shaded region.
(b) $H_K$ is closed. There are three bays and a canal, and the shaded
region depicts the canal with two gates
$\seg{dx}$ and $\seg{cy}$.
}
\label{fig:corridor}
\end{center}
\end{minipage}
\end{figure}

Next we briefly review the concepts of bays, canals, and the ocean.
Refer to~\cite{ref:ChenCo12ICALP,ref:ChenL113STACS} for more details.

Let $P_1, \ldots, P_h$ be the $h$ holes of $\calP$ and $P_0$ be the outer polygon of $\calP$.
For simplicity, a hole may also refer to the unbounded region outside $P_0$ hereafter.
The boundary $\bd K$ of a corridor $K$ consists of
two diagonals of $\Tri$ and two paths along the boundary of holes $P_i$ and $P_j$,
respectively (it is possible that $P_i$ and $P_j$ are the
same hole, in which case one may consider $P_i$ and $P_j$ as the above
two paths respectively).
Let $a, b\in P_i$ and $e, f\in P_j$ be the endpoints of the two paths,
respectively, such that $\seg{be}$ and $\seg{fa}$ are diagonals of $\Tri$,
each of which bounds a junction triangle.  See \figurename~\ref{fig:corridor}.
Let $\pi_{ab}$ (resp., $\pi_{ef}$) denote the \emph{Euclidean} shortest path from $a$ to $b$
(resp., $e$ to $f$) inside $K$.
The region $H_{K}$ bounded by $\pi_{ab}, \pi_{ef}$, $\seg{be}$, and $\seg{fa}$
is called an \emph{hourglass}, which is either \emph{open}
if $\pi_{ab}\cap \pi_{ef}=\emptyset$, or \emph{closed}, otherwise.
If $H_{K}$ is open, then both $\pi_{ab}$ and $\pi_{ef}$ are convex chains
and are called the \emph{sides} of $H_{K}$;
otherwise, $H_{K}$ consists of two ``funnels'' and a path
$\pi_{K}=\pi_{ab}\cap \pi_{ef}$ joining the two apices of the two funnels,
called the \emph{corridor path} of $K$.
The two funnel apices (e.g., $x$ and $y$ in \figurename~\ref{fig:corridor}(b))
connected by $\pi_{K}$ are called the \emph{corridor path terminals}.
Note that each funnel comprises two convex chains.


We consider the region of $K$ minus the interior of $H_K$, which consists of a number of
simple polygons facing (i.e., sharing an edge with) one or both of $P_i$ and $P_j$.
We call each of these simple polygons a \emph{bay} if it is facing a single hole,
or a \emph{canal} if facing both holes.
Each bay is bounded by a portion of the boundary of a hole and
a segment $\seg{cd}$ between two obstacle vertices
$c, d$ that are consecutive along a side of $H_K$.
We call the segment $\seg{cd}$ the \emph{gate} of the bay.
(See \figurename~\ref{fig:corridor}(a).)
On the other hand, there exists a unique canal for each corridor $K$
only when $H_K$ is closed and the two holes $P_i$ and $P_j$ both bound the canal.
The canal in $K$ in this case completely contains the corridor path $\pi_K$.
A canal has two \emph{gates} $\seg{xd}$ and $\seg{yc}$
that are two segments facing the two funnels, respectively,
where $x, y$ are the corridor path terminals and $d, c$ are vertices of the funnels.
(See \figurename~\ref{fig:corridor}(b).)
Note that each bay or canal is a simple polygon.

%

Let $\calM \subseteq \calP$ be the union of all junction triangles,
open hourglasses, and funnels.
We call $\calM$ the \emph{ocean}.
Its boundary $\bd \calM$ consists of $O(h)$ convex vertices and $O(h)$ reflex chains
each of which is a side of an open hourglass or of a funnel.
Note that $\calP \setminus \calM$ consists of all bays and canals of $\calP$.

For convenience of discussion, define each bay/canal in such a way that they do not
contain their gates and hence their gates are contained in $\calM$;
therefore, each point of $\calP$ is either in a bay/canal or in $\calM$, but not in both. After the triangulation $\Tri$ is obtained, computing the ocean, all bays and canals can be done in $O(n)$ time \cite{ref:ChenCo12ICALP,ref:ChenL113STACS,ref:KapoorAn97}.

Roughly speaking, the reason we partition $\calP$ into the ocean, bays, and canals is to facilitate evaluating the distance $d(s,t)$ for any two points $s$ and $t$ in $\calP$. For example, if both $s$ and $t$ are in $\calM$, then we can use a similar method as in Section~\ref{sec:td} to evaluate $d(s,t)$. However, the challenging case happens when one of $s$ and $t$ is in $\calM$ and the other is in a bay or canal.

The following lemma is one of our key observations for our improved algorithms in Section~\ref{sec:improved}. It essentially tells that for any point $s\in \calP$ and any bay or canal $A$, the farthest point of $s$ in $A$ is achieved on the boundary $\partial A$, which is similar in spirit to the simple polygon case.

\begin{lemma}\label{lem:bay}
Let $s\in\calP$ be any point and $A$ be a bay or canal of $\calP$.
Then, for any $t\in A$,
there exists $t'\in \bd A$ such that $d(s,t) \leq d(s,t')$.
Equivalently, $\max_{t\in A} d(s,t) = \max_{t\in \bd A} d(s,t)$.
\end{lemma}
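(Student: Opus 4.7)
The plan is to prove the equivalent formulation $\max_{t \in A} d(s,t) = \max_{t \in \bd A} d(s,t)$ by showing that the distance function $f := d(s, \cdot)$ on $A$ admits no interior local maximum. The key tool is the $L_1$ shortest path map $\spm(s)$ from source $s$. Each cell of $\spm(s)$ has a \emph{root} $r \in \calP$ (either $s$ itself or a reflex vertex of $\calP$), and within such a cell $d(s,q) = d(s,r) + |\seg{rq}|$. On the interior of any cell, the $L_1$ gradient of $d(s,\cdot)$ is a nonzero sign vector in $\{-1,+1\}^2$, so $d(s,\cdot)$ has no interior local maximum within any single cell.

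Given $t \in \intr A$, the approach is to construct a path $\alpha$ from $t$ to some $t' \in \bd A$ along which $d(s,\cdot)$ is non-decreasing. First I would fix a shortest $s$-to-$t$ path $\pi$ and let $v$ be its last turning point---either $s$ itself or a reflex vertex of $\calP$. Since $t \in \intr A$ is not a vertex of $\calP$ (every vertex of $\calP$ lies on $\bd\calP$, hence outside $\intr A$), we have $v \neq t$, so the ray from $v$ through $t$ is well-defined. Next, within the cell $C_v$ of $\spm(s)$ that contains $t$ and is rooted at $v$, this ray extended past $t$ stays locally in $C_v$, and $d(s,\cdot)$ strictly increases along it by the gradient property above. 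I would follow this extension until it either reaches $\bd A$ (stop and set $t'$ to be that hit point) or crosses into a new cell $C_{v'}$ at some point $q$, in which case I would restart the extension as the ray from $v'$ through $q$. Iterating, $\alpha$ must terminate at some $t' \in \bd A$---since $\spm(s)$ has only finitely many cells and any exit of $\alpha$ from $A$ must cross $\bd A$ (whose components are the gate(s) of $A$ and the hole-boundary portion)---yielding $d(s,t') \geq d(s,t)$.

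The main obstacle is justifying the ``restart'' step when $\alpha$ crosses an SPM cell boundary. One must verify that when $\alpha$ enters cell $C_{v'}$ at $q$, the ray from $v'$ through $q$, extended past $q$, points into $C_{v'}$ locally (rather than back into $C_v$), and that $d(s,\cdot)$ continues to increase along this new direction. The first property relies on a structural fact about $\spm(s)$: each cell $C_{v'}$ is the set of points reachable from $v'$ along certain unobstructed rays in a prescribed angular range, so every boundary point of $C_{v'}$ admits a local continuation into $C_{v'}$ along such a ray; the second property is immediate from the gradient formula, since moving away from $v'$ strictly increases $|\seg{v' \cdot}|$ in the appropriate quadrant. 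A secondary subtlety is the degenerate case when $t$ (or a crossing point $q$) lies at the meeting of three or more cells of $\spm(s)$: here several candidate roots are available, and one must pick one for which the corresponding outward ray indeed enters an adjacent cell---such a choice exists because otherwise $t$ would be a genuine local maximum of $f$, contradicting the no-local-max property of each constituent linear piece.
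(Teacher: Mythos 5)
There is a genuine gap, and it sits exactly where the real difficulty of the lemma lies. Your construction reduces everything to the claim that $d(s,\cdot)$ cannot have an interior local maximum, and you justify the delicate case (a point where several cells of $\spm(s)$ meet, with every candidate root-ray pointing ``backwards'') by saying this would contradict ``the no-local-max property of each constituent linear piece.'' That inference is invalid: on each cell the distance is $d(s,r)+|\seg{rq}|$ for its root $r$, but globally $d(s,\cdot)$ is the pointwise \emph{minimum} of such pieces over the different roots/homotopy classes, and a minimum of functions with no local maxima can perfectly well have a local maximum along the locus where two pieces cross (already in one dimension, $\min(f_1,f_2)$ with $f_1$ increasing and $f_2$ decreasing peaks at the crossing). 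In a polygonal domain with holes such interior local maxima of $d(s,\cdot)$ genuinely occur --- this is precisely why the ($L_1$ and Euclidean) geodesic diameter can be attained by two interior points, as recalled in the introduction of the paper --- so no argument that never uses the special structure of $A$ can establish ``no interior local max.'' Your path-following procedure can therefore get stuck at an $\spm(s)$ vertex or bisector point inside $A$ with no admissible continuation, and the proposal gives no way out. (A secondary, fixable issue: even at an ordinary bisector crossing, the ray from the new root $v'$ through the crossing point $q$, extended past $q$, need not point into the new cell, and the cell need not be convex, so ``restart and continue'' requires more care than star-shapedness of SPM cells provides.)

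What the lemma actually needs, and what the paper's proof supplies, is the topological structure of bays and canals: $A$ is a simple polygon attached to the rest of $\calP$ through one gate (bay) or two gates with corridor-path terminals $x,y$ (canal). For a bay, every shortest path from an outside source enters through the unique gate $g$, so $d(s,t)=\min_{x\in g}\{d(s,x)+d_A(x,t)\}$, and one shows via the funnel of $t$ over $g$ and an $L_1$-disk argument that there is a single direction along which $d_A(x,\cdot)$ is nondecreasing \emph{simultaneously for all} $x\in g$; extending $t$ in that direction to $\bd A$ gives $t'$. For a canal one analyzes the additively weighted geodesic Voronoi diagram of $x$ and $y$ (and of $s$ itself when $s\in A$) inside the canal, shows the maximum is attained on $\bd A$ or on the bisector and then pushes along the bisector to its endpoint, and rules out an interior Voronoi vertex by a path-exchange argument producing a type-(a) shortest path. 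Any correct proof will have to exploit the gates in some such way; the SPM gradient-flow idea alone cannot distinguish a bay or canal from an arbitrary subregion of $\calP$, for which the statement is false.
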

\begin{proof}
Recall that the gates of $A$ are not contained in $A$ but in the ocean $\calM$.
Let $\bar{A}$ be the closure of $A$, that is, $\bar{A}$ consists of $A$ and its gates.
For any $p,q\in \bar{A}$, let $d_A(p, q)$ be the $L_1$ geodesic
distance in $\bar{A}$.
Since $\bar{A}$ is a simple polygon,
Fact~\ref{fact:euc_simple} implies that there is a unique Euclidean shortest path
$\pi_2(p, q)$ in $\bar{A}$ between any $p,q\in \bar{A}$,
and $d_A(p,q) = |\pi_2(p,q)|$.
In general, we have $d_A(p, q) \geq d(p,q)$.

Depending on whether $A$ is a bay or a canal, our proof will consider two cases. We first prove a basic property as follows.

\paragraph*{A Basic Property}
Consider any point $s'\in \calP$ and any point $t'\in A$.
Then, we claim that there exists a shortest $s'$-$t'$ path $\pi$ with the following property (*):
\begin{quote}
 \textit{%
 (*) $\pi$ crosses each gate of $A$ at most once and each component of $\pi\cap \bar{A}$
 is the unique Euclidean shortest path $\pi_2(p, q)$ for some points $p,q\in\bar{A}$.}
\end{quote}
Consider any shortest $s'$-$t'$ path $\pi'$.
If $\pi'$ crosses a gate $g$ of $A$ at least twice,
then let $p\in g$ and $q \in g$ be the first and last points on $g$
we encounter when walking along $\pi'$ from $s'$ to $t'$.
We can replace the portion of $\pi'$ between $p$ and $q$ with the line
segment $\seg{pq}$ by Fact~\ref{fact:l1length}
to obtain another shortest path that crosses $g$ at most once.
One can repeat this procedure for all gates of $A$ to have
a shortest path $\pi''$ crossing each gate of $A$ at most once.
Then, we take a connected component of $\pi'' \cap \bar{A}$,
which is an $L_1$ shortest path between its two endpoints $p, q$ inside $\bar{A}$.
This implies that $d(p, q) = d_A(p,q) = |\pi_2(p,q)|$,
so we can replace the component by $\pi_2(p,q)$.
Repeat this for all components of $\pi'' \cap \bar{A}$ to obtain
another shortest $s'$-$t'$ path $\pi$ with the desired property (*).

\begin{figure}[t]
\begin{minipage}[t]{\linewidth}
\begin{center}
\includegraphics[width=0.98\textwidth]{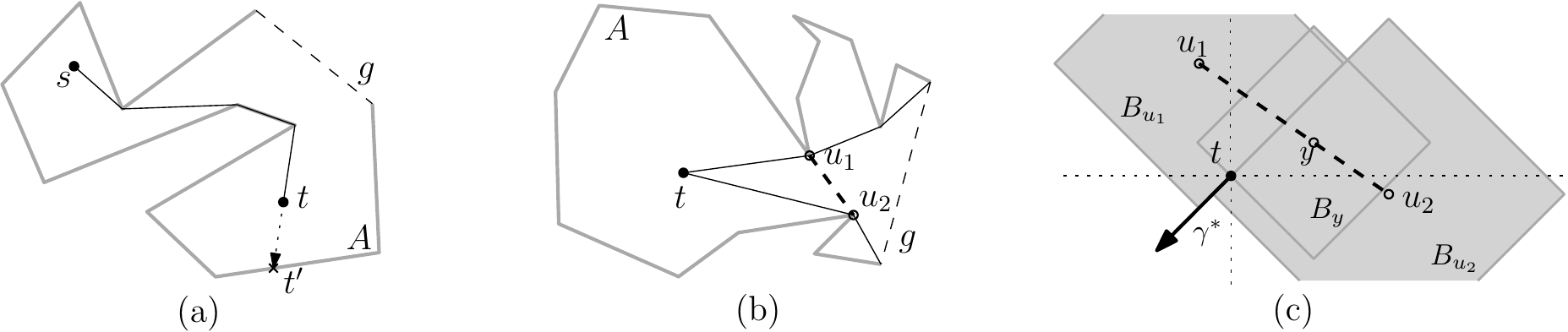}
\caption{\footnotesize Illustration to the proof of Lemma~\ref{lem:bay} when $A$ is a bay.
(a) When $s\in \bar{A}$ and (b)(c) when $s\notin \bar{A}$.
}
\label{fig:bay}
\end{center}
\end{minipage}
\end{figure}

\paragraph*{The Bay Case}
To prove the lemma, we first prove the case where $A$ is a bay.
Then, $A$ has a unique gate $g$.
Recall that the gate $g$ is not contained in $A$.
Depending on whether $s$ is in $\bar{A}$, there are two cases.
\begin{itemize}
\item
Suppose that $s\in \bar{A}$.
Let $t \in A$ be any point in $A$.
By our claim, there exists a shortest \st\ path $\pi$ in $\calP$ with property (*).
Since both $s$ and $t$ lie in $\bar{A}$, $\pi$ does not cross $g$, and
is thus contained in $\bar{A}$.
Moreover, by property (*), we have $\pi = \pi_2(s, t)$.
Hence, $d(s,t) = d_A(s,t)$.

If $t$ lies on $\bd A$, then the lemma trivially holds.
Suppose that $t$ lies in the interior of $A$.
Then, we can extend the last segment of $\pi_2(s,t)$
until it hits a point $t'$ on the boundary $\bd A$.
See \figurename~\ref{fig:bay}(a).
Again since $\bar{A}$ is a simple polygon,
the extended path is indeed $\pi_2(s, t')$.
By the above argument, $d(s,t') = d_A(s, t') = |\pi_2(s,t')|$,
which is strictly larger than $|\pi_2(s,t)| = d(s,t)$.


\item
Suppose that $s \notin \bar{A}$.
Then, any shortest path from $s$ to any point $t\in A$
must cross the gate $g$.
This implies that $d(s,t) = \min_{x\in g}\{d(s, x) + d_A(x, t)\}$ for any $t\in A$.
We show that there exists $t'\in \bd A$ such that
for any point $x\in g$, it holds that $d_A(x, t) \leq d_A(x,t')$,
which implies that $d(s,t) = \min_{x\in g}\{d(s, x) + d_A(x, t)\}
\leq \min_{x\in g}\{d(s, x) + d_A(x, t')\} = d(s,t')$.
For the purpose, we consider the union of $\pi_2(x, t)$
for all $x\in g$.
The union forms a funnel $\funnel_g(t)$ plus the Euclidean shortest path $\pi_2(u, t)$ from
the apex $u$ of $\funnel_g(t)$ to $t$.
If $u\neq t$, then we extend the last segment of $\pi_2(u, t)$ to a point $t'$ on
the boundary $\bd A$, similarly to the previous case so that
$d_A(u, t) \leq d_A(u, t')$ and thus $d_A(x, t) \leq d_A(x, t')$ for any $x\in g$.
Otherwise, if $u = t$, then
let $u_1$ and $u_2$ be the two vertices of $\funnel_g(t)$ adjacent to the apex $t$.
See \figurename~\ref{fig:bay}(b).

Observe that the segment $\seg{u_1u_2}$ separates $t$ and the gate $g$,
and hence path $\pi_2(x, t)$ for any $x\in g$ crosses $\seg{u_1u_2}$.
We now claim that there exists a ray $\gamma^*$ from $t$ such that
as $t$ moves along $\gamma^*$, $d_A(t, y)$ for any fixed $y\in \seg{u_1u_2}$
is nondecreasing.
If the claim is true, then we select $t' = \gamma^* \cap \bd A$, so the lemma follows
since it holds that $d_A(t, x) = \min_{y_\in\seg{u_1u_2}} \{d_A(t, y) + d_A(y, x)\}$
$\leq \min_{y_\in\seg{u_1u_2}} \{d_A(t', y) + d_A(y, x)\} = d_A(t',
x)$ for any $x\in g$. Next, we prove the claim.

For each $y\in \seg{u_1u_2}$, let $B_y$ be the $L_1$ disk centered at $y$ with radius
$d_A(y,t) = |\seg{yt}|$.
See \figurename~\ref{fig:bay}(c).
Since $t$ lies on the boundary of $B_y$, as we move $t$ along a ray $\gamma$ in some direction
outwards $B_y$, $|\seg{yt}|$ is not decreasing, that is,
for any $p\in \gamma$, $|\seg{yp}| \geq |\seg{yt}|$.
This also implies that $d_A(y, p) \geq |\seg{yp}| \geq |\seg{yt}| = d_A(y, t)$.
Let $h_y$ be the set of all such rays $\gamma$
that as $t$ moves along $\gamma$, $|\seg{yt}|$ is not decreasing.
Our goal is thus to show that $\bigcap_{y\in\seg{u_1u_2}} h_y \neq \emptyset$,
and pick $\gamma^*$ as any ray in the intersection.
For the purpose, we consider the four quadrants---left-upper, right-upper,
left-lower, and right-lower---centered at $t$.
Then, since the $B_y$ are all $L_1$ disks,
the set $h_y$ only depends on which quadrant $y$ belongs to;
more precisely, for any $y$ in a common quadrant, the set $h_y$ stays constant.
For example, for any $y \in \seg{u_1u_2}$ lying in the right-upper quadrant,
then $h_y$ is commonly the set of all rays from $t$
in direction between $135^\circ$ and $315^\circ$, inclusively,
since $t$ lies on the bottom-left edge of $B_y$ in this case.
Thus, the directions of all rays in $h_y$ span an angle of exactly $180^\circ$.
Moreover, $\seg{u_1u_2}$ is a line segment and thus intersects only three quadrants.
Therefore, $\bigcap_{y\in\seg{u_1u_2}} h_y$ is equal to the intersection
of at most three different sets of rays, whose directions span an angle of $180^\circ$.
(\figurename~\ref{fig:bay}(c) illustrates an example scene when $\seg{u_1u_2}$
intersects three quadrants centered at $t$.)
This implies that $\bigcap_{y\in\seg{u_1u_2}}h_y \neq \emptyset$.
\end{itemize}

\begin{figure}[t]
\begin{minipage}[t]{\linewidth}
\begin{center}
\includegraphics[totalheight=1.7in]{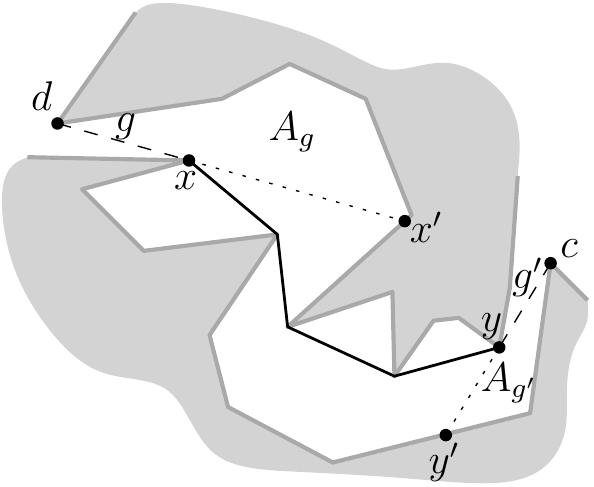}
\caption{\footnotesize Illustration to the proof of Lemma~\ref{lem:bay} when $A$ is a canal
with two gates $g=\seg{dx}$ and $g'=\seg{cy}$.
}
\label{fig:obsercanal}
\end{center}
\end{minipage}
\end{figure}

\paragraph*{The Canal Case}
Above, we have proved the lemma for the bay case where $A$ is a bay.
Next, we turn to the canal case: suppose that $A$ is a canal.

Let $g=\seg{dx}$ and $g'=\seg{cy}$ be the two gates of $A$,
where $x$ and $y$ are the two corridor path terminals
(e.g., see \figurename~\ref{fig:obsercanal}).
We extend $g$ from $x$ into the interior of $A$, in the direction opposite to $d$.
Note that due to the definition of canals, this extension always goes into
the interior of $A$ (refer to~\cite{ref:ChenCo12ICALP} for detailed discussion).
Let $x'$ be the first point on $\bd A$ hit by the extension.
The line segment $\seg{xx'}$ partitions $A$ into
two simple polygons,
and the one containing $d$ is denoted by $A_g$.
We consider $\seg{dx'}$ as an edge of $A_g$,
but for convenience of discussion, we assume that $A_g$ does not contain
the segment $\seg{dx'}$.
Define $A_{g'}$ analogously for the other gate $g' = \seg{cy}$.
If $t\in A_g$, then we can view $A_g$ as a ``bay'' with gate $\seg{dx'}$,
and apply the identical argument as done in the bay case,
concluding that for any $t\in A_g$ there exists $t'\in \bd A_g$ such that
$d(s,t) \leq d(s,t')$.
If $t' \in \bd A$, then we are done.
Otherwise, if $t'\in \seg{xx'}$, then $s \in A_g$ according to our analysis on the bay case.
In this case, we move $t'$ along $\seg{xx'}$ to $x$ or $x'$,
since $\max\{ d(s, x), d(s, x')\} \geq d(s, t') \geq d(s,t)$, we are done.
The case of $t\in A_{g'}$ is analogous.

Let $\widehat{A} := A \setminus (A_g\cup A_{g'})$.
From now on, we suppose $t \in \widehat{A}$.
Observe that
for any $p\in g$, $\pi_2(p, t)$ passes through the corridor path
terminal $x$ since $t\notin A_g$ \cite{ref:ChenCo12ICALP};
symmetrically, for any $p'\in g'$, $\pi_2(p', t)$ passes through $y$.
Consider any $L_1$ shortest \st\ path $\pi$ in $\calP$ with property (*).
We classify $\pi$ into one of the following three types:
(a) $\pi$ lies inside $A$, that is, $\pi = \pi_2(s, t)$,
(b) when walking along $\pi$ from $s$ to $t$, the last gate crossed by $\pi$ is $g$, or
(c) is $g'$.
Note that $\pi$ falls into one of the three cases.
In case (a), indeed we have $s\in A$ and $d(s, t) = d_A(s, t)$.
In case (b), $\pi$ consists of a shortest path from $s$ to $x$
and $\pi_2(x, t)$, and thus $d(s,t) = d(s,x) + d_A(x, t)$.
Symmetrically, in case (c), we have $d(s,t) = d(s,y) + d_A(y,t)$.

Depending on whether $s\in \bar{A}$ or not, we handle two possibilities.
In the following, we assume property (*) when we discuss any shortest \st\ path.
\begin{itemize}
\item Suppose that $s\notin \bar{A}$.
Then, any shortest \st\ path in $\calP$ must cross a gate of $A$.
This means that there is no shortest \st\ path of type (a),
and we have $d(s,t) = \min\{d(s,x) + d_A(x, t), d(s, y) + d_A(y,t)\}$.
Consider a decomposition of $\widehat{A}$ into three regions $R_x$,
$R_y$, and $B$
such that $R_x = \{ p \in \widehat{A} \mid d(s,x) + d_A(x,p) < d(s,y) + d_A(y,p)\}$,
$R_y = \{ p \in \widehat{A} \mid d(s,x) + d_A(x,p) > d(s,y) + d_A(y,p)\}$,
and $B =  \{ p \in \widehat{A} \mid d(s,x) + d_A(x,p) = d(s,y) + d_A(y,p)\}$.
This decomposition is clearly the geodesic Voronoi diagram in simple polygon $\widehat{A}$
of two sites $\{x, y\}$ with additive weights. See Aronov~\cite{ref:AronovOn89} and Papadopoulou and Lee~\cite{ref:PapadopoulouNe89}.
Also, we have that $d(s,t) = d(s,x) + d_A(x, t)$ for any $t\in R_x$;
$d(s,t) = d(s,y) + d_A(y,t)$ for any $t\in R_y$.
The region $B$ is called the bisector between $x$ and $y$.
By the property of Voronoi diagrams~\cite{ref:AronovOn89,ref:PapadopoulouNe89},
$B = \bd R_x \cap \bd R_y$
and $B$ is a path connecting two points on $\bd \widehat{A}$.
Let $b_0 \in B$ be the intersection $\pi_2(x, y)\cap B$.
Then, $d_A(x, b_0) = d_A(y, b_0) \leq d_A(x, b) = d_A(y, b)$ for any $b\in B$,
and moreover if we move $b$ along $B$ in one direction from $b_0$,
$d_A(x, b)$ is nondecreasing.
Thus, $\max_{b\in B} d_A(x,b)$ is attained when $b$ is an endpoint of $B$.

When $t\in \bd A$, the lemma is trivial.
If $t\in B$, then we let $t'$ be the endpoint of $B$ in direction away from $b_0$.
Then, by the property of the bisector $B$, we have
$d_A(x, t) \leq d_A(x, t')$ and $d_A(y,t) \leq d_A(y, t')$,
and hence $d(s, t) \leq d(s, t')$.
If $t'$ lies on the boundary $\bd A$, we conclude the lemma;
otherwise, $t'$ may lie on $\bd \widehat{A}\setminus \bd A$, say on $\seg{xx'}$.
In this case, we apply the analysis of the bay case where $s\notin A_g$
to find a point $t''$ on $\bd A$ such that $d(s, t') \leq d(s, t'')$.

If $t$ lies in the interior of $R_x$,
then we extend the last segment of $\pi_2(x, t)$ until it hits
a point $t'$ on $\bd R_x$.
Then, we have that $d(s, t') \geq d(s,t)$.
Note that $t'$ lies on $\bd \widehat{A}$ or on $B$;
in any case, we apply the above argument so that we can find a point $t''$ on $\bd A$
with $d(s, t'') \geq d(s, t') \geq d(s,t)$.
The case where $t$ lies in the interior of $R_y$ is handled analogously.

\item Finally, suppose that $s\in \bar{A}$.
We again consider the $L_1$ geodesic Voronoi diagram in $\widehat{A}$
of three sites $\{s, x, y\}$ with additive weights $0, d(s,x), d(s,y)$, respectively.
As done above, we observe that the maximum value $\max_{t\in \widehat{A}} d(s, t)$
is attained when $t\in \bd A$ or $t$ is a Voronoi vertex.
The former case is analyzed above.
Here, we prove that the latter case cannot happen.
Note that there are three shortest paths of different types between $s$ and the Voronoi vertex
while there are exactly two shortest paths of types (b) and (c) to any point
on the bisector between $x$ and $y$.
In the following, we show that the bisector between $x$ and $y$ cannot appear
in the Voronoi diagram, which implies that the Voronoi diagram has no vertex.

\begin{figure}[t]
\begin{minipage}[t]{0.46\linewidth}
\begin{center}
\includegraphics[totalheight=1.0in]{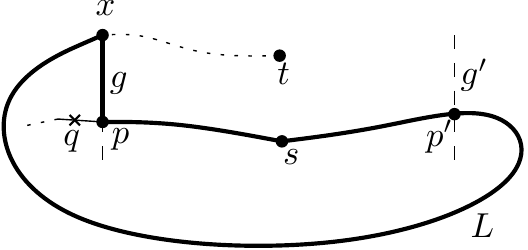}
\caption{\footnotesize Illustration to the proof of Lemma~\ref{lem:bay} when $A$ is a canal.}
\label{fig:loopL}
\end{center}
\end{minipage}
\hspace*{0.02in}
\begin{minipage}[t]{0.53\linewidth}
\begin{center}
\includegraphics[totalheight=1.5in]{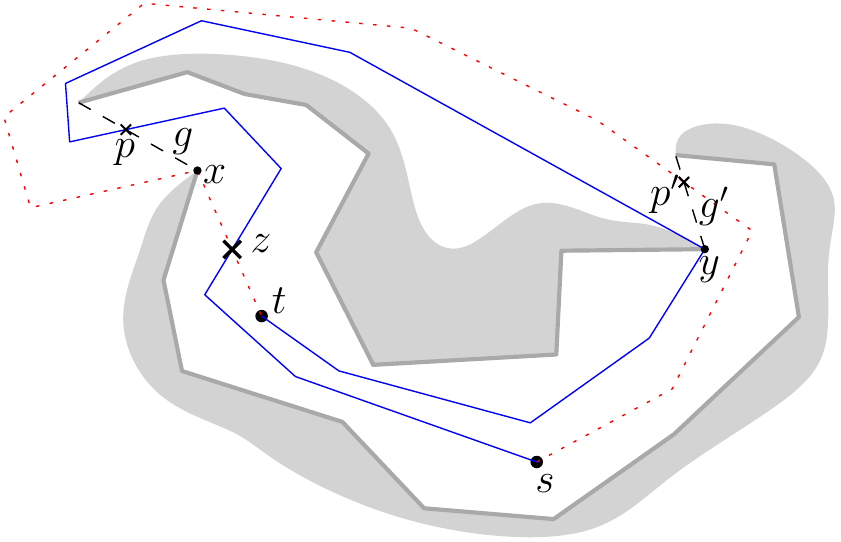}
\caption{\footnotesize Illustrating two shortest \st\ paths $\pi$ (red dotted) and $\pi'$ (blue  solid), intersecting at $z$, in a canal with two gates $g$ and $g'$.}
\label{fig:pathreplace}
\end{center}
\end{minipage}
\end{figure}

Suppose to the contrary that the bisector between $x$ and $y$ appears as a nonempty Voronoi edge
of the Voronoi diagram, and that $t$ is a point on it.
That is, $d(s, t) = d(s, x) + d_A(x, t) = d(s, y) + d_A(y, t) > d_A(s, t)$.
Let $\pi$ and $\pi'$ be two shortest \st\ paths of type (b) and (c), respectively.
Thus, $\pi$ passes through $x$ and $\pi'$ passes through $y$ to reach $t\in\widehat{A}$.
By property (*), $\pi$ crosses $g'$ first and then $g$ when walking along $\pi$ from $s$ to $t$,
and $\pi'$ crosses $g$ first and then $g'$.
Let $p\in g$ and $p'\in g'$ be the last point of $\pi' \cap g$ and $\pi \cap g'$,
respectively, when we walk along $\pi$ and $\pi'$ from $s$ to $t$.

We claim that $\pi$ and $\pi'$ intersect each other in a point other than $s$ and $t$.
Indeed, consider the loop $L$ formed by the subpath of $\pi$ between $s$ and $x$,
the segment $\seg{xp}$, and the subpath of $\pi'$ between $s$ and $p$.
See \figurename~\ref{fig:loopL}.
Also, let $U$ be a disk centered at $p$ with arbitrarily small radius
and $q \in \pi' \cap \bd U$ be the point on $\pi'$ not lying in $A$.
If the loop $L$ does not separate $q$ and $t$, then the subpath of $\pi$ from $x$ to $t$ must intersect $\pi'$ in a point other than $t$ (see Fig.~\ref{fig:pathreplace}), and thus the claim follows; otherwise, the subpath of $\pi'$ from $q$ to $t$ must cross $L$ at some point other than $s$ and $t$, and thus the claim also follows.

Let $z\in \pi\cap\pi' \setminus\{s, t\}$ (see Fig.~\ref{fig:pathreplace}), and $\pi_{zt}$ and $\pi'_{zt}$ be
the subpath of $\pi$ and $\pi'$, respectively, from $z$ to $t$.
By the property of shortest paths, we have $|\pi_{zt}| = |\pi'_{zt}| = d(z,t)$.
Hence, replacing $\pi_{zt}$ by $\pi'_{zt}$ in $\pi$ results in
another shortest \st\ path $\pi''$.
If $\pi''$ lies inside $\bar{A}$, then we have $d(s,t) = |\pi''| \geq d_A(s,t)$.
Otherwise, $\pi''$ crosses $g'$ twice at $p'$ and $y$,
and thus replacing the subpath of $\pi''$ from $p'$ to $y$
by $\seg{p'y}$ results in another shortest path inside $\bar{A}$.
In either way, there is another $L_1$ shortest \st\ path of type (a), and
hence $d(s,t) = d_A(s,t)$,
a contradiction to the assumption that $t$ lies on the bisector between $x$ and $y$.
%
%
%
\end{itemize}

This finishes the proof of the lemma.
\end{proof}

\subsection{Shortest Paths in the Ocean $\calM$}

We now discuss shortest paths in the ocean $\calM$.
Recall that corridor paths are contained in canals, but their terminals are on $\bd\calM$.
By using the corridor paths and $\calM$,
finding an $L_1$ or Euclidean shortest path between two points $s$ and $t$ in $\calM$
can be reduced to the convex case since $\bd \calM$ consists of $O(h)$ convex chains.
For example, suppose both $s$ and $t$ are in $\calM$.
Then, there must be a shortest $s$-$t$ path $\pi$ that lies in the
union of $\calM$ and all corridor paths~\cite{ref:ChenA11ESA,ref:ChenL113STACS,ref:KapoorAn97}.

Consider any two points $s$ and $t$ in $\calM$. A shortest \st\ path $\pi(s,t)$ in
$\calP$ is a shortest path in $\calM$ that possibly contains some
corridor paths. Intuitively, one may view corridor paths as
``shortcuts'' among the components of the space $\calM$.
As in \cite{ref:KapoorAn97}, since
$\bd \calM$ consists of $O(h)$ convex vertices and $O(h)$ reflex chains,
the complementary region $\calP' \setminus\calM$ (where $\calP'$ refers to the union of $\calP$ and all its holes) can be partitioned into a set $\calB$ of $O(h)$ convex objects with a total of $O(n)$ vertices (e.g., by extending an angle-bisecting segment inward from each convex vertex \cite{ref:KapoorAn97}).
If we view the objects in $\mathcal{B}$ as obstacles,
then $\pi$ is a shortest path avoiding all
obstacles of $\calB$ but possibly containing some corridor paths.
Note that our algorithms can work on $\calP$ and $\calM$ directly without
using $\calB$; but for ease of exposition, we will discuss our algorithm with the help of $\calB$.


Each convex obstacle $Q$ of $\calB$ has at most four {\em extreme vertices}: the topmost, bottommost, leftmost, and rightmost vertices, and
there may be some corridor path terminals on the boundary of $Q$.
We connect the extreme vertices and the corridor path
terminals on $\partial Q$ consecutively by line segments to obtain
another polygon, denoted by $\c(Q)$ and called the {\em core} of $Q$ (see \figurename~\ref{fig:core}).
Let $\calP_{core}$ denote the complement of the union of all cores $\c(Q)$ for all $Q\in\calB$ and corridor paths in $\calP$.
Note that the number of vertices of $\calP_{core}$ is $O(h)$
and $\calM\subseteq \calP_{core}$.
For $s,t\in\calP_{core}$, let $d_{core}(s,t)$ be the geodesic distance
between $s$ and $t$ in $\calP_{core}$.

\begin{figure}[t]
\begin{minipage}[t]{0.45\linewidth}
\begin{center}
\includegraphics[totalheight=1.3in]{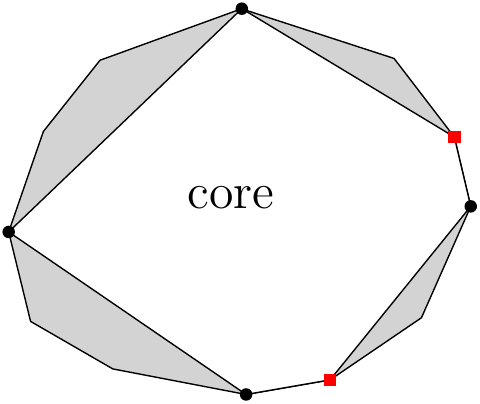}
\caption{\footnotesize Illustrating the core of a convex obstacle: the
(red) squared vertices are corridor path terminals. }
\label{fig:core}
\end{center}
\end{minipage}
\hspace*{0.02in}
\begin{minipage}[t]{0.54\linewidth}
\begin{center}
\includegraphics[totalheight=1.8in]{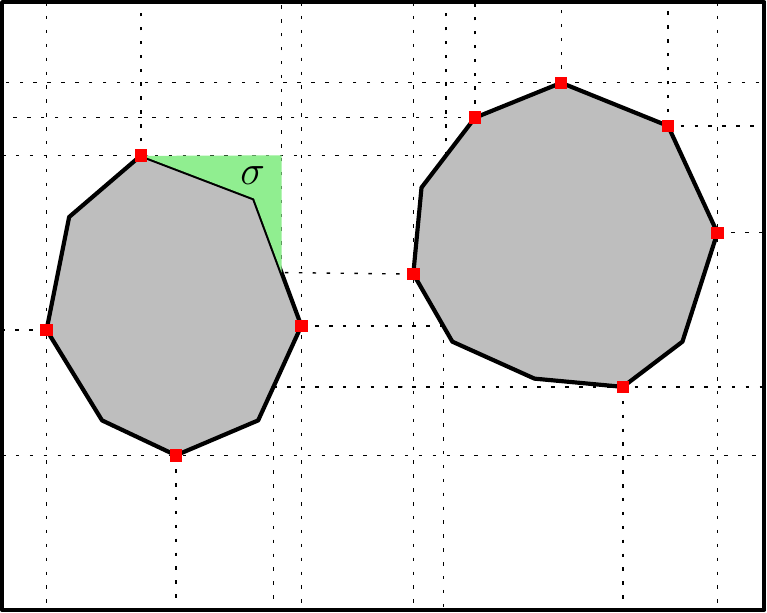}
\caption{\footnotesize Illustrating the core-based cell decomposition
$\TDM$: the (red) squared vertices are core vertices and the green cell $\sigma$ is a boundary cell.}
\label{fig:oceangrid}
\end{center}
\end{minipage}
\end{figure}

The core structure leads to a more efficient way
to find an $L_1$ shortest path between two points in $\calP$.
Chen and Wang~\cite{ref:ChenA11ESA}
proved that an $L_1$ shortest path between $s,t\in\calM$ in $\calP_{core}$
can be locally modified to an $L_1$ shortest path in $\calP$
without increasing its $L_1$ length.

\begin{lemma}[\cite{ref:ChenA11ESA}]\label{lem:core}
For any two points $s$ and  $t$ in $\calM$, $d(s,t) = d_{core}(s,t)$ holds.
\end{lemma}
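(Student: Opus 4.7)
The direction $d(s,t) \leq d_{core}(s,t)$ is immediate from the inclusion $\calP_{core} \subseteq \calP$: every $s$-$t$ path allowed in $\calP_{core}$ is allowed in $\calP$ as well, so the minimum $L_1$-length over $\calP_{core}$-paths is at least that over $\calP$-paths. The real task is to establish the reverse inequality $d_{core}(s,t) \leq d(s,t)$. My plan is to take an arbitrary $L_1$ shortest $s$-$t$ path $\pi$ in $\calP$ and locally modify it into a path $\pi'$ lying in $\calP_{core}$ with $|\pi'| \leq |\pi|$.

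Since $s, t \in \calM$, every maximal sub-arc of $\pi$ that leaves $\calM$ enters and exits $\calP \setminus \calM$ through the gate(s) of a single bay or canal, so I handle these sub-arcs case by case. For a bay $A$ with unique gate $g$, the entry and exit points $p, q$ of such a sub-arc both lie on $g \subseteq \calM$; since any $p$-$q$ curve has $L_1$ length at least $|\seg{pq}|$, replacing the sub-arc with $\seg{pq}$ is non-lengthening and keeps the path in $\calM \subseteq \calP_{core}$. A canal sub-arc that enters and exits through the same gate is handled identically. For a canal sub-arc that crosses from one gate $g$ to the other gate $g'$, I replace it by the concatenation of a segment along $g$ from the entry point to the corridor path terminal $x$, the corridor path $\pi_K$ from $x$ to $y$, and a segment along $g'$ from $y$ to the exit point. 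Since the closure of the canal is a simple polygon, Fact~\ref{fact:euc_simple} ensures that $\pi_K$ realizes the $L_1$ distance from $x$ to $y$ inside the canal, and a funnel argument analogous to the proof of Lemma~\ref{lem:bay} shows that any shortest canal-crossing path from $g$ to $g'$ must pass through both $x$ and $y$; hence this substitution is also non-lengthening.

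The main obstacle is that $\pi_K$ itself is excluded from $\calP_{core}$, so the replacement in the canal-crossing case does not literally lie in $\calP_{core}$. The remedy is that $\pi_K$ is one-dimensional while the surrounding canal region (with its cores removed) is a two-dimensional subset of $\calP_{core}$; the replacement can therefore be perturbed to run arbitrarily close to $\pi_K$ through $\calP_{core}$ with $L_1$ length tending to $|\pi_K|$, yielding $d_{core}(s,t) \leq d(s,t) + \varepsilon$ for every $\varepsilon > 0$ and hence $d_{core}(s,t) \leq d(s,t)$. After all modifications, the resulting path runs in $\calM$, along gates, or arbitrarily close to corridor paths, and therefore avoids every core $\c(Q)$, which lies strictly inside a convex obstacle $Q \in \calB$ and away from $\bd \calM$; so the modified path is indeed in $\calP_{core}$.
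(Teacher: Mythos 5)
The paper does not prove this lemma at all: it is quoted from Chen and Wang~\cite{ref:ChenA11ESA}, and the substance of that citation is precisely the direction you dismiss in one line. Your claim that $d(s,t)\le d_{core}(s,t)$ is ``immediate from the inclusion $\calP_{core}\subseteq\calP$'' rests on a false premise. The core domain is a \emph{relaxation}: each convex piece $Q\in\calB$ of $\calP'\setminus\calM$ is shrunk to its core $\c(Q)$, which keeps only the at most four extreme vertices and the corridor-path terminals of $\bd Q$, so the freed ``caps'' $Q\setminus\c(Q)$ generally contain interior points of the holes of $\calP$. Consequently a shortest path in $\calP_{core}$ may cut straight through the original obstacles and need not be a path in $\calP$ at all (likewise $\calP\not\subseteq\calP_{core}$, since bays and canals may intersect cores). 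The real content of the lemma is that an $L_1$ shortest path in $\calP_{core}$ can be locally modified, without increasing its $L_1$ length, into a path that avoids the original obstacles --- the key being that between two consecutive core vertices the boundary arc of $Q$ is both $x$- and $y$-monotone (this is exactly why the extreme vertices are retained), so a subpath crossing a cap can be replaced by an equal-length monotone path along that arc. Your proposal contains no argument for this direction, so it does not establish the lemma.

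On the secondary point: corridor paths are part of $\calP_{core}$ by definition (this is why the corridor-path terminals are made core vertices), so the $\varepsilon$-perturbation detour is unnecessary; and as stated it is also unsound, since a core does not lie ``strictly inside $Q$ and away from $\bd\calM$'' --- its vertices, including the terminals, lie on $\bd Q$, and a small neighbourhood of a corridor path need not avoid the cores. Your treatment of the other direction, $d_{core}(s,t)\le d(s,t)$ (pushing each excursion of a $\calP$-shortest path into a bay back onto its gate, and routing a canal crossing through $x$, the corridor path, and $y$), is fine in outline and essentially re-derives the known fact that some shortest path between points of $\calM$ stays in the union of $\calM$ and the corridor paths; but with the ``easy'' direction unproved, the proof as a whole has a genuine gap.
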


%
%
%

Hence, to compute $d(s,t)$ between two points $s$ and $t$ in $\calM$,
it is sufficient to consider only the cores and the corridor paths, that is, $\calP_{core}$.
We thus reduce the problem size from $O(n)$ to $O(h)$.
Let $\spm_{core}(s)$ be a shortest path map for any source point $s\in \calM$.
Then, $\spm_{core}(s)$ has $O(h)$ complexity and
can be computed in $O(h \log h)$ time~\cite{ref:ChenA11ESA}.

\subsection{Decomposition of the Ocean $\calM$}

We introduce a core-based cell decomposition $\TDM$ of the ocean $\calM$  (see \figurename~\ref{fig:oceangrid})
in order to fully exploit the advantage of the core structure
in designing algorithms computing the $L_1$ geodesic diameter and center.
For any $Q\in\calB$, the vertices of $\c(Q)$ are called
\emph{core vertices}.

The construction of $\TDM$ is analogous to that of
the previous cell decomposition $\TD$ for $\calP$.
We first extend a horizontal line
only from each \emph{core vertex} until it hits $\bd\calM$ to have a horizontal diagonal,
and then extend a vertical line from each core vertex and each endpoint of
the above horizontal diagonal.
The resulting cell decomposition induced by the above diagonals is $\TDM$. Hence, $\TDM$ is constructed in $\calM$ with respect to core vertices.
Note that $\TDM$ consists of $O(h^2)$ cells and
can be built in $O(n\log n + h^2)$ time by a typical plane sweep algorithm.
We call a cell $\sigma$ of $\TDM$
a \emph{boundary cell} if $\bd \sigma \cap \bd \calM \neq \emptyset$.
For any boundary cell $\sigma$,
the portion $\bd\sigma\cap\bd\calM$ appears as a convex chain of $Q\in\calB$
by our construction of its core and $\TD_\calM$; since
$\bd\sigma\cap\bd\calM$ may contain multiple vertices of $\calM$, the complexity of $\sigma$ may not be constant.
Any non-boundary cell of $\TDM$ is a rectangle bounded by four diagonals.
Each vertex of $\TDM$ is either an endpoint of its diagonal or an intersection
of two diagonals; thus, the number of vertices of $\TDM$ is $O(h^2)$.

Below we prove an analogue of Lemma~\ref{lem:key} for the decomposition $\TDM$ of $\calM$.
Let $V_\sigma$ be the set of vertices of $\TDM$ incident to $\sigma$.
Note that $|V_\sigma| \leq 4$.
We define the alignedness relation between two cells of $\TDM$
analogously to that for $\TD$.
We then observe an analogy to Lemma \ref{lem:key}.

\begin{lemma}  \label{lem:keym}
Let $\sigma, \sigma'$ be any two cells of $\TDM$.
If they are aligned, then $d(s,t) = |\seg{st}|$ for any $s\in\sigma$ and $t\in\sigma'$;
otherwise, there exists a shortest \st\ path in $\calP$ containing
two vertices $v\in V_\sigma$ and $v'\in V_{\sigma'}$ with
$d(s,t) = |\seg{sv}| + d(v,v') + |\seg{v't}|$.
\end{lemma}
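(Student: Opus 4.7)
Suppose $\sigma, \sigma'$ lie in a common horizontal trapezoid $\tau$ of $\TDM$; the vertical case is symmetric. Whereas the trapezoids of $\TD$ are convex, here $\tau$ may be non-convex because portions of $\bd\tau$ coincide with $\bd\calM$. However, I would argue that $\tau$ is still $xy$-monotone: since all four extreme vertices of every obstacle $Q\in\calB$ are core vertices, the portion of $\bd Q$ appearing on $\bd\tau$ lies within a single ``quadrant'' of $Q$ delimited by these extreme vertices and is therefore $xy$-monotone; moreover, the absence of any horizontal diagonal piercing $\tau$'s interior prevents the boundary chain from doubling back in $y$. Consequently, $s$ and $t$ can be joined by a monotone $L_1$ staircase inside $\tau\subseteq\calM\subseteq\calP$, which by Fact~\ref{fact:l1length} has length $|\seg{st}|$; combined with the trivial $L_1$ lower bound, this gives $d(s,t) = |\seg{st}|$.

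\textbf{Unaligned case.} Let $\pi$ be any $L_1$-shortest \st\ path in $\calP$. First, I would normalize $\pi$ without increasing its length: whenever $\pi$ enters and exits a common bay or canal through the same gate, I replace the subpath inside by a monotone straight segment along the gate, using Fact~\ref{fact:l1length} together with the simple-polygon structure of bays/canals (Fact~\ref{fact:euc_simple}). After normalization, $\pi$ lies in $\calM$ except for possible traversals of corridor paths across canals, and every point at which $\pi$ leaves $\calM$ is a corridor path terminal---hence a core vertex and in particular a vertex of $\TDM$. With $\pi$ so normalized, the argument of Lemma~\ref{lem:key} carries over: since $\sigma$ and $\sigma'$ are unaligned, $\pi$ must cross both a horizontal diagonal $e_1$ and a vertical diagonal $e_2$ bounding $\sigma$, for otherwise $\pi$ would remain in a horizontal or vertical trapezoid that contains both $\sigma$ and $\sigma'$, forcing alignment. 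Taking $p_1, p_2$ as the first intersections of $\pi$ with $e_1, e_2$ along $\pi$ from $s$, and $v = e_1\cap e_2\in V_\sigma$, the $L$-shape $\seg{p_1 v}\cup\seg{v p_2}$ is monotone, lies on the diagonals inside $\calM$, and has length $|\seg{p_1 p_2}|$ by Fact~\ref{fact:l1length}; splicing it into $\pi$ produces another shortest \st\ path through $v$. The symmetric argument near $t$ yields a shortest path $\pi''$ passing through both $v\in V_\sigma$ and $v'\in V_{\sigma'}$. Finally, invoking the aligned case for the pair $(s,v)$ in $\sigma$ and for $(v',t)$ in $\sigma'$ gives $d(s,v) = |\seg{sv}|$ and $d(v',t) = |\seg{v't}|$, whence $d(s,t) = |\pi''| = |\seg{sv}| + d(v,v') + |\seg{v't}|$.

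\textbf{Main obstacle.} The principal new difficulty relative to Lemma~\ref{lem:key} is the bay/canal normalization step: because boundary cells of $\TDM$ can border $\bd\calM$, a naive shortest path might escape $\sigma$ through a gate before crossing any diagonal of $\TDM$, breaking the first-crossing argument. Justifying that such escapes can always be rerouted---without lengthening---either back into $\calM$ along a gate segment or across a canal along a corridor path is the crux of the proof; once this reduction is secured, the rest closely parallels Lemma~\ref{lem:key}.
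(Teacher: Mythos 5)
Your aligned case is fine and matches the paper's argument (the region between two consecutive parallel diagonals of $\TDM$ is bounded by the diagonals and monotone chains, so a monotone staircase exists and Fact~\ref{fact:l1length} applies). The unaligned case, however, has a genuine gap, and it is exactly the point where this lemma differs from Lemma~\ref{lem:key}. You assert that if $\sigma$ and $\sigma'$ are unaligned then $\pi$ must cross both a horizontal and a vertical diagonal bounding $\sigma$, ``for otherwise $\pi$ would remain in a horizontal or vertical trapezoid containing both cells.'' That dichotomy is valid in $\TD$, where diagonals run all the way to $\bd\calP$, but it fails in $\TDM$: your normalized path may leave $\calM$ through a corridor path terminal $x$ (a point of $\bd\calM$, possibly lying on $\bd\sigma$ itself or on the boundary of a cell aligned with $\sigma$) and re-emerge far away, without ever crossing, say, a vertical diagonal bounding $\sigma$ and without staying in any trapezoid that contains both cells. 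The paper's proof explicitly splits the walk along $\pi$ into two subcases --- either one meets a horizontal and a vertical diagonal bounding $\sigma$, or one enters a corridor path via its terminal $x$ first --- and handles the second one: since $x$ is a core vertex it is a vertex of $\TDM$ with diagonals through it; if $x\in V_\sigma$ one takes $v=x$ (using $d(s,x)=|\seg{sx}|$ because a cell is aligned with itself), and otherwise $x$ belongs to a cell aligned with $\sigma$ and the situation reduces to the first subcase. Your proposal never treats this corridor-entry case, so the splicing argument at $v=e_1\cap e_2$ cannot be carried out in general.

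A secondary, smaller issue is the normalization step itself. Your rerouting only covers the situation where $\pi$ enters and exits a bay or canal through the same gate; when $\pi$ traverses a canal from one gate to the other, the claim that the interior portion can be replaced by a corridor-path traversal without increasing $L_1$ length is nontrivial and is precisely the content of Lemma~\ref{lem:core} (Chen and Wang), which the paper simply invokes to get a shortest \st\ path lying in $\calM$ together with the corridor paths. You should cite that lemma rather than re-derive it, and then add the missing corridor-terminal subcase; with those two repairs your argument coincides with the paper's.
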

\begin{proof}
We first discuss the case where $\sigma$ and $\sigma'$ are aligned.
In this case, they are bounded by two consecutive parallel diagonals of $\TDM$,
and let $S\subset \calM$ be the region in between the two diagonals.
Since $S$ consists of two monotone concave chains and the two diagonals
by our construction of $\TDM$,
it is not difficult to see that any $s\in \sigma$ and $t\in\sigma'$
can be joined by a monotone path $\pi$ inside $S$.
This implies that $|\pi| = |\seg{st}| = d(s,t)$ by Fact~\ref{fact:l1length}.

Next, we consider the unaligned case.
Suppose that $\sigma$ and $\sigma'$ are unaligned.
By Lemma~\ref{lem:core},
there exists a shortest \st\ path $\pi$ in $\calP$ such that
$\pi$ lies inside the union $\calM$ and all the corridor paths of $\calP$.
%
Our proof for this case is analogous to that of Lemma~\ref{lem:key}.
Since $\sigma$ and $\sigma'$ are unaligned,
there are two possibilities when we walk along $\pi$ from $s$ to $t$:
either we meet a horizontal diagonal $e_1$ and a vertical diagonal $e_2$ of $\TDM$
that bound $\sigma$, or enter a corridor path via its terminal $x$.
In the former case, we can apply the same argument as done in
the proof of Lemma~\ref{lem:key} to show that $\pi$ can be modified
to pass through a vertex $v\in V_\sigma$ with $v=e_1 \cap e_2$ without increasing
the length of the resulting path.
In the latter case, observe by our construction of $\TDM$ that
$x$ is also a vertex of $\TDM$ and there is a diagonal extended from $x$.
If $x\in V_\sigma$, we are done since $d(s,x) = |\seg{sx}|$ as discussed above
(any cell $\sigma$ is aligned with itself).
Otherwise, there is a unique cell $\sigma'' \neq \sigma \in\TDM$ with $x\in V_{\sigma''}$
that is aligned with $\sigma$, and there is a common diagonal $e$ bounding
$\sigma$ and $\sigma''$.
In this case, since $\pi$ passes through $x$, it indeed intersects two diagonals,
which means that this is the former case.
\end{proof}

\section{Improved Algorithms}
\label{sec:improved}

In this section, we further explore the geometric
structures and give more observations about our decomposition. These results, together with our results in Section~\ref{sec:corridor}, help us to give
improved algorithms that compute
the diameter and center, using a similar
algorithmic framework as in Section~\ref{sec:td}.

\subsection{The Cell-to-Cell Geodesic Distance Functions}

Recall that our preliminary algorithms in Section~\ref{sec:td}
rely on the nice behavior of the cell-to-cell
geodesic distance function: specifically, $d$ restricted to $\sigma\times\sigma'$
for any two cells $\sigma, \sigma'\in\TD$ is the lower envelope of $O(1)$ linear functions.
We now have two different cell decompositions, $\TD$ of $\calP$ and $\TDM$ of $\calM$.
Here, we observe analogues of Lemmas~\ref{lem:key} and \ref{lem:keym}
for any two cells in $\TD \cup \TDM$,
by extending the alignedness relation between cells in $\TD$ and $\TDM$, as follows.

Consider the geodesic distance function $d$ restricted
to $\sigma \times\sigma'$ for any two cells $\sigma, \sigma'\in \TD\cup\TDM$.
We call a cell $\sigma \in \TD \cup \TDM$ \emph{oceanic}
if $\sigma \subset \calM$, or \emph{coastal}, otherwise.
If both $\sigma, \sigma' \in \TD\cup\TDM$ are coastal,
then $\sigma, \sigma' \in \TD$ and the case is well understood
as discussed in Section~\ref{sec:td}. Otherwise, there are two cases:
the {\em ocean-to-ocean case} where both $\sigma$ and $\sigma'$ are
oceanic, and the {\em coast-to-ocean case} where only one of them is
oceanic. We discuss the two cases below.


\subparagraph{Ocean-to-ocean}
For the ocean-to-ocean case, we extend the alignedness relation for all oceanic cells in $\TD \cup \TDM$.
To this end, when both $\sigma$ and $\sigma'$ are in $\TD$ or $\TDM$,
the alignedness has already been defined.
For any two oceanic cells $\sigma \in \TD$ and $\sigma'\in \TDM$, we
define their alignedness relation in the following way. If $\sigma$ is
contained in a cell $\sigma''\in\TDM$ that is aligned with $\sigma'$,
then we say that $\sigma$ and $\sigma'$ are \emph{aligned}. However,
$\sigma$ may not be contained in a cell of $\TDM$ because the
endpoints of horizontal diagonals of $\TDM$ that are on bay/canal
gates are not vertices of $\TD$ and those endpoints create vertical
diagonals in $\TDM$  that are not in $\TD$. To resolve this issue, we
augment $\TD$ by adding the vertical diagonals of $\TDM$ to $\TD$.
Specifically, for each vertical diagonal $l$ of $\TDM$, if no diagonal
in $\TD$ contains $l$, then we add $l$ to $\TD$ and extend $l$
vertically until it hits the boundary of $\calP$. In this way, we add
$O(h)$ vertical diagonals to $\TD$, and the size of $\TD$ is still
$O(n^2)$. Further, all results we obtained before are still applicable
to the new $\TD$. With a little abuse of notation, we still use $\TD$
to denote the new version of $\TD$. Now, for any two oceanic cells
$\sigma \in \TD$ and $\sigma'\in \TDM$, there must be a unique cell
$\sigma''\in\TDM$ that contains $\sigma$, and $\sigma$ and $\sigma'$
are defined to be {\em aligned} if and only if $\sigma''$ and
$\sigma'$ are aligned.
Lemmas~\ref{lem:key} and~\ref{lem:keym} are naturally extended as
follows, along with this extended alignedness relation.

\begin{lemma} \label{lem:oceanic}
Let $\sigma, \sigma'\in\TD\cup\TDM$ be two oceanic cells.
For any $s\in\sigma$ and $t\in\sigma'$,
it holds that $d(s,t) = |\seg{st}|$ if $\sigma$ and $\sigma'$ are aligned;
otherwise, there exists a shortest \st\ path that passes through
a vertex $v\in V_\sigma$ and a vertex $v' \in V_{\sigma'}$.
\end{lemma}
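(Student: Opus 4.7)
The plan is to dispatch on where $\sigma$ and $\sigma'$ come from. If both cells lie in $\TD$ or both lie in $\TDM$, the statement is exactly Lemma~\ref{lem:key} or Lemma~\ref{lem:keym}, respectively, so nothing new is needed. The only fresh case is the mixed situation, which by symmetry I may assume is $\sigma \in \TD$ and $\sigma' \in \TDM$. By the augmentation of $\TD$ described just before the statement, $\sigma$ is contained in a unique cell $\sigma'' \in \TDM$, and (extended) alignedness is precisely $\sigma \sim \sigma' \iff \sigma'' \sim \sigma'$ inside $\TDM$.

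For the aligned mixed case I would just invoke the aligned half of Lemma~\ref{lem:keym} on $\sigma''$ and $\sigma'$: for any $s \in \sigma \subseteq \sigma''$ and any $t \in \sigma'$, $d(s,t) = |\seg{st}|$. For the unaligned mixed case I would argue in two stages. First, applying the unaligned half of Lemma~\ref{lem:keym} to $\sigma''$ and $\sigma'$ yields a shortest \st\ path of the form $s \leadsto v'' \leadsto v' \leadsto t$ with $v'' \in V_{\sigma''}$, $v' \in V_{\sigma'}$, and total length $|\seg{sv''}| + d(v'',v') + |\seg{v't}|$. This already delivers the required $v' \in V_{\sigma'}$, so the remaining task is to refine the $s$-to-$v''$ portion of the path so it also runs through some $v \in V_\sigma$. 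Because $v''$ is a vertex of $\TDM$ and every vertex of $\TDM$ is a vertex of the augmented $\TD$ (a core vertex is a vertex of $\calP$, and any other vertex of $\TDM$ is the crossing of a horizontal diagonal of $\TDM$---already a $\TD$-diagonal---with a vertical diagonal of $\TDM$ that was added to $\TD$), there is an oceanic cell $\sigma_2\in\TD$ with $v'' \in V_{\sigma_2}$. If for some such choice $\sigma$ and $\sigma_2$ are unaligned in $\TD$, Lemma~\ref{lem:key} produces a shortest $s$-$v''$ path through some $v \in V_\sigma$ (taking $v''$ itself as the vertex on the $\sigma_2$ side, since the path terminates at $v''$); splicing this into $\pi$ yields the desired shortest \st\ path.

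The main obstacle is the residual possibility that \emph{every} $\TD$-cell incident to $v''$ happens to be aligned with $\sigma$ in the original $\TD$-sense, in which case Lemma~\ref{lem:key} only gives $d(s,v'') = |\seg{sv''}|$ without an intermediate vertex of $V_\sigma$. To dispose of this, I would carry out a short case analysis of how $\sigma''$ is subdivided into $\TD$-cells by the extra $\TD$-diagonals (those not already present in $\TDM$), exploiting the $\TDM$-unalignedness of $\sigma''$ and $\sigma'$ to show that at least one $\TD$-cell touching $v''$ must fail $\TD$-alignedness with $\sigma$---unless $v''$ already belongs to $V_\sigma$, in which case one simply takes $v = v''$. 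The geometric content is that if $\sigma$ were $\TD$-aligned with all (up to four) $\TD$-cells around $v''$, then $v''$ would lie on the boundary of $\sigma$ itself. Combined with the $v'$ produced on the $\sigma'$ side by Lemma~\ref{lem:keym}, this completes the construction of a shortest \st\ path passing simultaneously through $v \in V_\sigma$ and $v' \in V_{\sigma'}$.
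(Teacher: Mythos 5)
Your proposal dispatches the cases the same way the paper does, and for the aligned mixed case it coincides verbatim with the paper's argument (invoke Lemma~\ref{lem:keym} on the $\TDM$-cell $\sigma''$ containing $\sigma$). What you add is an explicit treatment of the \emph{unaligned} mixed case, which the paper's written proof in fact stops short of: the paper's text addresses only the ``both in the same decomposition'' situation and the ``aligned'' mixed situation, and is silent on the mixed unaligned one, presumably regarding it as a routine combination of the two proofs. Your two-stage approach---first apply the unaligned half of Lemma~\ref{lem:keym} to $\sigma''$ and $\sigma'$ to obtain $v'' \in V_{\sigma''}$ and $v' \in V_{\sigma'}$ on a shortest path, then notice that $d(s,v'') = |\seg{sv''}|$ forces the $s$-to-$v''$ subpath to be monotone and apply Lemma~\ref{lem:key} to $\sigma$ and a $\TD$-cell $\sigma_2$ incident to $v''$---is sound. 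Your verification that every $\TDM$ vertex becomes a vertex of the augmented $\TD$ is exactly the reason the paper performs the augmentation, and your final geometric observation (a $\TD$-cell aligned with all $\TD$-cells around a vertex $v''$ must have $v''$ on its own boundary, i.e.\ in $V_\sigma$, because the quadrant cells around an interior intersection belong to four distinct horizontal-by-vertical trapezoid pairs and no cell sits in the same horizontal or vertical trapezoid as all of them) does close the ``residual possibility'' you flag. So your proof is correct and more detailed than the paper's; the paper simply omits the unaligned mixed case, whereas you argue it out, at the cost of one auxiliary splice whose ingredients are exactly the ones you name.
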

\begin{proof}
If both of $\sigma$ and $\sigma'$ belong to $\TD$ or $\TDM$,
Lemmas~\ref{lem:key} and~\ref{lem:keym} are applied.
Suppose that $\sigma\in\TD$ and $\sigma'\in\TDM$.
If they are aligned, then $\sigma$ is contained in $\sigma''\in\TDM$ that is
aligned with $\sigma'$ by definition.
Hence, we have $d(s,t) = |\seg{st}|$ by Lemma~\ref{lem:keym}.
\end{proof}

\subparagraph{Coast-to-ocean}
We then turn to the coast-to-ocean case.
We now focus on a bay or canal $A$.
Since $A$ has gates, we need to somehow incorporate the influence of
its gates into the decomposition $\TD$. To this end,
we add $O(1)$ additional diagonals into $\TDM$ as follows:
extend a horizontal line from each endpoint of each gate of $A$ until it
hits $\bd \calM$, and then extend a vertical line from each endpoint of each gate of $A$
and each endpoint of the horizontal diagonals that are added above.
Let $\TDM^A$ denote the resulting decomposition of $\calM$.
Note that there are some cells of $\TDM$ each of which is partitioned into $O(1)$ cells of $\TDM^A$
but the combinatorial complexity of $\TDM^A$ is still $O(h^2)$.
For any gate $g$ of $A$,
let $C_g\subset\calP$ be the cross-shaped region of points in $\calP$
that can be joined with a point on $g$ by a vertical or horizontal line segment inside $\calP$.
Since the endpoints of $g$ are also obstacle vertices,
the boundary of $C_g$ is formed by four diagonals of $\TD$.
Hence, any cell in $\TD$ or $\TDM^A$ is either completely contained in $C_g$
or interior-disjoint from $C_g$.
A cell of $\TD$ or $\TDM^A$ in the former case is said to be \emph{$g$-aligned}.

In the following, 
we let $\sigma\in\TD$ be
any coastal cell that intersects $A$ and $\sigma'\in\TDM^A$ be any oceanic cell.
Depending on whether $\sigma$ and $\sigma'$ are $g$-aligned for a gate $g$ of $A$,
there are three cases: (1) both cells are $g$-aligned; (2) $\sigma'$ is not $g$-aligned; (3) $\sigma'$ is $g$-aligned but $\sigma$ is not. Lemma \ref{lem:c2o-oo} handles the first case. Lemma \ref{lem:c2o-oceanic} deals with a special case for the latter two cases.
Lemma \ref{lem:c2o-*x} is for the second case. Lemma \ref{lem:c2o-xo} is for the third case and Lemma \ref{lem:c2o_vg} is for proving Lemma \ref{lem:c2o-xo}. The proof of Lemma \ref{lem:dist_ex} summarizes the entire algorithm for all three cases.

\begin{lemma} \label{lem:c2o-oo}
Suppose that $\sigma$ and $\sigma'$ are both $g$-aligned for a gate $g$ of $A$.
Then, for any $s\in\sigma$ and $t\in\sigma'$,
we have $d(s,t) = |\seg{st}|$.
\end{lemma}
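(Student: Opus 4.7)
The plan is to prove both inequalities separately. For the lower bound $d(s,t) \geq |\seg{st}|$, I use the general fact that any path in $\Real^2$ between two points has $L_1$ length at least $|x_s - x_t| + |y_s - y_t| = |\seg{st}|$. For the upper bound, by Fact~\ref{fact:l1length} it suffices to exhibit an $L_1$-monotone path from $s$ to $t$ lying in $\calP$, since any such path has length exactly $|\seg{st}|$.

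Let $c, d$ be the two endpoints of the gate $g$; assume without loss of generality $x_c < x_d$ and $y_c < y_d$. I decompose the cross $C_g$ as $C_g = H \cup V$, where $H$ is the horizontal shadow of $g$ (the set of points of $\calP$ joined to $g$ by a horizontal segment in $\calP$) and $V$ is the analogous vertical shadow. As stated in the paragraph immediately preceding the lemma, the inner boundary of $C_g$ in $\calP$ is formed only by the four diagonals of $\TD$ extending horizontally and vertically from $c$ and $d$; consequently no obstacle penetrates $C_g$, the sets $H$ and $V$ are convex, and their intersection $H \cap V$ equals the axis-aligned rectangle $R := [x_c, x_d] \times [y_c, y_d]$, which lies entirely in $\calP$. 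Since $\sigma$ and $\sigma'$ are both $g$-aligned, $s, t \in C_g = H \cup V$.

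I split into three cases. First, if $s, t \in H$, then by convexity $\seg{st} \subset H \subset \calP$; the straight segment is $L_1$-monotone, so $d(s,t) \leq |\seg{st}|$. The case $s, t \in V$ is symmetric. Otherwise, up to symmetry, $s \in H \setminus V$ and $t \in V \setminus H$. From $s \in H$ we obtain $y_s \in [y_c, y_d]$, and from $t \in V$ we obtain $x_t \in [x_c, x_d]$, so the corner point $q := (x_t, y_s)$ lies in $R = H \cap V \subset \calP$. The horizontal segment $\seg{sq}$ joins two points of the convex set $H$, hence lies in $H$; similarly $\seg{qt} \subset V$. Their concatenation is an L-shaped, and in particular $L_1$-monotone, path in $\calP$ from $s$ to $t$ of total length $|x_t - x_s| + |y_t - y_s| = |\seg{st}|$.

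The main obstacle is the structural claim underlying Case 3: that the corner $q$ actually lies in $\calP$, or equivalently that the central rectangle $R$ of the cross is obstacle-free and that the segments $\seg{sq}$ and $\seg{qt}$ stay in $\calP$. This is exactly the earlier assertion that the boundary of $C_g$ inside $\calP$ consists only of the four diagonals through $c$ and $d$, the same property invoked by the paper to conclude that every cell of $\TD$ or $\TDM^A$ is either entirely contained in $C_g$ or interior-disjoint from it.
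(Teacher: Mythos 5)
Your proof takes the same route as the paper's one-sentence argument: exhibit an $L_1$-monotone (here, L-shaped rectilinear) path from $s$ to $t$ inside $\calP$, then apply Fact~\ref{fact:l1length} to get $d(s,t) \le |\seg{st}|$, the reverse inequality being trivial. You fill in what the paper leaves implicit via the split $C_g = H \cup V$ and a corner point $q \in R$; the paper simply asserts that the L-shaped path exists. So the approach is essentially identical.

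The one thing to flag is that the intermediate facts you invoke — $H$ and $V$ convex, $H\cap V = R$, and in particular $R = [x_c,x_d]\times[y_c,y_d] \subseteq \calP$ — are stated as consequences of the sentence preceding the lemma, but they do not fall out of it in an entirely obvious way. The side of $R$ on the bay side of $g$ can, in principle, be partly eaten by the hole if the hole boundary between $c$ and $d$ dips into $R$, and similarly $H$ and $V$ need not be convex if their $\bd\calP$ sides are not convex chains. This is precisely the delicate geometry that the phrase ``the boundary of $C_g$ is formed by four diagonals of $\TD$'' is meant to encode, and the paper does not spell it out either, so you are matching the paper's level of rigour; but since your write-up makes the intermediate claims explicit, a careful reader would want a line or two explaining why the bay side of $R$ cannot intrude into the hole given that $c$ and $d$ are consecutive vertices of a convex chain of $\bd\calM$ — otherwise the choice of corner $q = (x_t, y_s)$ might land outside $\calP$ and one must instead argue that at least one of the two L-corners works.
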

\begin{proof}
It suffices to observe that $s\in\sigma$ and $t\in\sigma'$ in $C_g$ can be joined by
an L-shaped rectilinear path, whose length is equal to the $L_1$ distance between them
by Fact~\ref{fact:l1length}.
\end{proof}

Consider any path $\pi$ in $\calP$ from $s\in \sigma$ to $t\in\sigma'$,
and assume $\pi$ is directed from $s$ to $t$.
For a gate $g$ of $A$, we call $\pi$ \emph{$g$-through}
if $g$ is the last gate of $A$ crossed by $\pi$. 
The path $\pi$ is a {\em shortest $g$-through path} if its
$L_1$ length is the smallest among all $g$-through paths from $s$ to $t$.
Suppose $\pi$ is a shortest path from $s$ to $t$ in $\calP$. Since
$\sigma$ may intersect $\calM$, if $s\in \sigma$ is not in $A$, then
$\pi$ may {\em avoid} $A$ (i.e., $\pi$ does not intersect $A$). If $A$
is a bay, then either $\pi$ avoids $A$ or $\pi$ is a shortest
$g$-through path for the only gate $g$ of $A$; otherwise (i.e., $A$ is
a canal), either $\pi$ avoids $A$ or $\pi$ is a shortest $g$-through
or $g'$-through path for the two gates $g$ and $g'$ of $A$.
We have the following lemma, which is self-evident.

\begin{lemma} \label{lem:c2o-oceanic}
Suppose that for any gate $g$ of $A$, at least one of $\sigma$ and
$\sigma'$ is not $g$-aligned.
For any $s\in \sigma$ and $t \in \sigma'$,
if there exists a shortest \st\ path that avoids $A$,
then a shortest \st\ path passes through
a vertex $v\in V_\sigma$ and another vertex $v'\in V_{\sigma'}$.
\end{lemma}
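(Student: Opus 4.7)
The plan is to imitate the redirection arguments of Lemmas \ref{lem:key}, \ref{lem:keym}, and \ref{lem:oceanic}, using the hypothesis that $\pi$ avoids $A$ in order to effectively treat $\sigma$ as if $A$ were absent. Let $\pi$ be a shortest $s$-$t$ path that avoids $A$. Since $\pi$ does not enter $A$, the endpoint $s\in\sigma$ does not lie in $\intr A$; consequently, $s$ lies in the closure of $\calM$ (or in some other bay/canal whose gates $\pi$ may still cross, which does not affect the local argument near $\sigma$). By Lemma \ref{lem:core} I may moreover assume that $\pi$ lies in $\calP_{core}$, so the portions of $\pi$ near $\sigma$ and near $\sigma'$ live entirely in the ocean $\calM$.

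I would then run the redirection argument for $\sigma'$ exactly as in the proof of Lemma \ref{lem:keym}: since $\sigma'\in\TDM^A$ is oceanic, either $\pi$ enters $\sigma'$ through a corridor-path terminal already belonging to $V_{\sigma'}$, or it crosses a horizontal diagonal and a vertical diagonal of $\TDM^A$ bounding $\sigma'$, which allows me to reroute the intervening subpath through the vertex $v'\in V_{\sigma'}$ at the corresponding corner by Fact \ref{fact:l1length} without changing its length. For $\sigma$ itself, the boundary of $\sigma$ inside $\calM$ is still composed of horizontal and vertical diagonals of $\TD$; the only pieces of $\bd\sigma$ separating $\sigma\cap\calM$ from $\sigma\cap A$ are sub-segments of a gate of $A$, and $\pi$ never crosses them because $\pi$ avoids $A$. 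The same L-shape construction therefore reroutes $\pi$ through some vertex $v\in V_\sigma$.

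The role of the hypothesis that $\sigma$ and $\sigma'$ are not both $g$-aligned for any gate $g$ of $A$ is exactly to exclude the degenerate scenario of Lemma \ref{lem:c2o-oo}, where $s$ and $t$ both lie in one cross region $C_g$ and are connected by a purely monotone L-shaped path that might cross only a single bounding diagonal of $\sigma$ (and only a single one of $\sigma'$); outside of that case $\pi$ must cross both a horizontal and a vertical bounding diagonal of each cell, and the two redirections above then both apply. The main obstacle I anticipate is confirming that the coastal geometry of $\sigma$ does not admit exotic shortest paths that exit $\sigma$ through only one diagonal while sliding along $\bd\calM$ near $A$; the $g$-alignedness hypothesis together with the assumption that $\pi$ avoids $A$ is precisely what rules such routes out.
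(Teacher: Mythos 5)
The paper offers no proof at all for this lemma---it is introduced with ``We have the following lemma, which is self-evident''---so there is nothing to compare your proof against line by line. Your reconstruction is a reasonable one and captures the ideas that presumably make the lemma ``self-evident'' to the authors: once $\pi$ is known to avoid $A$, both endpoints sit effectively in $\calM$, one may pass to $\calP_{core}$ via Lemma~\ref{lem:core}, and the L-shaped redirection of Lemmas~\ref{lem:key}/\ref{lem:keym} does the rest.

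The one genuine gap is the point you yourself flag at the end: you assert that the $g$-alignedness hypothesis rules out the degenerate ``aligned'' scenario in which $\pi$ crosses only one bounding diagonal of $\sigma$, but you do not prove it. This is not automatic, and it is the crux of why the hypothesis appears in the lemma at all. Here is how to close it. If $\pi$ does not cross both a horizontal and a vertical diagonal of $\TD$ bounding $\sigma$, then $\pi$ is confined to a single horizontal (or vertical) trapezoid $S$ of the corresponding trapezoidal map of $\calP$, so $t\in S$ and hence $\sigma'\cap S\neq\emptyset$. Since $\sigma$ intersects $A$ and some shortest path from $s\in\sigma$ avoids $A$, $\sigma$ is not contained in $A$, so $\sigma$ crosses a gate $g$ of $A$; in particular $g\cap\sigma\neq\emptyset$, which forces $\sigma\subset C_g$ (by the dichotomy that every cell is either contained in or interior-disjoint from $C_g$), and moreover $g\cap S\neq\emptyset$. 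As $S$ is a convex, obstacle-free trapezoid, the horizontal (respectively vertical) segment from any point of $\sigma'\cap S$ to a point of $g\cap S$ lies in $S\subset\calP$, so $\sigma'$ meets $C_g$ and therefore $\sigma'\subset C_g$. Thus $\sigma$ and $\sigma'$ are both $g$-aligned, contradicting the hypothesis. The symmetric argument handles a monotone escape from $\sigma'$. With this in hand, $\pi$ necessarily crosses both a horizontal and a vertical diagonal at each end, and the redirection you describe completes the proof. Adding this short argument would turn your sketch into a complete proof; as written, it leaves the crucial use of the hypothesis unjustified.
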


We then focus on shortest $g$-through paths
according to the $g$-alignedness of $\sigma$ and $\sigma'$.

\begin{lemma} \label{lem:c2o-*x}
Suppose $\sigma'$ is not $g$-aligned for a gate $g$ of $A$ and there are no
shortest \st\ paths that avoid $A$. Then, for any $s\in\sigma$ and $t\in\sigma'$,
there exists a shortest $g$-through \st\ path
containing a vertex $v\in V_\sigma$ and a vertex
$v'\in V_{\sigma'}$.
\end{lemma}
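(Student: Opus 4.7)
The plan is to mimic the two-step L-shape modification used in the proofs of Lemmas~\ref{lem:key} and~\ref{lem:keym}: starting from an arbitrary shortest $g$-through \st\ path $\pi$, perform two successive length-preserving local modifications, first near $t$ routing through a vertex of $V_{\sigma'}$, then near $s$ routing through a vertex of $V_\sigma$. Write $\pi=\pi_1\cdot\pi_2$, where $x\in g$ is the last point at which $\pi$ meets $g$; by the $g$-through property the sub-path $\pi_2$ from $x$ to $t$ lies entirely in $\calM$.

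For the first modification, I would apply the argument of Lemma~\ref{lem:keym} to $\pi_2$ using the decomposition $\TDM^A$ of $\calM$. The key observation is that the cell of $\TDM^A$ touching $x$ is adjacent to $g$ and hence $g$-aligned (contained in $C_g$), while $\sigma'$ is not $g$-aligned by hypothesis; since the four diagonals of $\TDM^A$ through the endpoints of $g$ bound $C_g$ on all sides, an analysis analogous to the non-alignment case of Lemma~\ref{lem:keym} forces $\pi_2$ to cross both a horizontal diagonal $e_1'$ and a vertical diagonal $e_2'$ of $\TDM^A$ bounding $\sigma'$. Setting $v':=e_1'\cap e_2'\in V_{\sigma'}$ and replacing the appropriate portion of $\pi_2$ by the L-shape through $v'$ preserves the length by Fact~\ref{fact:l1length} and lies entirely in $\calM$, so no gate crossings are disturbed. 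The resulting path $\pi'$ is still a shortest $g$-through \st\ path and now contains $v'$.

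The second modification applies the argument of Lemma~\ref{lem:key} to $\pi'$ near $s$ using $\TD$. Let $e_1$ and $e_2$ be the first horizontal and first vertical diagonals of $\TD$ bounding $\sigma$ that $\pi'$ meets when traversed from $s$, with respective crossing points $p_1,p_2$, and set $v:=e_1\cap e_2\in V_\sigma$. Replace the sub-path of $\pi'$ between $p_1$ and $p_2$ by the L-shape $\seg{p_1v}\cup\seg{vp_2}$. Since $\sigma$ is convex and the L-shape is monotone inside $\sigma$, its length equals $|\seg{p_1p_2}|$ by Fact~\ref{fact:l1length}; this must match the length of the replaced sub-path, because otherwise the swap would produce a strictly shorter $g$-through \st\ path (the modification inside $\sigma$ does not affect gate crossings occurring outside $\sigma$), contradicting the minimality of $\pi'$. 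The final result is a shortest $g$-through \st\ path containing both $v$ and $v'$.

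The main obstacle is verifying that both $e_1$ and $e_2$ (and analogously $e_1'$ and $e_2'$) really exist, i.e., that the relevant sub-path is not confined to a single horizontal or vertical trapezoid of $\TD$ (resp., $\TDM^A$) containing $\sigma$ (resp., $\sigma'$). Suppose, for contradiction, that after $s$ the path $\pi'$ never crosses a vertical diagonal of $\TD$ bounding $\sigma$; then $\pi'$ would stay inside the unique horizontal trapezoid of $\TD$ containing $\sigma$, so in particular $v'$ and the crossing $x\in g$ would lie in this same trapezoid. Using the cross-shape structure of $C_g$ (whose bounding diagonals belong to $\TD$) together with the two hypotheses of the lemma, one derives a contradiction: either $v'$ would end up $g$-aligned, forcing $\sigma'\subset C_g$ and contradicting the assumption that $\sigma'$ is not $g$-aligned, or the horizontal-trapezoid confinement would produce a shortest \st\ path avoiding $A$, contradicting the other hypothesis. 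A symmetric argument rules out the absence of $e_1$ in $\TD$, and the same reasoning handles $e_1'$ and $e_2'$ in $\TDM^A$, completing the proof.
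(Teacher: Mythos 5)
Your proposal follows essentially the same strategy as the paper's proof: take a shortest $g$-through path, observe that since $\sigma'$ lies outside the cross-shaped region $C_g$ and $\pi$ passes through a point of $g\subset C_g$, the path must escape $C_g$ across both a horizontal and a vertical bounding diagonal, and conclude that $\pi$ must cross a horizontal and a vertical diagonal bounding $\sigma$ (resp.\ $\sigma'$), after which the L-shape modification of Lemmas~\ref{lem:key}/\ref{lem:keym} finishes the job. The order of the two modifications (you do the $t$-side first via $\TDM^A$, the paper does the $s$-side first via $\TD$) is cosmetic.

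A few points where your write-up diverges from or falls short of the paper's argument. First, you never justify that a $g$-through \st\ path exists at all; the paper opens the proof with exactly this (the hypothesis that no shortest \st\ path avoids $A$ forces $s\in A$ in the bay case, and in the canal case one argues separately for $s\in A$ and $s\notin A$). Second, in your final paragraph the implication is inverted: \emph{not} crossing a vertical diagonal of $\TD$ bounding $\sigma$ confines $\pi'$ to the \emph{vertical} trapezoid of $\TD$ containing $\sigma$, not the horizontal one. Third, your ``either $\ldots$ or $\ldots$'' dichotomy for the contradiction is muddled; the intended contradiction is simply that confinement to a single trapezoid would force $t$ (and hence $\sigma'$, by the contained-or-disjoint dichotomy for cells with respect to $C_g$) inside $C_g$, contradicting the hypothesis that $\sigma'$ is not $g$-aligned. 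Fourth, your assertion that the sub-path $\pi_2$ from the last $g$-crossing to $t$ ``lies entirely in $\calM$'' is not automatic (it could pass through other bays/canals or corridor paths); what is actually needed is only that $\pi_2$ can be replaced, without lengthening and without disturbing $g$-throughness, by a path that crosses the relevant diagonals of $\TDM^A$, which your $C_g$-escape argument does supply. None of these amounts to a wrong approach, but the first two should be fixed and the third and fourth tightened.
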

\begin{proof}
If $A$ is a bay, since there are no shortest \st\ paths that avoid $A$, $s\in \sigma$
must be contained in $A$, and thus there must exist $g$-through paths
from $s\in\sigma$ and $t\in\sigma'$. If $A$ is a canal, although
$\sigma$ may be crossed by the other gate of $A$, there also exist
$g$-through paths from $s\in\sigma$ and $t\in\sigma'$. More
specifically, if $s\in A$, then there are $g$-through paths from $s$
to any $t\in \sigma'$; otherwise there are also $g$-through paths from
$s$ to any $t\in \sigma'$ that cross both gates of $A$.

Let $\pi$ be any shortest $g$-through path between $s\in\sigma$ and $t\in\sigma'$.
Since $\pi$ is $g$-through and $\sigma'$ is not $g$-aligned,
$\pi$ crosses
a horizontal and a vertical diagonals of $\TD$ that define $C_g$, and
escape $C_g$ to reach $t$ in $\sigma'$.
This implies that $\pi$ intersects a horizontal and a vertical diagonals defining $\sigma$
and thus can be modified to pass through a vertex $v \in V_\sigma$ of $\sigma$
as done in the proof of Lemma~\ref{lem:key}.
At the opposite end $t$, since $\sigma'\cap C_g = \emptyset$,
we can apply the above argument symmetrically to
modify $\pi$ to pass through a vertex $v' \in V_{\sigma'}$ of $\sigma'$.
Thus, the lemma follows.
\end{proof}

The remaining case is when $\sigma'\in\TDM^A$ is $g$-aligned 
but $\sigma\in\TD$ is not.
Recall $\sigma$ is coastal and intersects $A$, and $\sigma'$ is
oceanic (implying $\sigma'$ does not intersect $A$).

\begin{lemma} \label{lem:c2o_vg}
Let $g$ be a gate of $A$, and suppose that $\sigma$ is not $g$-aligned.
Then, there exists a unique vertex $v_g \in V_\sigma\cap A$ such that
for any $s\in\sigma$ and $x\in g$,
the concatenation of segment $\seg{sv_g}$ and any $L_1$ shortest path
from $v_g$ to $x$ inside $A\cup \sigma$
results in an $L_1$ shortest path from $s$ to $x$ in $A \cup \sigma$.
\end{lemma}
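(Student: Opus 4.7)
The plan is to identify $v_g$ via the position of $\sigma$ relative to the cross $C_g$, and then establish its universal funnel-apex property via an $L_1$-monotonicity/rerouting argument.

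Since $\sigma$ is not $g$-aligned, $\sigma \cap C_g = \emptyset$. The four diagonals of $\TD$ bounding $C_g$---horizontal and vertical extensions from the two endpoints $c$ and $d$ of $g$---partition $\calP$ around $g$ into $C_g$ and four corner regions, and since each cell of $\TD$ is either contained in $C_g$ or interior-disjoint from it, $\sigma$ lies in exactly one corner region. Reflecting coordinates as needed, I assume $\sigma$ lies in the upper-right corner region, whose inner corner is the upper-right endpoint $c$ of $g$, bounded below by the horizontal diagonal $\ell_h$ through $c$ and on the left by the vertical diagonal $\ell_v$ through $c$. I then define $v_g$ as the unique vertex of $\sigma$ minimizing both $x$- and $y$-coordinates, i.e., its ``lower-left'' corner vertex. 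To see $v_g \in A$: since $\sigma$ intersects $A$, the corner region containing $\sigma$ lies on the $A$-side of $g$ near $c$, so the portion of this corner region near the inner corner is in $\bar A$; by the general-position assumption and the interaction of the $\TD$ diagonals with $\bd A$, $v_g$ avoids $\bd A$ and lies in the interior of $A$.

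For the main claim, fix $s \in \sigma$ and $x \in g$. Because $v_g$ is the lower-left vertex of $\sigma$, every $s \in \sigma$ satisfies $s_x \geq x_{v_g}$ and $s_y \geq y_{v_g}$, making $\seg{sv_g} \subset \sigma$ an $L_1$-monotone segment of length $|\seg{sv_g}|$. Concatenating $\seg{sv_g}$ with any shortest path $\pi'$ from $v_g$ to $x$ in $A \cup \sigma$ yields a valid $s$-to-$x$ path in $A \cup \sigma$ of length $|\seg{sv_g}| + d_{A \cup \sigma}(v_g, x)$, providing the upper bound on $d_{A \cup \sigma}(s, x)$. The key geometric observation is that $s$, $v_g$, $c$, $x$ lie in $L_1$-monotone order (each weakly upper-right of the next), so by Fact~\ref{fact:l1length} one has $|\seg{sv_g}| + |\seg{v_g c}| + |\seg{cx}| = |\seg{sx}|$. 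For the matching lower bound, I would consider any shortest path $\tau$ from $s$ to $x$ in $A \cup \sigma$, let $q$ be its first exit point from $\sigma$ (necessarily on $\bd \sigma \cap \bar A$), and re-route the in-$\sigma$ portion through $v_g$: the convexity of $\sigma$ together with the monotone-order observation allows this modification to preserve or decrease the $L_1$ length, producing a path of the form $\seg{sv_g} \cdot \pi'$ for some valid $v_g$-to-$x$ path $\pi'$ in $A \cup \sigma$; this yields $|\tau| \geq |\seg{sv_g}| + d_{A \cup \sigma}(v_g, x)$. Uniqueness of $v_g$ is immediate from its definition as the unique lower-left vertex of the convex cell $\sigma$.

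The main obstacle is the rerouting step---showing the in-$\sigma$ portion of any shortest path can indeed be rerouted through $v_g$ without increasing its $L_1$ length. This hinges on two structural facts: (i) the obstacle vertex $c$ acts as a bottleneck between $\sigma$ and $g$ inside $\bar A$, so every $L_1$ shortest path from a point in the upper-right corner region to a point of $g$ must bend near $c$ by the local funnel geometry of $\bar A$; and (ii) since $v_g$ lies on a monotone $L_1$ path from every $s \in \sigma$ through $c$ to every $x \in g$, one can transform a bend at $c$ into a bend at $v_g$ (inside $\sigma$) followed by a monotone continuation into $\bar A$ without changing the $L_1$ length, again via Fact~\ref{fact:l1length}. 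A careful case analysis on whether $q$ lies on the bottom edge or the left edge of $\sigma$ (and how $\tau$ proceeds in $\bar A$ afterward) is needed to complete the argument.
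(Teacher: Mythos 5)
Your proposal has a genuine gap, and it starts with the very first structural claim. The diagonals bounding $C_g$ are extensions from the gate endpoints that terminate as soon as they hit $\bd \calP$; they do not partition $\calP$ into $C_g$ plus four quadrant-like corner regions, and a coastal cell $\sigma$ that intersects $A$ but is not $g$-aligned can lie deep inside a winding bay or canal, behind several reflex vertices, not adjacent to $C_g$ at all. Consequently your key observations fail in general: the points $s, v_g, c, x$ need not be in $L_1$-monotone order, and it is not true that every shortest path from $\sigma$ to $g$ inside $\bar{A}$ ``must bend near $c$'' --- when the bay winds, the geodesics bend at reflex vertices (funnel vertices, possibly a funnel apex) that can be far from the gate and force the paths to leave $\sigma$ in a direction unrelated to the quadrant of $C_g$ in which you place $\sigma$. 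For the same reason, identifying $v_g$ as the corner of $\sigma$ ``facing'' the gate endpoint can pick the wrong vertex: which vertex of $\sigma$ works is dictated by the direction in which the geodesics to $g$ leave $\sigma$, not by the relative position of $\sigma$ and $C_g$. Since the matching lower bound (rerouting an arbitrary shortest path through $v_g$) is precisely the step you leave as ``a careful case analysis is needed,'' and the two structural facts (i)--(ii) it hinges on are the false ones above, the core of the argument is missing.

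For comparison, the paper's proof works with the Euclidean geodesic structure of the simple polygon $P = A \cup \sigma$, using Fact~\ref{fact:euc_simple} (Euclidean shortest paths in a simple polygon are also $L_1$ shortest). It considers the hourglass $H$ formed by the paths $\pi_2(s,x)$ over all $s \in \sigma$ and $x \in g$ and splits into two cases. If $H$ is open, some $s' \in \sigma$ sees some $x' \in g$ directly, and because $\sigma$ is a cell of $\TD$ not $g$-aligned, the first segments of all these paths share the same (say left-downward) direction, so every path crosses the same vertical and horizontal diagonals bounding $\sigma$; their intersection is $v_g$, and the Lemma~\ref{lem:key}-style replacement by a monotone path of equal $L_1$ length reroutes through it. If $H$ is closed, all paths pass through the apex $u$ of the funnel containing $\sigma$ (an obstacle vertex generally different from the gate endpoints), and monotonicity of $\pi_2(s,u)$ allows rerouting through the corner vertex of $\sigma$ nearest $u$. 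If you want to complete your outline, you need an argument of this type that derives $v_g$ from the hourglass/funnel structure of geodesics from $\sigma$ to $g$, rather than from the position of $\sigma$ relative to $C_g$.
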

\begin{proof}
Let $P:= A\cup \sigma$. Since $\sigma$ is not $g$-aligned, $\sigma$ does not intersect
the gate $g$. Therefore, if $A$ is a bay, $\sigma$ must be contained in $A$ and thus $P=A$;
if $A$ is a canal, $\sigma$ may intersect the other gate of $A$
while the union $P$ forms a simple polygon.
Thus, $P$ is a simple polygon, and we apply Fact~\ref{fact:euc_simple} to $P$.
Let $\pi_2(s,t)$ be the unique Euclidean shortest path between $s, t\in P$ in $P$.
Consider the union $H$ of $\pi_2(s,x)$ for all $s\in\sigma$ and all points $x\in g$,
and suppose $\pi_2(s,x)$ is directed from $s$ to $x$.  Then, $H$ forms an hourglass.
We distinguish two possibilities: either $H$ is open or closed.

\begin{figure}[t]
\begin{minipage}[t]{\linewidth}
\begin{center}
\includegraphics[totalheight=1.5in]{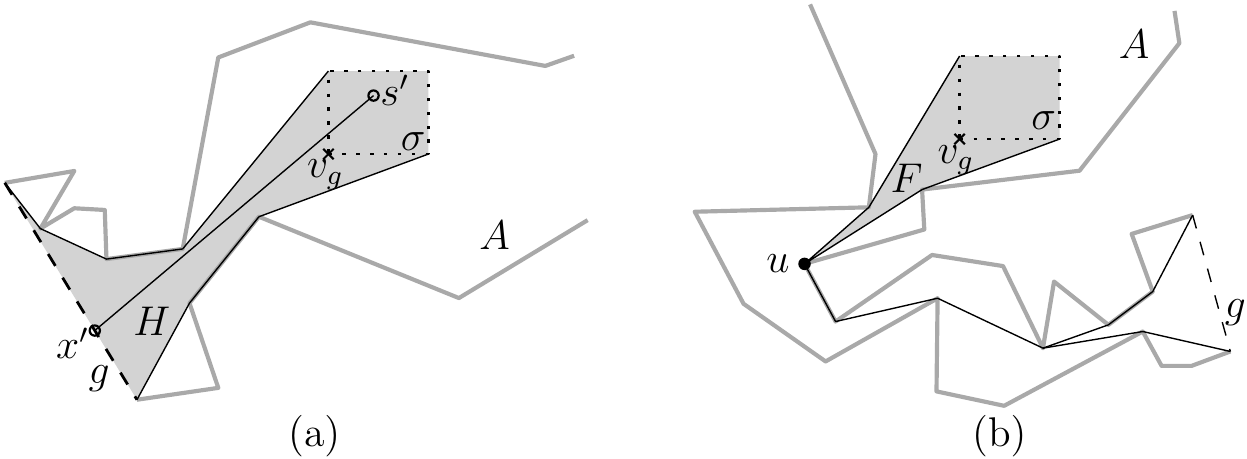}
\caption{\footnotesize Illustration to the proof of Lemma~\ref{lem:c2o_vg}. (a) When $H$ is open or
(b) closed.
}
\label{fig:vg}
\end{center}
\end{minipage}
\end{figure}

Assume that $H$ is open.  See \figurename~\ref{fig:vg}(a).
Then, there exist $s'\in\sigma$ and $x'\in g$ such that $\seg{s'x'} \subset P$,
and thus $\pi_2(s',x') = \seg{s'x'}$.
Without loss of generality, we assume that $s'$ lies to the right of and above $x'$
so that $\pi_2(s',x')$ is left-downwards.
We then observe that the first segment $\seg{su}$ of $\pi(s,x)$ for any $s\in\sigma$ and any $x\in g$
is also left-downwards since $\sigma$ is a cell of $\TD$ and is not $g$-aligned.
This implies that the shortest path $\pi_2(s, x)$ contained in $H$ crosses
the same pair of a vertical and a horizontal diagonals that define $\sigma$;
more precisely, it crosses the left vertical and the lower horizontal diagonals of $\TD$.
Letting $v_g$ be the vertex defined by the two diagonals,
we apply the same argument as in the proof of Lemma~\ref{lem:key}
to modify $\pi_2(s,x)$ to pass through $v_g$.

Now, assume that $H$ is closed.
See \figurename~\ref{fig:vg}(b).
Then, $H$ has two funnels and let $\funnel$ be the one that contains $\sigma$.
Let $u$ be the apex of $\funnel$, that is every Euclidean shortest path in $H$
passes through $u$.
Note that $u$ is an obstacle vertex, and thus
$u \in V_{\sigma''}$ for a cell $\sigma'' \in \TD$ that is not aligned with $\sigma$.
Without loss of generality, we assume that $\sigma$ lies to the right of and above $\sigma'$.
We observe that the Euclidean shortest path $\pi_2(s, u)$ for any $s\in\sigma$ is monotone since $u$ is the apex of $\funnel$.
Since $u$ lies to the left of and below any $s\in\sigma$ and
$\pi_2(s, u)$ is monotone,
we can modify the path to pass through the bottom-left vertex $v_g\in V_\sigma$,
as in the previous case.
\end{proof}

From now on, let $v_g$ be the vertex as described in Lemma~\ref{lem:c2o_vg}
($v_g$ can be found efficiently, as shown in the proof of Lemma~\ref{lem:dist_ex}).
Consider the union of the Euclidean shortest paths inside $A$ from $v_g$
to all points $x \in g$.  Since $A$ is a simple polygon,
the union forms a funnel $\funnel_g(v_g)$ with base $g$,
plus the Euclidean shortest path from $v_g$ to the apex of $\funnel_g(v_g)$.
Recall Fact~\ref{fact:euc_simple} that any Euclidean shortest path inside a simple polygon
is also an $L_1$ shortest path.
Let $W_g(v_g)$ be the set of horizontally and vertically
extreme points in each convex chain of $\funnel_g(v_g)$,
that is, $W_g(v_g)$ gathers the leftmost, rightmost, uppermost, and lowermost points
in each chain of $\funnel_g(v_g)$.
Note that $|W_g(v_g)| \leq 8$ 
and $W_g(v_g)$ includes the endpoints of $g$ and the apex of $\funnel_g(v_g)$.
We then observe the following lemma.

\begin{lemma} \label{lem:c2o-xo}
Suppose that $\sigma'$ is $g$-aligned but $\sigma$ is not.
Then, for any $s\in\sigma$ and $t\in\sigma'$,
there exists a shortest $g$-through \st\ path that passes through
$v_g$ and some $w\in W_g(v_g)$.
Moreover, the length of such a path is $|\seg{sv_g}| + d(v_g, w) + |\seg{wt}|$.
\end{lemma}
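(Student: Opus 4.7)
The plan is to impose structure on a shortest $g$-through path from $s$ to $t$ in a sequence of steps. First, I would apply Lemma~\ref{lem:c2o_vg} to route the path through $v_g$, so that its initial portion is the straight segment $\seg{s v_g}$ of $L_1$ length $|\seg{s v_g}|$. By property~(*) from the proof of Lemma~\ref{lem:bay}, I may further assume that the shortest $g$-through path $\pi$ crosses $g$ at exactly one point $x \in g$, and that its piece inside the simple polygon $A$ between $v_g$ and $x$ is the Euclidean (and, by Fact~\ref{fact:euc_simple}, the $L_1$) shortest path inside $A$. Since $\sigma'$ is $g$-aligned, the portion of $\pi$ from $x$ to $t$ can be taken as an L-shaped rectilinear path inside $C_g$ of $L_1$ length $|\seg{xt}|$ by Fact~\ref{fact:l1length}.

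Next, I would inspect the piece of $\pi$ inside $A$. By the funnel structure, the Euclidean shortest path from $v_g$ to $x$ traverses $\funnel_g(v_g)$ from $v_g$ through the apex and along one of the two convex chains, ultimately reaching $x$ via a straight segment from the last visited funnel vertex $w$. The total length of $\pi$ then equals
\[
|\seg{s v_g}| + d(v_g, w) + |\seg{wx}| + |\seg{xt}|,
\]
and the goal becomes showing that this is minimized with $w \in W_g(v_g)$ and $|\seg{wx}| + |\seg{xt}| = |\seg{wt}|$.

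For each candidate funnel vertex $w$, let $I_w \subseteq g$ be the interval of points $x$ for which $w$ is the last bending vertex of the shortest path from $v_g$; the endpoints of $I_w$ correspond to transitions to neighboring funnel vertices along the chain. Assuming for concreteness that $g$ is vertical and that $t$ lies on the opposite side of $g$ from $A$, expanding coordinates gives $|\seg{wx}| + |\seg{xt}| = |x_w - x_t| + (|y_w - y_x| + |y_x - y_t|)$, whose minimum over $x \in g$ equals $|\seg{wt}|$ and is attained whenever $y_x$ lies between $y_w$ and $y_t$. Hence $|\seg{wx}| + |\seg{xt}| \geq |\seg{wt}|$, with equality achievable only if this $y$-range meets $I_w$. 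If $w$ is not in $W_g(v_g)$, the convexity of the funnel chain together with the $L_1$-geometry of the sum permits sliding the optimum along $g$ toward an endpoint of $I_w$ without increasing the total length, which transitions the analysis to a neighboring funnel vertex; iterating, the process terminates at an extreme chain vertex, at an endpoint of $g$, or at the apex of $\funnel_g(v_g)$, each of which lies in $W_g(v_g)$.

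The main obstacle is the final iterative argument, namely rigorously proving that this ``sliding'' procedure terminates at a vertex in $W_g(v_g)$ rather than stabilizing at an arbitrary interior chain vertex. The intuition is that the $x$- and $y$-extremes of a convex chain are precisely the positions from which an $L_1$-monotone line of sight through $g$ to $t$ becomes geometrically available, realizing $|\seg{wx}| + |\seg{xt}| = |\seg{wt}|$; at any non-extreme chain vertex, the convexity of the chain combined with the flat-region minimization behavior of the $L_1$ sum always admits a further shortening either along the chain or along $g$, so the iteration cannot be stuck at such a vertex.
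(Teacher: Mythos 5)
Your setup matches the paper's: routing through $v_g$ via Lemma~\ref{lem:c2o_vg}, using property~(*) to get a single crossing $x\in g$ with the Euclidean shortest path in $A$ up to $x$, and using $g$-alignedness to bound the $x$-to-$t$ leg by $|\seg{xt}|$. But from there you diverge, and the divergence leaves a genuine gap.

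Your core step is the ``sliding'' claim: for a non-extreme last bending vertex $w$, the optimum of $|\seg{wx}|+|\seg{xt}|$ over $x\in I_w$ can always be pushed to an endpoint of $I_w$ without increasing length, transitioning to a neighboring vertex until landing in $W_g(v_g)$. This is not proven, and as stated it is not true. Your own coordinate expansion shows that $|\seg{wx}|+|\seg{xt}|=|\seg{wt}|$ is already attained for any $x\in g$ with $y_x$ between $y_w$ and $y_t$; if that $y$-range meets the interior of $I_w$, the optimum is achieved there and the ``flat-region'' rationale gives no reason to slide — you are already at the minimum of the $x$-restricted subproblem. So the iteration can stabilize at a non-extreme chain vertex $w\notin W_g(v_g)$. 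At that point you would need an additional argument that $d_A(v_g,w)+|\seg{wt}|$ is nevertheless matched by $d_A(v_g,w')+|\seg{w't}|$ for some $w'\in W_g(v_g)$, and you have not supplied it. Your closing paragraph acknowledges this as the ``main obstacle'' and offers intuition only.

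The paper avoids the whole sliding machinery by a different move. Starting from the shortest $g$-through path $\pi$, it takes $w$ to be the \emph{last} element of $W_g(v_g)$ encountered along $\pi$, and then argues about the straight segment $\seg{wt}$ directly: if $\seg{wt}$ stays inside $\calP$, replace the suffix of $\pi$; otherwise $\seg{wt}$ crosses the funnel boundary at two points $p,q$, and since $W_g(v_g)$ contains every $x$- and $y$-extreme of each chain, the subchain from $p$ to $q$ contains no element of $W_g(v_g)$ and is therefore monotone. The detour $\seg{wp}$, then the chain from $p$ to $q$, then $\seg{qt}$ is a monotone path of $L_1$ length $|\seg{wt}|$ by Fact~\ref{fact:l1length}, and replacing the suffix of $\pi$ by it cannot increase length. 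That single observation — monotonicity of the convex chain between consecutive elements of $W_g(v_g)$ yields a monotone detour realizing $|\seg{wt}|$ — is exactly what your sliding argument needs but lacks. I'd suggest replacing the iterative sliding scheme with this detour argument.
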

\begin{proof}
Since $A$ is a simple polygon,
any Euclidean shortest path in $A$ is also an $L_1$ shortest path
by Fact~\ref{fact:euc_simple}.
Thus, the $L_1$ length of a shortest path
from $v_g$ to any point $x$ in the funnel $\funnel_g(v_g)$ is equal to
the $L_1$ length of the unique Euclidean shortest path in $A$,
which is contained in $\funnel_g(v_g)$.

By Lemma~\ref{lem:c2o_vg} and the assumption that $\sigma'$
is $g$-aligned, among the paths from $s$ to $t$ that cross the gate
$g$,
there exists an $L_1$ shortest $g$-through \st\ path $\pi$ consisting of three portions:
$\seg{sv_g}$, the unique Euclidean shortest path from $v_g$ to a vertex $u$
on a convex chain of $\funnel_g(v_g)$, and $\seg{ut}$.
Let $w\in W_g(v_g)$ be the last one among $W_g(v_g)$ that we encounter
during the walk from $s$ to $t$ along $\pi$.
Consider the segment $\seg{wt}$, which may cross $\bd \funnel_g(v_g)$.
If $\seg{wt} \cap \bd \funnel_g(v_g) = \emptyset$, then we are done
by replacing the subpath of $\pi$ from $u$ to $t$ by $\seg{wt}$.
Otherwise, $\seg{wt}$ crosses $\bd \funnel_g(v_g)$ at two points $p, q\in\bd \funnel_g(v_g)$.
Since $W_g(v_g)$ includes all extreme points of each chain of $\funnel_g(v_g)$,
there is no $w'\in W_g(v_g)$ on the subchain of $\funnel_g(v_g)$ between $p$ and $q$.
Hence, we can replace the subpath of $\pi$ from $w$ to $t$ by
a monotone path from $w$ to $t$, which consists of
$\seg{wp}$, the convex path from $p$ to $q$ along $\bd \funnel_g(v_g)$,
and $\seg{qt}$, and the $L_1$ length of the above monotone path
is equal to $|\seg{wt}|$ by Fact~\ref{fact:l1length}.
Consequently, the resulting path is also an $L_1$ shortest path with the desired property.
\end{proof}

For any cell $\sigma \in \TD\cup \TDM$, let $n_\sigma$ be the combinatorial complexity of $\sigma$.
If $\sigma$ is a boundary cell of $\TDM$, then
$n_\sigma$ may not be bounded by a constant;
otherwise, $\sigma$ is a trapezoid or a triangle, and thus $n_\sigma \leq 4$.
The geodesic distance function $d$ defined on $\sigma\times\sigma'$
for any two cells $\sigma, \sigma'\in \TD\cup\TDM$
can be explicitly computed in $O(n_\sigma n_{\sigma'})$ time
after some preprocessing, as shown in Lemma \ref{lem:dist_ex}.

%

\begin{lemma} \label{lem:dist_ex}
Let $\sigma$ be any cell of $\TD$ or $\TDM$.  After $O(n)$-time preprocessing,
the function $d$ on $\sigma\times\sigma'$ for
any cell $\sigma'\in\TD\cup\TDM$ can be explicitly computed in $O(n_\sigma n_{\sigma'})$ time,
provided that $d(v, v')$ has been computed for any $v\in V_\sigma$ and any $v'\in V_{\sigma'}$.
Moreover, $d$ on $\sigma \times \sigma'$ is the lower envelope of $O(1)$ linear functions.
\end{lemma}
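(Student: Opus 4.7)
The plan is to prove the lemma by a case analysis on the pair $(\sigma,\sigma')$ according to whether each cell lies in $\TD$ or $\TDM$ and whether it is coastal or oceanic, in each case exhibiting $d|_{\sigma\times\sigma'}$ as the pointwise minimum of $O(1)$ linear functions derived from the earlier structural lemmas.

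I would first describe the $O(n)$ preprocessing attached to the fixed cell $\sigma$. The only non-trivial situation is when $\sigma$ is coastal, intersecting some bay or canal $A$. For each gate $g$ of $A$ (at most two), if $\sigma$ is not $g$-aligned, I would identify the vertex $v_g\in V_\sigma\cap A$ promised by Lemma~\ref{lem:c2o_vg}, compute the funnel $\funnel_g(v_g)$ inside $A$ with base $g$ using a standard linear-time funnel/shortest-path-tree algorithm in the simple polygon $A$, extract the extreme set $W_g(v_g)$ of size at most $8$, and tabulate the in-bay distances $d(v_g,w)=|\pi_2(v_g,w)|$ for each $w\in W_g(v_g)$ (using Fact~\ref{fact:euc_simple}). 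Since $A$ has $O(n)$ vertices and there are only $O(1)$ gates to process, the total preprocessing is $O(n)$.

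The case analysis for an arbitrary query cell $\sigma'$ would then proceed as follows. \textbf{(Case 1, both coastal.)} Both $\sigma,\sigma'\in\TD$; this is exactly Lemma~\ref{lem:constrained_diam}, and $d$ is the lower envelope of the $\le 16$ linear functions $d_{vv'}(s,t)=|\seg{sv}|+d(v,v')+|\seg{v't}|$ for $(v,v')\in V_\sigma\times V_{\sigma'}$. \textbf{(Case 2, both oceanic.)} Apply Lemma~\ref{lem:oceanic}: if $\sigma$ and $\sigma'$ are aligned, then $d(s,t)=|\seg{st}|$, which is a single linear function since the signs of the coordinate differences are constant on aligned cells; otherwise, we obtain again at most $16$ linear candidates through $V_\sigma\times V_{\sigma'}$. \textbf{(Case 3, coast-to-ocean.)} WLOG $\sigma\in\TD$ is coastal in a bay/canal $A$ and $\sigma'\in\TDM^A$ is oceanic. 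I would take the pointwise minimum of linear functions contributed by three sub-computations: the ocean-avoiding possibility (Lemma~\ref{lem:c2o-oceanic}, reducing to Case~2 with $\le 16$ candidates), and, for each gate $g$ of $A$, the $g$-through possibility handled by the appropriate sub-lemma---Lemma~\ref{lem:c2o-oo} gives the single linear function $|\seg{st}|$ when both cells are $g$-aligned, Lemma~\ref{lem:c2o-*x} gives $\le 16$ candidates $|\seg{sv}|+d(v,v')+|\seg{v't}|$ when $\sigma'$ is not $g$-aligned, and Lemma~\ref{lem:c2o-xo} gives $\le |W_g(v_g)|\le 8$ candidates $|\seg{sv_g}|+d(v_g,w)+|\seg{wt}|$ when $\sigma'$ is $g$-aligned but $\sigma$ is not, the required distances $d(v_g,w)$ being available from the preprocessed table. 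Across all sub-cases the total number of candidate linear functions is bounded by a constant, so their lower envelope has $O(1)$ complexity.

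Each candidate function is a linear expression in $(s,t)\in\sigma\times\sigma'$ because the relevant signs of $x$- and $y$-coordinate differences from $s$ to any $v\in V_\sigma\cup\{v_g\}$ (and symmetrically for $t$) are fixed throughout $\sigma$ (resp.\ $\sigma'$), a direct consequence of the axis-aligned construction of $\TD$ and $\TDM^A$ and the fact that each $v$ sits at a corner of the cell. Outputting the $O(1)$ candidates and computing their lower envelope takes $O(1)$ time in the interior case; the bound $O(n_\sigma n_{\sigma'})$ is slack enough to absorb any additional bookkeeping when $\sigma$ or $\sigma'$ is a boundary cell of $\TDM$ whose boundary has super-constant complexity. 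The main obstacle is the coast-to-ocean case with a canal, where one must juggle both gates simultaneously together with the ocean-avoiding option and verify that the precomputed data $(v_g,W_g(v_g),d(v_g,w))$ is enough to represent every relevant $g$-through linear piece without further shortest-path queries.
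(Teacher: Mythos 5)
Your proposal is correct and follows essentially the same route as the paper: the same $O(n)$ preprocessing (finding $v_g$, the funnel $\funnel_g(v_g)$, and $W_g(v_g)$ for each gate of $A$ for which $\sigma$ is not aligned), followed by the same case analysis over coastal/oceanic and $g$-aligned combinations via Lemmas~\ref{lem:c2o-oo}, \ref{lem:c2o-oceanic}, \ref{lem:c2o-*x}, and \ref{lem:c2o-xo}, expressing $d$ on $\sigma\times\sigma'$ as the lower envelope of $O(1)$ linear functions computable in $O(n_\sigma n_{\sigma'})$ time. The one step you leave implicit---algorithmically identifying $v_g$---is done in the paper by computing $\spm_A(v)$ for each of the at most four $v\in V_\sigma$ and testing the dominance condition $f_v(x)+|\seg{vv'}|\le f_{v'}(x)$ for all $x$ on the gate, which fits within your $O(n)$ preprocessing budget.
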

\begin{proof}
If both $\sigma$ and $\sigma'$ are oceanic, then
Lemma~\ref{lem:oceanic} implies that for any $(s,t)\in\sigma\times\sigma'$,
$d(s,t) = |\seg{st}|$ if they are aligned, or
$d(s,t) = \min_{v\in V_\sigma, v'\in V_{\sigma'}} d_{vv'}(s,t)$,
where $d_{vv'}(s,t) = |\seg{sv}| + d(v, v') + |\seg{v't}|$.
On the other hand, if $\sigma$ and $\sigma'$ are coastal,
then both are cells of $\TD$ and
Lemma~\ref{lem:key} implies the same conclusion.
Since $|V_\sigma| \leq 4$ and $|V_{\sigma'}| \leq 4$ in either case,
the geodesic distance $d$ on $\sigma\times\sigma$
is the lower envelope of at most $16$ linear functions.
Hence, provided that the values of $d(v,v')$ for all pairs $(v,v')$ are known,
the envelope can be computed in time proportional to the complexity of the domain
$\sigma \times \sigma'$, which is $O(n_\sigma n_{\sigma'})$.

From now on, suppose that $\sigma$ is coastal and $\sigma'$ is oceanic.
Then, $\sigma$ is a cell of $\TD$ and intersects some bay or canal $A$.
If $\sigma'$ is also a cell of $\TD$, then Lemma~\ref{lem:key} implies the lemma,
as discussed in Section~\ref{sec:td}; thus, we assume $\sigma'$ is a cell of $\TDM$.

As above, we add diagonals extended from each endpoint of each gate of $A$ to obtain $\TDM^A$,
and specify all $g$-aligned cells for each gate $g$ of $A$ in $O(n)$ time.
In the following, let $\sigma'$ be an oceanic cell of $\TD$ or of $\TDM^A$.
Note that a cell of $\TDM$ can be partitioned into $O(1)$ cells of $\TDM^A$.
We have two cases depending on whether $A$ is a bay or a canal.

First, suppose that $A$ is a bay; let $g$ be the unique gate of $A$.
In this case, any $L_1$ shortest path is $g$-through,
provided that it intersects $A$, since $g$ is unique.
There are two subcases depending on whether $\sigma$ is $g$-aligned or not.

\begin{itemize}
\item
If $\sigma$ is $g$-aligned, then by Lemmas~\ref{lem:c2o-oo}, \ref{lem:c2o-oceanic}, and~\ref{lem:c2o-*x},
we have $d(s,t) = |\seg{st}|$ if $\sigma'$ is $g$-aligned,
or $d(s,t) = \min_{v\in V_\sigma, v'\in V_{\sigma'}} d_{vv'}(s,t)$, otherwise,
where $d_{vv'}(s,t) = |\seg{sv}| + d(v,v') + |\seg{v't}|$.
Thus, the lemma follows by an identical argument as above.
\item
Suppose that $\sigma$ is not $g$-aligned.
Then, $\sigma \subset A$ since $A$ has a unique gate $g$.
In this case, we need to find the vertex $v_g\in V_\sigma$.
For the purpose, we compute at most four Euclidean shortest path maps $\spm_A(v)$
inside $A$ for all $v\in V_\sigma$ in $O(n)$ time~\cite{ref:GuibasLi87}.
By Fact~\ref{fact:euc_simple}, $\spm_A(v)$ is also an $L_1$ shortest path map in $A$.
We then specify the $L_1$ geodesic distance from $v$ to all points on $g$,
which results in a piecewise linear function $f_v$ on $g$.
For each $v\in V_\sigma$, we test whether it holds that
$f_v(x) + |\seg{vv'}| \leq f_{v'}(x)$ for all $x\in g$ and all $v'\in V_\sigma$.
By Lemma~\ref{lem:c2o_vg}, there exists a vertex in $V_\sigma$
for which the above test is passed, and such a vertex is $v_g$.
Since each shortest path map $\spm_A(v)$ is of $O(n)$ complexity,
all the above effort to find $v_g$ is bounded by $O(n)$.
Next, we compute the funnel $\funnel_g(v_g)$ and
the extreme vertices $W_g(v_g)$ as done above
by exploring $\spm_A(v_g)$ in $O(n)$ time.

If $\sigma'$ is not $g$-aligned, we apply Lemma~\ref{lem:c2o-*x}
to obtain $d(s,t) = \min_{v\in V_\sigma, v'\in V_{\sigma'}} d_{vv'}(s,t)$.
Thus, $d$ is the lower envelope of at most $16$ linear functions over $\sigma\times\sigma'$.
Otherwise, if $\sigma'$ is $g$-aligned, then we have
$d(s,t) = \min_{w\in W_g(v_g)} d_{v_gw}(s,t)$ by Lemma~\ref{lem:c2o-xo}.
Since $|W_g(v_g)| \leq 8$,
$d$ is the lower envelope of a constant number of linear functions.
\end{itemize}
Thus, in any case, we conclude the bay case.

Now, suppose that $A$ is a canal.
Then, $A$ has two gates $g$ and $g'$, and $\sigma$ falls into one of the three case:
(i) $\sigma$ is both $g$-aligned and $g'$-aligned,
(ii) $\sigma$ is neither $g$-aligned nor $g'$-aligned, or
(iii) $\sigma$ is $g$- or $g'$-aligned but not both.
As a preprocessing, if $\sigma$ is not $g$-aligned, then
we compute $v_g$, $\funnel_g(v_g)$, and $W_g(v_g)$ as done in the bay case;
analogously, if not $g'$-aligned, compute $v_{g'}$, $\funnel_{g'}(v_{g'})$, and $W_{g'}(v_{g'})$.
Note that any shortest path in $\calP$ is either $g$-through or $g'$-through,
provided that it intersects $A$.
Thus, $d(s,t)$ chooses the minimum among a shortest $g$-through path,
a shortest $g'$-through path, and a shortest path avoiding $A$ if possible.
We consider each of the three cases of $\sigma$.
\begin{enumerate}
\item Suppose that $\sigma$ is both $g$-aligned and $g'$-aligned.
 In this case, if $\sigma'$ is either $g$-aligned or $g'$-aligned,
 then we have $d(s,t) = |\seg{st}|$ by Lemma~\ref{lem:c2o-oo}.
 Otherwise, if $\sigma'$ is neither $g$-aligned nor $g'$-aligned,
 then we apply Lemmas~\ref{lem:c2o-oceanic} and~\ref{lem:c2o-*x} to have
 $d(s,t) = \min_{v\in V_\sigma, v'\in V_{\sigma'}} d_{vv'}(s,t)$.
 Hence, the lemma follows.
\item Suppose that $\sigma$ is neither $g$-aligned nor $g'$-aligned.
 If $\sigma'$ is both $g$-aligned and $g'$-aligned, then by Lemma~\ref{lem:c2o-xo}
  the length of a shortest $g$-through path is equal to
 $\min_{w\in W_g(v_g)} d_{v_gw}(s,t)$ while
 the length of a shortest $g'$-through path is equal to
 $\min_{w\in W_{g'}(v_{g'})} d_{v_{g'}w}(s,t)$.
 The geodesic distance $d(s,t)$ is the minimum of the above two quantities,
 and thus the lower envelope of $O(1)$ linear function on $\sigma\times\sigma'$.

 If $\sigma'$ is $g$-aligned but not $g'$-aligned, then
 by Lemmas~\ref{lem:c2o-*x} and~\ref{lem:c2o-xo},
 we have
 \[ d(s,t) = \min \{ \min_{w\in W_g(v_g)} d_{v_gw}(s,t), \min_{v\in V_\sigma, v'\in V_{\sigma'}} d_{vv'}(s,t) \}.\]
 The case where $\sigma'$ is $g'$-aligned but not $g$-aligned is analogous.

 If $\sigma'$ is neither $g$-aligned nor $g'$-aligned,
 then
 $d(s,t) = \min_{v\in V_\sigma, v'\in V_{\sigma'}} d_{vv'}(s,t)$ by Lemma~\ref{lem:c2o-*x}.

\item Suppose that $\sigma$ is $g'$-aligned but not $g$-aligned.
 The other case where it is $g$-aligned but not $g'$-aligned can be handled symmetrically.
 If $\sigma'$ is $g'$-aligned, then we have $d(s,t) =|\seg{st}|$ by Lemma~\ref{lem:c2o-oo}.
 If $\sigma'$ is neither $g$-aligned nor $g'$-aligned,
 then, by Lemmas~\ref{lem:c2o-oceanic} and~\ref{lem:c2o-*x},
 $d(s,t) = \min_{v\in V_\sigma, v'\in V_{\sigma'}} d_{vv'}(s,t)$.

 The remaining case is when $\sigma'$ is $g$-aligned but not $g'$-aligned.
 In this case, the length of a shortest $g$-through path is equal to
 $\min_{w\in W_g(v_g)} d_{v_gw}(s,t)$ by Lemma~\ref{lem:c2o-xo} for gate $g$
 while the length of a shortest $g'$-through path is equal to
 $\min_{v\in V_\sigma, v'\in V_{\sigma'}} d_{vv'}(s,t)$ by Lemmas~\ref{lem:c2o-oceanic} and~\ref{lem:c2o-*x}.
 Thus, the geodesic distance $d(s,t)$ is the smaller of the two quantities.
\end{enumerate}
Consequently, we have verified every case of $(\sigma, \sigma')$.

As the last step of the proof,
observe that it is sufficient to handle separately all the cells $\sigma' \in \TDM^A$
whose union forms the original cell of $\TDM$,
since any cell of $\TDM$ can be decomposed into $O(1)$ cells of $\TDM^A$.
\end{proof}

\subsection{Computing the Geodesic Diameter and Center}

Lemma~\ref{lem:bay} assures that
we can ignore coastal cells that are completely contained in the interior of a bay or canal,
in order to find a farthest point from any $s\in \calP$.
This suggests a combined set $\TDF$ of cells from the two different decompositions
$\TD$ and $\TDM$:
Let $\TDF$ be the set of all cells $\sigma$ such that
either $\sigma$ belongs to $\TDM$ or $\sigma \in \TD$ is a coastal cell
with $\bd \sigma \cap \bd \calP \neq \emptyset$.
Note that $\TDF$ consists of $O(h^2)$ oceanic cells from $\TDM$ and $O(n)$ coastal cells
from $\TD$.
Since the boundary $\bd A$ of any bay or canal $A$ is covered by the cells of $\TDF$,
Lemma~\ref{lem:bay} implies the following lemma.

\begin{lemma} \label{lem:combined_cell}
For any point $s\in\calP$, 
$\max_{t\in \calP} d(s,t) = \max_{\sigma'\in \TDF} \max_{t\in \sigma'} d(s,t)$. 
\end{lemma}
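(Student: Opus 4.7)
The plan is to prove both inequalities separately. The direction $\geq$ is immediate because every cell of $\TDF$ is a subset of $\calP$, so the maximum on the right is taken over a subset of the set on the left. The content is in the reverse direction $\leq$, which amounts to showing that for every $t\in\calP$ there is some cell $\sigma'\in\TDF$ containing a point $t'$ with $d(s,t')\geq d(s,t)$.

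I would partition the argument according to where $t$ lies: either $t\in\calM$, or $t\in A$ for some bay or canal $A$. In the first case, there is nothing to do: since $\TDM$ covers $\calM$ and $\TDM\subseteq\TDF$ by construction, we can simply pick $\sigma'\in\TDM$ with $t\in\sigma'$ and set $t'=t$. In the second case I would invoke Lemma~\ref{lem:bay}, which provides a point $t'\in\bd A$ with $d(s,t)\leq d(s,t')$. So what remains is to verify that every point of $\bd A$ lies in some cell of $\TDF$.

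The key observation for that verification is that the boundary of a bay or canal $A$ is the disjoint union of (i) its gate(s) and (ii) the arc $\bd A\cap \bd \calP$. Since gates are defined to lie in $\calM$ and $\calM$ is covered by $\TDM\subseteq\TDF$, any $t'$ landing on a gate is handled immediately. Any $t'$ on $\bd A\cap \bd\calP$ lies on $\bd\calP$ and inside $A$, so any cell of $\TD$ containing $t'$ is coastal (it intersects $A$) and meets $\bd\calP$, so it belongs to $\TDF$ by its definition. Thus in either sub-case we produce the required $\sigma'\in\TDF$ containing $t'$, and the lemma follows.

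I do not anticipate a serious obstacle here: Lemma~\ref{lem:bay} does the real geometric work of pushing a farthest point to the boundary of the bay/canal, and the rest is a bookkeeping check that $\TDF$ was defined precisely so as to cover $\calM$ together with the part of $\bd\calP$ bounding each bay and canal. The only subtlety to be careful about is the earlier convention that bays and canals do \emph{not} contain their gates, which ensures that the gate portion of $\bd A$ is already swept up by $\TDM$ and we do not need coastal cells interior to $A$ to catch those points.
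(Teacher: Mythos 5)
Your proof is correct and follows the same route the paper takes: invoke Lemma~\ref{lem:bay} to push the farthest point in each bay/canal to its boundary, then check that $\bd A$ (gate portion plus the arc on $\bd\calP$) is covered by cells of $\TDF$. The paper compresses this into a single sentence preceding the lemma statement; you have simply spelled out the bookkeeping, including the helpful observation that the no-gates-in-bays convention is what makes the gate portion fall under $\TDM$.
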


We apply the same approach as in Section~\ref{sec:td} but we use $\TDF$ instead of $\TD$.

To compute the $L_1$ geodesic diameter,
we compute the $(\sigma, \sigma')$-constrained diameter
for each pair of cells $\sigma,\sigma'\in\TDF$.
Suppose we know the value of $d(v, v')$ for any $v\in V_\sigma$ and
any $v\in V_{\sigma'}$ over all $\sigma, \sigma' \in \TDF$.
Our algorithm handles each pair $(\sigma, \sigma')$ of cells in $\TDF$
according to their types by applying Lemma~\ref{lem:dist_ex}.
The following lemma computes $d(v, v')$ for all cell vertices $v$ and $v'$ of $\TDF$.

\begin{lemma} \label{lem:dist_vv}
 In $O(n^2 + h^4)$ time, one can compute
 the geodesic distances $d(v, v')$ between every $v\in V_\sigma$ and $v' \in V_{\sigma'}$
 for all pairs of two cells $\sigma, \sigma'\in\TDF$.
\end{lemma}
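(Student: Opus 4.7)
The plan is to bound the work by partitioning the vertex-pairs of $\TDF$ into three categories and arranging that each category totals $O(n^2+h^4)$. Recall that $\TDF$ contributes $O(h^2)$ oceanic vertices from $\TDM$ and $O(n)$ coastal vertices from $\TD$, for $O(n+h^2)$ vertices and thus $O((n+h^2)^2)=O(n^2+h^4)$ pairs, so the target is amortized constant work per pair.

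For oceanic-to-oceanic pairs, I would for each of the $O(h^2)$ oceanic sources $v$ first build $\spm_{core}(v)$ in $O(h\log h)$ time, then run the sweep of Lemma~\ref{lem:findcells} over $\TDM$: since $\TDM$ has $O(h)$ diagonals each bearing $O(h)$ vertices, $d(v,\cdot)$ at all $O(h^2)$ oceanic vertices is extracted in $O(h^2)$ time per source, totaling $O(h^4)$. For coast-to-coast pairs lying in a common bay or canal $A$, the closure $\bar A$ is a simple polygon with $n_A$ coastal-cell vertices; by Fact~\ref{fact:euc_simple} its Euclidean SPM is also an $L_1$ SPM, so building an SPM in $\bar A$ from each coastal vertex of $A$ and reading off distances at the others yields $O(n_A^2)$ per region. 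Since $\sum_A n_A^2\leq n\cdot\max_A n_A=O(n^2)$, this category costs $O(n^2)$.

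The cross-region pairs are the delicate case, as every shortest path must cross at least one gate. The key subroutine is: given an external source $u$ for which $d(u,\cdot)$ restricted to a gate $g_A$ is available as a piecewise linear function with $k$ breakpoints, treat $g_A$ as an additively weighted segment-source inside the simple polygon $\bar A$ and compute the weighted $L_1$ SPM in $\bar A$, yielding $d(u,v)$ for every coastal $v$ in $A$ in $O(n_A+k)$ time. For oceanic $u$, the weight on $g_A$ is read off $\spm_{core}(u)$ with $k=O(h)$ pieces, so running the subroutine on all $O(h)$ regions costs $\sum_A O(n_A+h)=O(n+h^2)$ per source; over $O(h^2)$ oceanic sources this is $O(nh^2+h^4)\leq O(n^2+h^4)$ by AM-GM. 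For coastal $u$ lying in a bay or canal $A'$, I first construct the simple-polygon SPM in $\bar{A'}$ from $u$ in $O(n_{A'})$ time, giving $d_{A'}(u,\cdot)$ on $g_{A'}$ with $n_{A'}$ pieces, then propagate through the ocean by a single weighted $L_1$ SPM in $\calP_{core}$ (of complexity $O(h)$) using $g_{A'}$ as additively weighted source-segment, in $O(n_{A'}+h)$ time. Restricting the resulting distance map to each other gate $g_A$ supplies the subroutine with a weight of complexity $O(n_{A'}+h)$, and the per-source tally sums to $O(n+h^2)$; across $O(n)$ coastal sources this is $O(n^2+nh^2)\leq O(n^2+h^4)$. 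Distances with oceanic destination are obtained by symmetry from the oceanic-source run.

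The main obstacle is implementing the weighted $L_1$ SPM subroutine, both inside a simple polygon and inside $\calP_{core}$, so that it runs in time linear in polygon complexity plus number of weight pieces. In $L_1$ the wavefront stays piecewise linear throughout, and the source-segment with its piecewise linear weight function is well-behaved, so a continuous-Dijkstra adaptation of the $L_1$ propagation algorithms of~\cite{ref:MitchellAn89,ref:MitchellL192} combined with the simple-polygon geodesic machinery of~\cite{ref:GuibasLi87,ref:HershbergerCo94} should achieve the required complexity; one must additionally verify that evaluation at each coastal vertex costs amortized $O(1)$ via point location in the resulting map of the appropriate size.
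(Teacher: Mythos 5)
Your overall decomposition of the pairs (ocean--ocean, same-region coast--coast, cross-region via gates) is a reasonable plan, and the ocean--ocean part coincides with the paper's treatment ($\spm_{core}$ per oceanic vertex plus the sweep of Lemma~\ref{lem:findcells}, $O(nh^2+h^4)$ total). However, the rest of your argument hinges on a subroutine you do not establish: an $L_1$ shortest path map from a \emph{segment source carrying a piecewise-linear additive weight}, running in time \emph{linear} in the polygon complexity plus the number of weight pieces, both inside a simple polygon and inside $\calP_{core}$. This is not an off-the-shelf result, and the absence of logarithmic factors is essential to your accounting: continuous-Dijkstra propagation in a domain with holes costs $\Omega(m\log m)$ on complexity $m$, and already a bound like $O(nh^2\log n)$ for the oceanic-source runs is \emph{not} $O(n^2+h^4)$ (take $h^2\approx n$). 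Your own budget for coastal sources is also shakier than stated: restricting the ocean-propagated map to a gate $g_A$ can have complexity up to $O(n_{A'}+h)$, and feeding that as the weight for \emph{each} of the $O(h)$ destination regions does not obviously sum to $O(n+h^2)$ per source without an argument that the total complexity over all gates is $O(n_{A'}+h)$.

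There is also a correctness gap for canals. You compute same-region coast--coast distances as geodesic distances inside the closure $\bar A$, but for a canal the true distance $d(v,v')$ in $\calP$ can be realized by a path that exits through one gate, crosses the ocean, and re-enters through the other gate; your cross-region machinery explicitly propagates only to ``each other gate,'' so these paths (and, for a coastal source in a canal, propagation out of \emph{both} gates rather than the single $g_{A'}$ you mention) are missed, and no final minimization between the in-region and through-ocean values is taken. The paper avoids both problems by never introducing weighted sources: for a coastal vertex $v$ it inserts $v$ as a point hole, rebuilds the corridor/core structure of the modified domain, and computes an ordinary point-source $\spm_{core}'(v)$ in $O(n+h\log h)$ time; and for pairs of coastal vertices interior to bays/canals it builds the rectified domain $\calP_{rect}$ (growing triangles $\triangle_v$ so that every such vertex becomes an obstacle vertex, with $d=d_{rect}$ proved via monotone boundary detours), after which a standard point-source $\spm$ per vertex suffices. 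If you want to salvage your route, you would need to either prove the linear-time weighted-SPM subroutine or replace it by these point-source reductions, and add the through-ocean/self-region terms for canals.
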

\begin{proof}
Let $V_\calM$ be the set of vertices $V_\sigma$ for all oceanic cells $\sigma \in \TDM$,
and $V_c$ be the set of vertices $V_\sigma$ for all coastal cells $\sigma \in \TDF$.
Note that $|V_\calM| = O(h^2)$ and $|V_c| = O(n)$.
We handle the pairs $(v,v')$ of vertices separately in three cases:
(i) when $v, v'\in V_\calM$, (ii) when $v\in V_c$ and $v'\in V_\calM$,
and (iii) when $v, v'\in V_c$.

Let $v\in V_\calM$ be any vertex.
We compute the shortest path map $\spm_{core}(v)$ in the core domain $\calP_{core}$
as discussed in Section~\ref{sec:corridor}.
Recall that $\spm_{core}(v)$ is of $O(h)$ complexity
and can be computed in $O(n + h\log h)$ time (after $\calP$ is
triangulated)
\cite{ref:ChenA11ESA,ref:ChenL113STACS}.
For any point $p\in \calM$, the geodesic distance $d_{core}(v,p)$ can be determined in
constant time after locating the region of $\spm_{core}(v)$ that contains $p$.
By Lemma~\ref{lem:core}, we have $d(v, p) = d_{core}(v,p)$.
Computing $d(v, v')$ for all $v' \in V_\calM$ can be done in $O(h^2)$ time
by running the plane sweep algorithm of Lemma~\ref{lem:findcells}
on $\spm_{core}(v)$.
Thus, for each $v \in V_\calM$ we spend $O(n + h^2)$ time.
Since $|V_\calM| = O(h^2)$,
we can compute $d(v, v')$ for
all pairs of vertices $v, v' \in V_\calM$ in time $O(nh^2+h^4)$.

Case (ii) can also be handled in a similar fashion.
Let $v\in V_c$.
We also show that computing $d(v, v')$ for all $v'\in V_\calM$ can be
done in $O(n+h^2)$ time.
If $v$ lies in the ocean $\calM$, then we can apply the same argument as in case (i).
Thus, we assume $v \notin\calM$.
For the purpose, we consider $v$ as a point hole (i.e., a hole or an obstacle consisting of only one point) into the polygonal domain $\calP$
to obtain a new domain $\calP_v$,
and compute the corresponding corridor structures of $\calP_v$, which can be done in $O(n + h\log^{1+\epsilon} h)$ time (or $O(n)$ time after a triangulation  of $\calP$ is given), as discussed in Section~\ref{sec:corridornew}.
Let $\calM_v$ denote the ocean corresponding to the new polygonal domain $\calP_v$.
Since $v$ lies in a bay or canal of $\calP$,
$\calM$ is a subset of $\calM_v$ by the definition of bays, canals and the ocean.
Thus, we have $V_\calM \subset \calM_v$.
We then compute the core structure of $\calP_v$ and
the shortest path map $\spm_{core}'(v)$ in $\calM_v$ in $O(n + h\log h)$
time~\cite{ref:ChenA11ESA,ref:ChenL113STACS}.
Analogously, the complexity of $\spm_{core}'(v)$ is bounded by $O(h)$.
Finally, perform the plane sweep algorithm on $\spm_{core}'(v)$ as in case (i)
to get  the values of $d(v, v')$ for all $v' \in V_\calM$ in $O(h^2)$ time.

What remains is case (iii) where $v, v'\in V_c$.
Fix $v\in V_c$.  The vertices in $V_c$ either lie on $\bd \calP$ or in its interior.
In this case, we assume that we have a triangulation of $\calP$
as discussed in Section~\ref{sec:corridor}.  Recall that we
can compute the shortest path map $\spm(v)$ in $O(n + h\log h)$
time \cite{ref:ChenA11ESA,ref:ChenL113STACS}.
Since $\spm(v)$ stores $d(v, v')$ for all obstacle vertices $v'$ of $\calP$,
computing $d(v,v')$ for all $v' \in V_c \cap \bd \calP$ can be done in the same time bound
by adding $V_c \cap \bd \calP$ into $\calP$ as obstacle vertices.
Thus, the case where $v, v'\in V_c$ and one of them lies in $\bd \calP$
can be handled in $O(n^2 + nh\log h)$ time.

\begin{figure}[t]
\begin{minipage}[t]{\linewidth}
\begin{center}
\includegraphics[width=0.95\textwidth]{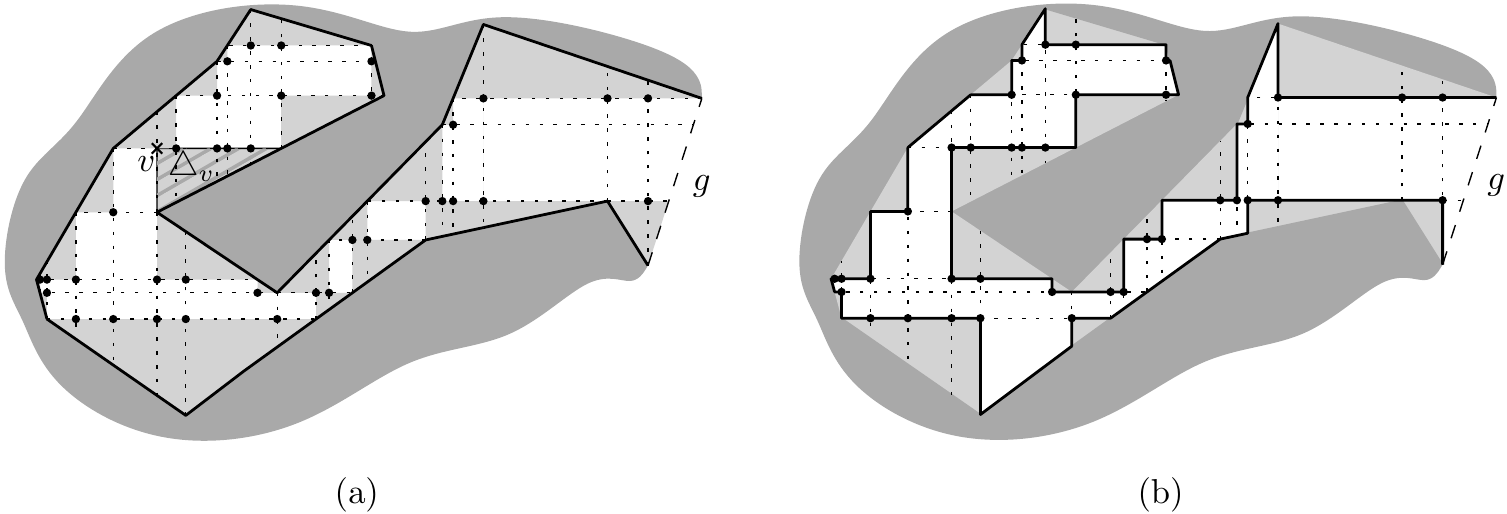}
\caption{\footnotesize Illustration to how to construct $\calP_{rect}$.
(a) A bay $A$ with gate $g$ and the decomposition $\TD$ inside $A$.
 The dark gray region depicts the hole of $\calP$ bounding $A$,
 coastal cells of $\TD$ intersecting $\bd A$ are shaded by light gray color,
 and black dots are vertices in $V_c\setminus \bd \calP$.
 One of those in $V_c$ is labeled $v$ and its triangle $\triangle_v$ is highlighted.
(b) The rectified polygonal domain $\calP_{rect}$ obtained by expanding the hole into
 $\triangle_v$ for all $v\in V_c$. The boundary of $\calP_{rect}$ is
 depicted by solid segments.
 Here, the edges of $A$ are ordered in the counter-clockwise order.
}
\label{fig:poly_rect}
\end{center}
\end{minipage}
\end{figure}

In the following, we focus on how to compute $d(v,v')$
for all $v,v' \in V_c\setminus \bd\calP$.
Let $e_1, \ldots, e_n$ be the edges of $\calP$ in an arbitrary order.
We modify the original polygonal domain $\calP$ to obtain
the \emph{rectified polygonal domain} $\calP_{rect}$ as follows.
For each $i=1, \ldots, n$, we define $V_i$ to be the set of all vertices
$v\in (V_c\setminus \bd \calP) \setminus (V_1 \cup \cdots\cup V_{i-1})$
such that $v\in V_\sigma$ for some coastal cell $\sigma\in\TDF$
with $\bd \sigma \cap e_i \neq \emptyset$.
For each $v\in V_i$, we shoot two rays from $v$, vertical and horizontal, towards $e_i$
until each hits $e_i$.
Let $\triangle_v$ be the triangle formed by $v$ and the two points on $e_i$
hit by the rays.
Since $v$ is a vertex of a cell of $\TD$ facing $e_i$,
by the construction of $\TD$,
the two rays must hit $e_i$ and thus the triangle $\triangle_v$ is well defined.
We then expand each hole of $\calP$ into the triangles $\triangle_v$ for all $v\in V_c\setminus \bd\calP$.
Let $\calP_{rect}$ be the resulting polygonal domain;
that is, every triangle $\triangle_v$ is regarded as an obstacle in $\calP_{rect}$.
We also add all those in $V_c \setminus \bd\calP$ into $\calP_{rect}$
as obstacle vertices.
See \figurename~\ref{fig:poly_rect}.
Observe that $\calP_{rect}$ is a subset of $\calP$ as subsets of $\Real^2$
and all those in $V_c\setminus \bd\calP$ lie on the boundary of $\calP_{rect}$
as its obstacle vertices.
For any two points $s,t\in \calP_{rect}$, let $d_{rect}(s, t)$ be the $L_1$ geodesic distance
between $s$ and $t$ in $\calP_{rect}$.

We then claim the following:
\begin{quote}
 \textit{For any $s, t\in \calP_{rect}$, it holds that $d(s, t) = d_{rect}(s,t)$.}
\end{quote}
Suppose that the claim is true.
The construction of $\calP_{rect}$ can be done in $O(n^2)$ time.
Then, for any $v\in V_c\setminus \bd \calP$,
we compute the shortest path map $\spm_{rect}(v)$ in the rectified domain $\calP_{rect}$
and obtain $d(v, v')$ for all other $v'\in V_c\setminus \bd \calP$.
Since $\calP_{rect}$ has $h$ holes and $O(n)$ vertices by our construction of $\calP_{rect}$,
this task can be done in $O(n + h\log h)$ time
\cite{ref:ChenA11ESA,ref:ChenL113STACS}.
At last, case (iii) can be processed in total $O(n^2 + nh\log h)$ time.

We now prove the claim, as follows.

\vspace{0.1in}
\noindent\textbf{Proof of the claim.}
For any $v\in V_c\setminus \bd \calP$, the triangle $\triangle_v$ is well defined.
We call a triangle $\triangle_v$ \emph{maximal}
if there is no other $\triangle_{v'}$ with $\triangle_v \subset \triangle_{v'}$.
Note that any two maximal triangles are interior-disjoint but
may share a portion of their sides.
Indeed, $\calP \setminus \calP_{rect}$ is the union of all maximal triangles $\triangle_v$.
Pick any connected component $C$ of $\calP \setminus \calP_{rect}$.
The set $C$ is either a maximal triangle $\triangle_v$ itself
or the union of two maximal triangles that share a portion of their sides
by our construction of $\calP_{rect}$ and of $\TD$.
In either case, observe that the portion $\bd C \setminus \bd \calP$ is a monotone path.

Consider any $s, t\in \calP_{rect}$ and any shortest \st\ path $\pi$ in $\calP$,
that is, $d(s,t) = |\pi|$.
If $\pi$ lies inside $\calP_{rect}$, then we are done
since the $L_1$ length of any \st\ path inside $\calP_{rect}$ is at least $d(s,t)$.
Otherwise, $\pi$ may cross a number of connected components of $\calP\setminus \calP_{rect}$.
Pick any such connected component $C$ that is crossed by $\pi$.
Let $p$ and $q$ be the first and the last points on $\bd C$
we encounter when walking along $\pi$ from $s$ to $t$.
Since $\bd C\setminus \bd \calP$ is monotone as observed above,
the path $\pi_{pq}$ along the boundary of $C$ is also monotone.
By Fact~\ref{fact:l1length}, $|\pi_{pq}| = |\seg{pq}|$
and thus we can replace the subpath of $\pi$ between $p$ and $q$ by $\pi_{pq}$
without increasing the $L_1$ length.
The resulting path thus has length equal to $|\pi|$ and avoids the interior of $C$.
We repeat the above procedure for all such connected components $C$ crossed by $\pi$.
At last, the final path $\pi'$ has length equal to $|\pi|$ and avoids
the interior of $\calP\setminus \calP_{rect}$.
That is, $\pi'$ is a \st\ path in $\calP_{rect}$ with $|\pi'| = |\pi| = d(s,t)$.
Since $d_{rect}(s,t) \geq d(s,t)$ in general,
$\pi'$ is an $L_1$ shortest \st\ path in $\calP_{rect}$,
and hence $d_{rect}(s,t) = d(s,t)$.

This proves the above claim.
\vspace{0.1in}

Consequently, the total time complexity is bounded by
\[  O(nh^2 + h^4) + O(n^2 + nh\log h) = O(n^2 + nh^2 + h^4) = O(n^2 + h^4).\]

The lemma thus follows.
\end{proof}

%
%

Our algorithms for computing the diameter and center are summarized in the proof of the following theorem.

\begin{theorem} \label{thm:main}
The $L_1$ geodesic diameter and center of $\calP$ can be computed
 in $O(n^2+h^4)$ and $O((n^4+n^2h^4)\alpha(n))$ time, respectively.
\end{theorem}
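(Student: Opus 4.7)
The plan is to reuse the framework of Section~\ref{sec:td}, but replace $\TD$ with the coarser $\TDF$ wherever Lemma~\ref{lem:combined_cell} allows. Three facts drive everything: Lemma~\ref{lem:dist_vv} supplies all the vertex-to-vertex distances I need in $O(n^2+h^4)$ time; Lemma~\ref{lem:dist_ex} writes $d$ on any $\sigma\times\sigma'$ as the lower envelope of $O(1)$ linear functions, extractable in $O(n_\sigma n_{\sigma'})$ time; and Lemma~\ref{lem:combined_cell} places the farthest point from any $s\in\calP$ in some cell of $\TDF$. The counts to remember are $|\TDF|=O(n+h^2)$ and $\sum_{\sigma\in\TDF}n_\sigma=O(n+h^2)$: coastal cells of $\TDF$ have constant complexity, while the $O(h^2)$ cells inherited from $\TDM$ have $O(n+h^2)$ total complexity.

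For the diameter, I would apply Lemma~\ref{lem:combined_cell} to both endpoints, obtaining $\max_{s,t\in\calP}d(s,t)=\max_{\sigma,\sigma'\in\TDF}\max_{(s,t)\in\sigma\times\sigma'}d(s,t)$, and enumerate all $O((n+h^2)^2)$ cell pairs. For each pair Lemma~\ref{lem:dist_ex} produces $d$ on $\sigma\times\sigma'$ in $O(n_\sigma n_{\sigma'})$ time; because $d$ is the minimum of linear functions (hence concave), its maximum on the polytope $\sigma\times\sigma'$ is attained at a polytope vertex and can be read off during the same traversal. Summing gives $\sum_{\sigma,\sigma'\in\TDF}O(n_\sigma n_{\sigma'})=O((n+h^2)^2)=O(n^2+h^4)$, matching the preprocessing bound of Lemma~\ref{lem:dist_vv}.

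For the center, the outer loop must enumerate all $O(n^2)$ cells $\sigma$ of $\TD$, because $\TD$ tiles $\calP$ and the center may sit inside a bay or canal where $\TDF$ does not reach. Fixing $\sigma$, Lemma~\ref{lem:combined_cell} gives $\dmax(s)=\max_{\sigma'\in\TDF}\max_{t\in\sigma'}d(s,t)$. Since $d(s,\cdot)$ is concave on each $\sigma'$ (min of linear functions), the farthest point from $s$ in $\sigma'$ is a polygonal vertex of $\sigma'$, so $\dmax(s)=\max_{t\in T}d(s,t)$ where $T$ is the set of polygon vertices of the cells of $\TDF$, $|T|=O(n+h^2)$. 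For each $t\in T$, letting $\sigma'$ be the cell of $\TDF$ containing $t$, Lemma~\ref{lem:dist_ex} expresses $d(s,t)$ on $\sigma$ as a concave piecewise-linear function with $O(1)$ pieces, i.e., $O(1)$ triangular patches of the graph over $\sigma$. Collecting these across all $t\in T$ yields $O(n+h^2)$ triangles; I would compute their upper envelope via Edelsbrunner et al.~\cite{ref:EdelsbrunnerTh89} in $O((n+h^2)^2\alpha(n))=O((n^2+h^4)\alpha(n))$ time and read off the lowest point as the $\sigma$-constrained center. Multiplying by $|\TD|=O(n^2)$ yields $O((n^4+n^2h^4)\alpha(n))$.

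The hard part is bookkeeping. I must ensure that the vertex-to-vertex distances needed by the center, namely between every vertex of $\TD$ and every vertex of $\TDF$, are all available; running the plane sweep of Lemma~\ref{lem:findcells} on the shortest-path map rooted at each of the $O(n+h^2)$ vertices of $\TDF$ delivers these in $O((n+h^2)\cdot n^2)$ time, which is absorbed. The other delicate point is the asymmetry of the center routine: the outer minimization runs over $\TD$ (because the center may hide in a bay or canal interior) while the inner farthest-point search runs only over $\TDF$---this asymmetry is exactly what Lemma~\ref{lem:bay} and Lemma~\ref{lem:combined_cell} legitimize.
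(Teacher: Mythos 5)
Your overall architecture for the diameter (enumerate cell pairs of $\TDF$, using Lemma~\ref{lem:dist_vv} for preprocessing and Lemma~\ref{lem:dist_ex} for the cell-to-cell distance function) and for the center (outer loop over $\TD$, inner upper envelope of piecewise-linear functions built from $\TDF$) matches the paper, and your bookkeeping of the quantities $|\TDF|$ and $\sum_\sigma n_\sigma$ is correct. However, there is a genuine gap in the step where you extract the maximum.

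You argue that because $d$ restricted to $\sigma\times\sigma'$ is the minimum of linear functions, hence concave, ``its maximum on the polytope $\sigma\times\sigma'$ is attained at a polytope vertex,'' and analogously for the center that ``the farthest point from $s$ in $\sigma'$ is a polygonal vertex of $\sigma'$.'' This is exactly backwards: the maximum of a \emph{convex} function over a polytope is attained at a vertex; the maximum of a \emph{concave} function (here a lower envelope of linear pieces with gradients pointing in all four diagonal directions, since the $v'\in V_{\sigma'}$ sit at all four corners of $\sigma'$) can be attained strictly in the interior, at a ridge point of the envelope. This is not a corner case: the paper's Introduction explicitly points out that the $L_1$ geodesic diameter can be realized by two \emph{interior} points of $\calP$ (extending the examples from~\cite{ref:BaeTh13}), which is precisely the situation where your argument fails. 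Consequently, $\max_{t\in\sigma'}d(s,t)$ need not equal $\max_{t\in V_{\sigma'}}d(s,t)$, so $\dmax(s)\neq\max_{t\in T}d(s,t)$ in general, and the upper envelope you build from only vertex-indexed functions can strictly underestimate $\dmax$ and return a spurious center.

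The paper avoids this by working with the full graph of $d$ over $\sigma\times\sigma'$: for the diameter, Lemma~\ref{lem:constrained_diam} finds the highest point by traversing \emph{all faces} of the lower envelope (not just polytope vertices); for the center, Lemma~\ref{lem:R} defines $\dmax_{\sigma'}$ as the upper envelope (in $z$) of the projection of that graph onto $\sigma\times\Real$, which correctly captures interior peaks in $p\in\sigma'$ and still has $O(n_\sigma n_{\sigma'})$ complexity by Lemma~\ref{lem:dist_ex}. If you replace your vertex-enumeration of $T$ by this projection-then-upper-envelope construction of $\dmax_{\sigma'}$ for each $\sigma'\in\TDF$, the total complexity of patches contributed to the envelope over all $\sigma'$ is still $O(n+h^2)$, and the rest of your running-time analysis goes through unchanged.
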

\begin{proof}
We first discuss the diameter algorithm, whose correctness follows directly from Lemma~\ref{lem:combined_cell}.

After the execution of the procedure of Lemma~\ref{lem:dist_vv} as a preprocessing,
our algorithm considers three cases for two cells $\sigma, \sigma'\in\TDF$:
(i) both are oceanic, (ii) both are coastal, or (iii) $\sigma$ is coastal and $\sigma'$ is oceanic.
In either case, we apply Lemma~\ref{lem:dist_ex}.

For case (i), we have $O(h^2)$ oceanic cells and the total complexity is
$\sum_{\sigma\in\TDM} n_\sigma = O(n+h^2)$.
Thus, the total time for case (i) is bounded by
\[ \sum_{\sigma\in \TDM} \sum_{\sigma'\in\TDM} O(n_\sigma n_{\sigma'})= \sum_{\sigma\in\TDM} O(n_\sigma (n+h^2)) = O((n+h^2)^2) = O(n^2+h^4).\]

For case (ii), we have $O(n)$ coastal cells in $\TDM$ and their total complexity
is $O(n)$ since they are all trapezoidal.
Thus, the total time for case (ii) is bounded by $O(n^2)$.

For case (iii), we fix a coastal cell $\sigma\in\TDF$ and iterate over
all oceanic cells $\sigma'\in\TDM$, after an $O(n)$-time preprocessing,
as done in the proof of Lemma~\ref{lem:dist_ex}.
For each $\sigma$, we take $O(n+h^2)$ time since $\sum_{\sigma\in\TDM} n_\sigma = O(n+h^2)$. Thus, the total time for case (iii) is bounded by $O(n^2 + nh^2) = O((n+h^2)^2) = O(n^2+h^4)$.

Next, we discuss our algorithm for computing a geodesic center of $\calP$.
We consider $O(n^2)$ cells $\sigma\in\TD$ and compute
all the $\sigma$-constrained centers.
As a preprocessing, we spend $O(n^4)$ time to compute
the geodesic distances $d(v,v')$ for all pairs of vertices of $\TD$
by Lemma~\ref{lem:findcells}.
Fix a cell $\sigma \in \TD$.
For all $\sigma' \in \TDF$, we compute the geodesic distance function
$d$ restricted to $\sigma\times \sigma'$ by applying Lemma~\ref{lem:dist_ex}.
As in Section~\ref{sec:td},
compute the graph of $\dmax_{\sigma'}(q) = \max_{p\in \sigma'} d(p, q)$
by projecting the graph of $d$ over $\sigma\times\sigma'$,
and take the upper envelope of the graphs of $\dmax_{\sigma'}$ for all $\sigma'\in\TDF$.
By Lemma~\ref{lem:dist_ex},
we have an analogue of Lemma~\ref{lem:R} and thus
a $\sigma$-constrained center can be computed in $O(m^2\alpha(m))$ time,
where $m$ denotes the total complexity of all $\dmax_{\sigma'}$.
Lemma~\ref{lem:dist_ex} implies that $m = O(n+h^2)$.

For the time complexity,
note that $\sum_{\sigma \in \TDM}n_\sigma = O(n+h^2)$
and $\sum_{\sigma \in \TDF \setminus \TDM} n_\sigma = O(n)$.
Since any cell in $\TD$ is either a triangle or a trapezoid,
its complexity is $O(1)$.
Thus, for each $\sigma\in \TD$, by Lemma~\ref{lem:dist_ex},
computing a $\sigma$-constrained center takes $O((n+h^2)^2 \alpha(n))$ time,
after an $O(n^4)$-time preprocessing (Lemma~\ref{lem:findcells}).
Iterating over all $\sigma\in\TD$ takes $O(n^2(n+h^2)^2\alpha(n)) = O((n^4 + n^2h^4)\alpha(n))$ time.
\end{proof}

\section{Conclusions}
\label{sec:conclusion}
We gave efficient algorithms for computing the $L_1$ geodesic diameter and center of a polygonal domain. In particular, we exploited the extended corridor structure to make the running times depend on the number of holes in the domain (which may be much smaller than the number of vertices). It would be interesting to find further improvements to the algorithms in hopes of reducing the
worst-case running times; it would also be interesting to prove non-trivial lower bounds on the time complexities of the problems.

\subparagraph*{Acknowledgements.}
Work by S.W. Bae was supported by Basic Science Research Program through the National Research Foundation of Korea (NRF) funded by the Ministry of Science, ICT \& Future Planning (2013R1A1A1A05006927), and by the Ministry of Education (2015R1D1A1A01057220). 
M. Korman is partially supported by JSPS/MEXT Grant-in-Aid for
Scientific Research Grant Numbers 12H00855 and 15H02665.
J.~Mitchell acknowledges support from the US-Israel Binational Science Foundation (grant 2010074) and the National Science Foundation (CCF-1018388, CCF-1526406).
Y. Okamoto is partially supported by JST, CREST, Foundation of Innovative Algorithms for Big Data and JSPS/MEXT Grant-in-Aid for Scientific Research Grant Numbers JP24106005, JP24700008, JP24220003, and JP15K00009. V.~Polishchuk is supported in part by Grant 2014-03476 from the Sweden's innovation agency VINNOVA and the project UTM-OK from the Swedish Transport Administration Trafikverket. H.~Wang was supported in part by the National Science Foundation (CCF-1317143).




\bibliographystyle{plain}
\bibliography{refLib}

%



\end{document}